\newenvironment{theorem-repeat}[1]{
\setcounter{theorem}{\ref{#1}}
\addtocounter{theorem}{-1}
\begin{theorem}}
{\end{theorem}}
\newenvironment{lemma-repeat}[1]{
\setcounter{lemma}{\ref{#1}}
\addtocounter{lemma}{-1}
\begin{lemma}}
{\end{lemma}}
\newcounter{linecounter}
\newcommand{\linenumbering}{(\arabic{linecounter})}
\renewcommand{\line}[1]{\refstepcounter{linecounter}
\label{#1}
\linenumbering}
\newcommand{\resetline}{\setcounter{linecounter}{0}}
\newcommand{\remove}[1]{}
\begin{document}

%\frontmatter

\title{Robot Networks with Homonyms: The Case of Patterns Formation}

\author {
Zohir Bouzid
  \and 
Anissa Lamani
}

\institute{University Pierre et Marie Curie - Paris 6, LIP6-CNRS
  7606, France. 
  \email{zohir.bouzid@lip6.fr}
  \and
  University of Picardie Jules Verne, Amiens, France.
  \email{anissa.lamani@lip6.Fr}
}

\maketitle

\begin{abstract}
In this paper, we consider the problem of formation of a series of geometric patterns \cite{DFSY10} by a network of oblivious mobile robots that communicate only through vision. 
So far, the problem has been studied in models where robots are either assumed to have distinct identifiers or to be completely anonymous.
To generalize these results and to better understand how anonymity affects the computational power of robots, we study the problem in a new model, introduced recently in \cite{DFGKR11}, in which n robots may share up to $1 \leq h \leq n$ different identifiers.
We present necessary and sufficient conditions, relating symmetricity and homonymy, that makes the problem solvable.
We also show that in the case where $h=n$, making the identifiers of robots invisible does not limit their computational power. 
This contradicts a result of \cite{DFSY10}.
To present our algorithms, we use a function that computes the Weber point for many regular and symmetric configurations. This function is interesting in its own right, since the problem of finding Weber points has been solved up to now for only few other patterns.
\end{abstract}

\section{Introduction}

Robot networks \cite{SY99} is an area of distributed computing in which the object of the study
is the positional (or spacial) communication paradigm \cite{agmon2004ftg}:
robots are devoid of any means of \emph{direct} communication; 
instead, they communicate \emph{indirectly} through their movements and the 
observation of the current positions of their peers.
In most of the studies, robots are oblivious, \emph{i.e.}
without any memory about their past observations, computations 
and movements.
Hence, as with communication, memory is \emph{indirect} in some sense, it
is collective and spacial.
Since any algorithm must use a kind of memory, resolving problems in the context of 
robot networks is the art of making them remember without memory \cite{FIPS10}.
But this indirect memory has its limits, and one of the main goals of research in this
field is to precisely characterize the limiting power of obliviousness.

The problem of \emph{formation of series of patterns} (or patterns formation) is perhaps the best abstraction that captures the need of robots 
to have a form of memory even when the model does not provide it directly. 
In this problem, introduced by Das et al. in \cite{DFSY10}, 
robots are required to form periodically a sequence of geometric patterns $S=\langle P_1, P_2, \ldots, P_m \rangle$. 
At each instant,
robots should be able to know, just by observing the environment, which patterns they are about to form.
In \cite{DFSY10}, the authors study the problem in both anonymous and eponymous systems.
In particular, they show that the only formable series when robots are anonymous are those in which all the patterns of the series 
have the same
symmetricity.
In contrast, if robots are endowed with identifiers, nearly all possible series are formable.

The gap between the two worlds, one in which almost nothing is possible and another one where everything is possible, illustrates
how much anonymity can be a limiting factor when we are interested to computability.
This brings us to ask the following question: 
is there a possible model where robots are neither eponymous nor
completely anonymous ? and if so, what are the possible series we are able to form ?
In a recent paper \cite{DFGKR11}, Delporte et al.
introduce a new model of ``partial anonymity'' which they call \emph{homonymy} and they apply it to study 
Byzantine agreement.
In this model, the number of distinct identifiers in the system is given by a parameter $h$ 
which may take any value between $1$ and $n$.
In the current paper, we inject this notion of homonymy to the Suzuki-Yamashita model \cite{SY99,DFSY10} 
which creates a new model that we call \emph{robot networks with homonyms}.

Studying the problem of patterns formation in this context allows us to get a better insight
on how the combined effect of anonymity and obliviousness affects the computational power of robots.
We consider series of patterns where all robots are located in \emph{distinct} positions and we assume
that identifiers are \emph{invisible}.
Our main result is to prove that for a series $S=\langle P_1, P_2, \ldots, P_t \rangle$ to be formable,
it is necessary and sufficient that the number
of labels $h$ to be strictly greater than $\frac{n}{\textsc{sncd}(sym(P_i), sym(P_{i+1}))}$ where 
$P_i, P_{i+1}$ are any two successive patterns of $S$, $sym(P)$ denotes the symmetricity of $P$ and
$\textsc{sncd}(x, y)$ is equal to the smallest divisor of $x$ that does not divide $y$ if any, $n+1$ otherwise.

To present our algorithms, we use a function that computes the Weber point \cite{W37} for many regular configurations, 
\emph{i.e.} all those in which 
there is a sort of rotational symmetry around a point. 
Given a point set $P$, the Weber point $c$ 
minimizes $\sum_{r \in P} distance(x, r)$ over all points $x$ in the plane.
Our result may be interesting in its own right, since the problem of finding Weber points has been solved up to now for only few other patterns 
(e.g. regular polygon \cite{ACP05}, a line \cite{CT90}). 

Finally, we consider the case where $(n=h=3)$ and where multiplicity points (in which many robots are located) are allowed.
In this setting, we prove that robots are able to form any series of patterns, contradicting a result of \cite{DFSY10}.
This has an interesting consequence: 
it means that making the identifiers of robots invisible does not
limit their computational power in the considered model, \emph{i.e.} they can form the same series of patterns as robots with visible identifiers.

\paragraph*{Roadmap} The paper is made up of six sections. In Section \ref{hom:sec:model} we describe the computation model
we consider in this paper.
Section \ref{hom:sec:symmetricity} defines the important notions of symmetricity, regularity and their relations to Weber points.
In Section \label{hom:sec:regular} our algorithms for the detection of Weber points are presented.
Then, we present in Sections \ref{hom:sec:formationImpossibility} and \ref{hom:sec:formationPossibility} the necessary 
and sufficient 
conditions that makes series of geometric patterns formable in our model.
In Section \label{hom:sec:threeRobots}, 
we address the special case where robots have distinct identifiers and where multiplicity points are allowed.
Finally, we conclude this paper in Section \ref{hom:sec:conclusion}

\section{Model}
\label{hom:sec:model}
Our model is based on a variation of the ATOM model \cite{SY99}
used in \cite{DFSY10},  to which we introduce the notion of homonymy \cite{DFGKR11}.
The system is made up of $n$ mobile robots $r_1, \ldots, r_n$ that communicate only through vision.
That is, robots are devoid of any mean of direct communication, the only way of them to communicate is by observing the positions of their peers (a ``read'')
and by moving in the plane (a ``write'').
To each robot is assigned an identifier (or label) taken from a set of $h$ distinct identifiers $\{1, \ldots, h\}$.
The parameter $1\leq h \leq n$ is called the \emph{homonymy} of the system.
The identifier of robot $r$ is denoted by $label(r)$. 
When two robots have the same label we say that they are \emph{homonymous}.
Robots are \emph{oblivious} in the sense that they do not have any memory of their past observations, computations and movements.
Hence, their actions are based entirely on their currently observed configuration and their label.
Each robot is viewed as a \emph{point} in a plane, thus multiple robots may lie in the same location forming a \emph{multiplicity} point.
We consider multiplicity points only in Section \ref{hom:sec:threeRobots}. 
There, 
we assume that robots are able to know exactly how much robots are located in each point.
This capability is known as the global strong \emph{multiplicity detection}. 
We assume also that robots do not obstruct the vision of each others.   
Robots are \emph{disoriented}, that is, each robot has its own local coordinate system with its own origin, axis, and unit of length which may change
at each new activation.
However, as in \cite{DFSY10},
we assume that robots share the same notion of clockwise direction, we say that they have the same \emph{chirality}. 

%Note also that its coordinate system may change at the beginning of each cycle, however, it remains invariable during the cycle. 

The execution unfolds in \emph{atomic} cycles of three phases Look, Compute and Move. 
During the Look phase, an activated robot take a snapshot of the environment using its visual sensors.
Then it calculates a destination in the Compute phase. The chosen destination is based solely on the label of the robot and the snapshot 
it obtained in the preceding phase.
Finally, 
In the Move phase, the robot \emph{jumps} to its destination.
A subset of robots are chosen for execution (activated) at each cycle by a fictional external entity called a \textit{scheduler}.
We require the scheduler to be fair \textit{i.e.} each robot is activated infinitely often.
Robots that are activated at the same cycle execute their actions synchronously and atomically.

\paragraph*{Notations}
Given to distinct points in the plane $x$ and $y$, 
$|x,y|$ denotes the Euclidean distance between $x$ and $y$,
$[x,y]$ the line segment between them 
and $(x, y)$ is the line that passes through both of them.
Given a third point $c$, $\sphericalangle(x,c,y, \rightturn)$ denotes the clockwise angle at $c$ from $x$ to $y$.
$\sphericalangle(x,c,y, \leftturn)$ is defined similarly.
When the information about orientation can be understood from context, we may omit the parameters $\rightturn$ and $\leftturn$.
Note that $\sphericalangle(x,c,y, \rightturn)$ denotes both the angle and its size, the difference between 
the two notions should be clear from context. 
Given a configuration $P$, its
smallest enclosing circle,
denoted $\textsc{SEC}(P)$,
is the circle of minimum diameter 
such that every point in $P$ is either on or in the interior of this circle.
Given a circle $C$, $rad(C)$ and $diam(C)$ denote 
its radius and diameter respectively. A point $p$ belong to $C$, written $p\in C$ if it on or inside $C$.
If $S$ is a set, we denote its cardinality by $|S|$.

\paragraph*{Problem Definition} \cite{DFSY10}
A configuration $C$ of $n$ robots is a multiset of $n$ elements $\{(p_1, l_1), \ldots (p_n, l_n)\}$ where $p_i$ is the position of robot $r_i$ 
and $l_i$ is its label. The set of positions $L(C)$ is the set of points occupied by at least one robot in $C$.
A pattern $P$ is represented by a set of $n$ distinct points $\{p_1, \ldots, p_n\}$. Two patterns are said isomorphic if they 
can be obtained from each others by way of translation, rotation and uniform scaling. 
$size(P)$ is the cardinality of $P$.
We say that a system of robots has formed the pattern $P$ 
is the set of of points of the current configuration $L(C)$ is isomorphic to $P$.
\textbf{The Problem of Formation of Series of Patterns} is defined as follows. 
As input, robots are given a periodic series of patterns
$\langle P_1, P_2, \ldots ,P_{m} \rangle ^{\infty}$. 
It is required that $\forall \tau: \forall P_i \in S: \exists \tau_i:$ 
robots form $P_i$ at $\tau_i$.
\section{Symmetricity, Regularity and Weber Points}
\label{hom:sec:symmetricity}
In this section we formally define two metrics that quantify how much configurations of \emph{distinct} 
robots may be symmetric: the strongest one, called symmetricity and 
the weaker one, called regularity. 
We then explain how these two notions relate to Weber points.
We start by defining a polar coordinate system which we use to state our definitions and algorithms.

\subsection{Polar Coordinate System}
\label{hom:subsec:polarCoordinate}
In the remaining of the paper,
we express the locations of robots using a local \emph{polar} coordinate system based on the SEC of the current configuration 
and the position of the local robot\cite{K05}.
The \emph{center} $c$ of the coordinate system is common to all robots and it coincides with the center of the SEC.
In contrast, the \emph{unit of measure} is local to each robot $r$ and 
its definition depends on whether the robot is located on $c$ or not.
In the first case, the point $(1,0)$ of the local coordinate system of $r$
is any point in the plane that is not occupied by a robot. In the second case,
the point $(1,0)$ is the current location of $r$.
The positive common \emph{clockwise orientation} is provided by the underlying model.
Note that $(d, \theta)$ denotes the point located at distance $d$ from $c$ and angle $\theta$.
%If robots share a common chirality, the \emph{common} clockwise direction is given by the model, otherwise it is defined locally 
%for each robot.

%This defined polar coordinate system allows us to make our definitions and algorithms independent from the coordinate
%system given by the adversary apart when it comes to the clockwise orientation.
\subsection{Symmetricity}

\begin{definition}[View]
%\paragraph*{\textbf{Views}}
The \emph{view} of robot $r$, denoted $\mathcal{V}(r)$, is the current configuration expressed 
using the local polar coordinate system of $r$.
\end{definition}

Notice that since robots are located in distinct positions,
our definition of the view implies that if some robot is located at $c$,
then its view is unique, \emph{i.e.} it cannot be equal to the view of another robot.

Now we define the following equivalence relation on robots based on their views.

\begin{definition}[$\backsim$]
Given a configuration $P$, given any two robots $r, r^\prime \in P$:
\begin{center}
$(r \backsim r^\prime) \Leftrightarrow (\mathcal{V}(r)=\mathcal{V}(r^\prime))$
\end{center}
The equivalence class of $r$ is denoted by $[r]$.
\end{definition}

\begin{property}\cite{DFSY10}
\label{hom:prop:convexPolygon}
Let $r\in P$.
If $|[r]|>2$, $[r]$ is a set of robots located at the vertices of
a convex regular polygon with $|[r]|$ sides whose center $c$ is the center of $SEC(P)$.
\end{property}

\begin{definition}[Symmetricity]
\label{hom:def:symmetricity}
The \emph{symmetricity} of a configuration $P$, denoted $sym(P)$, is the cardinality of the smallest 
equivalence class defined by $\backsim$ on $P$. 
That is, $sym(P)=min\{|[r]|~|~ r \in P\}$.
If $sym(P)=m$, we say that $P$ is $m$-symmetric.
\end{definition}

Note that despite the fact that our definition of symmetricity is different from the one used in \cite{DFSY10},
it is still equivalent to it when configurations does not contain multiplicity points, the case in which we are interested in the current paper.

The next lemma follows from our definition of symmetricity and the 
fact that the view of a robot located at $c$ is unique.

\begin{lemma}
\label{hom:lem:notLocatedInCenter}
If a configuration $P$ of distinct robots is $m$-symmetric with $m>1$, then
no robot of $P$ is located in $c$.
\end{lemma}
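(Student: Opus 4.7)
The plan is a short proof by contradiction, leveraging the uniqueness remark that was made right after the definition of View.

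Specifically, I would assume toward a contradiction that $P$ is $m$-symmetric with $m>1$ and that some robot $r^\ast \in P$ is located at the center $c$ of $SEC(P)$. Because robots occupy distinct positions, the remark following the View definition applies: a robot sitting at $c$ has a view distinct from every other robot's view. Hence no robot $r' \neq r^\ast$ can satisfy $\mathcal{V}(r')=\mathcal{V}(r^\ast)$, so the equivalence class $[r^\ast]$ under $\backsim$ is exactly $\{r^\ast\}$ and has cardinality $1$.

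Then from Definition~\ref{hom:def:symmetricity}, $sym(P) = \min\{|[r]| \mid r \in P\} \leq |[r^\ast]| = 1$, which contradicts the hypothesis $sym(P) = m > 1$. Therefore no robot of $P$ can be located at $c$.

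There is really no obstacle here: the proof is a one-line contradiction and the whole content of the argument is already packaged in the observation that a lone robot at the center of $SEC$ has a unique view (because distances from $c$ together with angular data distinguish it from any robot strictly inside the circle). So the only point to be careful about is invoking the hypothesis that robots in $P$ are \emph{distinct}, which is what makes the uniqueness-of-view remark applicable; without it, several robots could sit on $c$ and share the same view.
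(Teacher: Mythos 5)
Your proof is correct and follows exactly the route the paper intends: the paper gives no separate proof but states just before the lemma that it "follows from our definition of symmetricity and the fact that the view of a robot located at $c$ is unique," which is precisely the singleton-equivalence-class contradiction you spell out. No gaps; you have simply made the paper's one-line justification explicit.
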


In the following lemma we prove that if no robot is located at the center of $SEC(P)$, then all the equivalence classes 
have the same cardinality.

\begin{lemma}
\label{hom:lem:sizeEquivalenceClasses}
Let $P$ be a configuration of $n$ distinct robots, and let $c$ be the center $SEC(P)$.
If no robot in $P$ is located at $c$, it holds that:
$$
\forall r, r^\prime \in P: |[r]| = |[r^\prime]|=m
$$
\end{lemma}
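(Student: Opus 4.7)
The plan is to introduce the group $G$ of rotations around $c$ that map the configuration $P$ to itself (as a labelled multiset), and to show that the equivalence classes of $\backsim$ are precisely the orbits of $G$ acting on $P$. Since any finite subgroup of the rotation group around $c$ is cyclic, $G$ is cyclic of some order $m$. Because $c$ is the unique fixed point of any nontrivial rotation around $c$, and no robot of $P$ lies at $c$ by assumption, the action of $G$ on $P$ is free; orbit--stabiliser then yields that every orbit has size $m$, which will give the lemma.

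The core step is the characterisation $r \backsim r' \iff \exists\, \rho \in G$ with $\rho(r)=r'$. The direction $(\Leftarrow)$ is straightforward: the local polar frame of $r'$ is obtained from that of $r$ by a rotation through the angle of $\rho$ (the common centre $c$ and shared clockwise orientation are built into the frame by definition), and since $\rho$ permutes $P$ while preserving distances to $c$ and labels, the two views coincide as multisets of labelled local polar coordinates.

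The main obstacle is the direction $(\Rightarrow)$. Given $\mathcal{V}(r) = \mathcal{V}(r')$, equality of the multisets yields a label-preserving bijection $\phi : P \to P$ with $\phi(r)=r'$, such that every $s \in P$ appears in $r$'s view at the same local polar coordinates as $\phi(s)$ appears in $r'$'s view. Translating this identity into global polar coordinates $(d_x, \theta_x)$ of each robot $x$, one sees that $\phi$ must be the restriction to $P$ of the spiral similarity centred at $c$ with rotation angle $\theta_{r'}-\theta_r$ and scaling factor $d_{r'}/d_r$. The delicate point is to rule out $d_r \neq d_{r'}$, and I plan to do this by finiteness: $\phi$ permutes the finite set $P$, hence some iterate $\phi^k$ restricts to the identity on $P$; but globally $\phi^k$ is a spiral similarity with scaling factor $(d_{r'}/d_r)^k$, and evaluating it on any robot of $P$, which lies at nonzero distance from $c$, forces $(d_{r'}/d_r)^k = 1$ and therefore $d_{r'}=d_r$. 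Consequently $\phi$ is a pure rotation around $c$ preserving $P$ together with its labels, i.e.\ an element of $G$ sending $r$ to $r'$.

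With the characterisation in hand, the conclusion is immediate: for any $r \in P$ the stabiliser of $r$ in $G$ is trivial, since a nontrivial rotation around $c$ cannot fix a point distinct from $c$; so $|[r]| = |G \cdot r| = |G| = m$, a value independent of $r$.
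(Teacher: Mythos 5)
Your proof is correct, and it takes a more structured route than the paper's. The paper argues directly: it fixes a class $[r]=\{r_1,\dots,r_x\}$, asserts that $\mathcal{V}(r_i)=\mathcal{V}(r_{i+1})$ forces a rotational symmetry of $P$ through the angle $\sphericalangle(r_i,c,r_{i+1})$, and propagates an arbitrary robot $r'$ around the sectors to produce $x$ robots equivalent to it, concluding $|[r']|\ge|[r]|$ for every pair and hence equality. You instead package the same underlying idea as a group action: the rotations of $P$ about $c$ form a finite (cyclic) group $G$, the $\backsim$-classes are exactly the $G$-orbits, and freeness of the action (no robot at $c$) plus orbit--stabiliser gives the uniform class size. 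The genuine added value of your version is the $(\Rightarrow)$ direction: equality of views a priori only yields a spiral similarity about $c$ with ratio $d_{r'}/d_r$, and you rule out a nontrivial scaling by iterating the permutation of the finite set $P$ (one could equally note that the maximum distance to $c$ must be preserved). The paper silently assumes this step when it asserts ``rotational symmetry,'' so your argument fills a real gap. One small mismatch with the paper's conventions: since identifiers are invisible in this model, the view --- and hence your group $G$ --- should be defined on positions only, not as a label-preserving symmetry; this changes nothing in the argument, but you should drop the ``label-preserving'' qualifier to match the equivalence relation the lemma is actually about.
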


\begin{proof}
We assume that no robot is located in $c$ and we prove the following equivalent claim.
\begin{center}
$\forall r \in P: \forall r^\prime \in P: |[r^\prime]| \geq |[r]|$.
\end{center}

Fix $r\in P$ and let $x=|[r]|$.
If $x=1$, the claim holds trivially. Hence we assume in the following  that $|x| >1$.
Let $[r]=\{r_1, \ldots, r_x\}$.  
The indices define a polar ordering of the robots around $c$ 
and their addition is done $(mod~n)+1$.
It holds that $r^\prime \in \sphericalangle(r_i, c, r_{i+1})$ for some $i \in \{1, \ldots, x\}$.
But $\mathcal{V}(r_{i})=\mathcal{V}(r_{i+1})$ which means that there 
is a rotational symmetry of angle $\sphericalangle(r_i, c, r_{i+1})$ around $c$. 
Hence, there exists a robot $r^{\prime\prime}$ such that 
$r^{\prime\prime} \in \sphericalangle(r_{i+1}, c, r_{i+2})$ and $r^{\prime\prime} \backsim r^\prime$. 
By repeating this argument, we find $x$ robots that are equivalent to $r^\prime$ (including itself).
Hence $|[r^\prime]| \geq x=|[r]|$. This proves the lemma.
\end{proof}

\begin{lemma}
\label{hom:lem:setConcentricCircles}
Let $P$ be a $m$-symmetric configuration of $n$ distinct points with center (of symmetricity) $c$ and with $m>1$.
There exists a partition of $P$ into $x=n/m$ subsets $S_1, \ldots, S_x$ such that
each of them is a convex regular polygon of $m$ sides with center $c$.
\end{lemma}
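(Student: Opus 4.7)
The natural choice for the partition is the collection of equivalence classes of $\backsim$ on $P$. Since $m > 1$, Lemma \ref{hom:lem:notLocatedInCenter} guarantees that no robot lies at $c$, so Lemma \ref{hom:lem:sizeEquivalenceClasses} applies and every class has cardinality $m$. As the classes are disjoint and cover $P$, there are exactly $x = n/m$ of them, which I would label $S_1, \ldots, S_x$. Hence everything reduces to showing that each $S_i$ is the vertex set of a convex regular polygon with $m$ sides centered at $c$.

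For $m \geq 3$ this is exactly the conclusion of Property \ref{hom:prop:convexPolygon}, and no extra work is needed. The interesting case is $m = 2$, which is what I would devote the proof to. Fix $S_i = \{r, r'\}$ with $r \backsim r'$. The two local polar coordinate systems of $r$ and $r'$ share the same center $c$ (that of $\textsc{SEC}(P)$) and the same clockwise orientation; they can only differ in the unit of length. The identity $\mathcal{V}(r) = \mathcal{V}(r')$ therefore induces a self-bijection of $P$ that, in the global frame, is a rotation $\rho$ about $c$ composed with a uniform scaling of factor $|c r'|/|c r|$. Since this bijection must preserve $\textsc{SEC}(P)$, a circle centered at $c$, the scaling factor has to be $1$, so $|c r| = |c r'|$ and the transformation is the pure rotation $\rho$.

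It only remains to identify $\rho$. As a symmetry of $P$, $\rho$ permutes the class $[r] = \{r, r'\}$; moreover $\rho(r) = r'$ by construction. If $\rho(r') = r'$, then $\rho$ fixes the non-center point $r'$ and must be the identity, contradicting $\rho(r) = r' \neq r$. Hence $\rho(r') = r$, so $\rho^2$ fixes $r$ and is the identity; $\rho$ is therefore the half-turn about $c$, which makes $r$ and $r'$ antipodal on a circle centered at $c$, i.e., a (degenerate) regular $2$-gon. The main obstacle in the whole argument is precisely this $m = 2$ case: Property \ref{hom:prop:convexPolygon} says nothing about it, so one has to extract the governing symmetry from the equality of views and argue its rigidity via invariance of the SEC; the rest of the proof is bookkeeping.
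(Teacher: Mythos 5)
Your proposal is correct, and its skeleton is exactly the paper's: partition $P$ into the equivalence classes of $\backsim$, use Lemma \ref{hom:lem:notLocatedInCenter} and Lemma \ref{hom:lem:sizeEquivalenceClasses} to get $x=n/m$ classes of cardinality $m$ each, and invoke Property \ref{hom:prop:convexPolygon} to identify each class with a regular $m$-gon centered at $c$. Where you diverge is that you notice Property \ref{hom:prop:convexPolygon} is stated only for $|[r]|>2$ and therefore supply a separate argument for $m=2$; the paper's proof applies the property unconditionally and silently skips this case. Your $m=2$ argument is sound: the equality of views induces a spiral similarity about $c$ mapping $P$ to itself, invariance of $\textsc{SEC}(P)$ forces the ratio to be $1$, and the resulting rotation must swap $r$ and $r'$ (it cannot fix a non-center point without being the identity), hence is the half-turn and the two points are antipodal about $c$. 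One small imprecision: the two local polar frames differ not only in the unit of length but also in the reference direction (each robot places $(1,0)$ at its own position); your transformation ``rotation composed with scaling'' correctly accounts for both, so nothing breaks, but the sentence claiming they ``can only differ in the unit of length'' should be amended. Net effect: your proof buys a complete treatment of the degenerate $2$-gon case at the cost of a short rigidity argument, whereas the paper's version is shorter but, read literally, has a gap at $m=2$.
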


\begin{proof}
$(m>1)$ implies that no point of $P$ is located in $c$ (by Lemma \ref{hom:lem:notLocatedInCenter}).
Hence, we can apply Lemma \ref{hom:lem:sizeEquivalenceClasses} and we deduce the existence of 
$x=n/m$ equivalence classes in $P\/\backsim$ with equal cardinality ($m$).
Denote them by $S_1, \ldots, S_x$.
According to Property \ref{hom:prop:convexPolygon}, each of these subsets consists in 
set of robots located at the vertices of
a convex regular polygon of $m$ sides with center $c$.
\end{proof}

The following lemma will be used by our algorithm for series formation in Section \ref{hom:sec:formationPossibility}
to break the symmetricity of configurations. 

\begin{lemma}
\label{hom:lem:2robotsEquivDistinctLabels}
Let $P$ be a $m$-symmetric configuration of $n$ distinct points with $m>1$.
If $h > n/m$, there exists two robots $r$ and $r^\prime$ such that
$\mathcal{V}(r) = \mathcal{V}(r^\prime)$ and $label(r) \neq label(r^\prime)$.
\end{lemma}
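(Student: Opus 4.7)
The plan is to argue by contradiction using a simple counting argument on equivalence classes of $\backsim$.

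First, I would invoke the structural consequences of $m$-symmetricity already established. Since $m>1$, Lemma \ref{hom:lem:notLocatedInCenter} guarantees that no robot of $P$ lies at $c$, and therefore Lemma \ref{hom:lem:sizeEquivalenceClasses} applies: every equivalence class of $\backsim$ on $P$ has exactly $m$ elements. Consequently, the quotient $P/\!\backsim$ contains exactly $n/m$ classes.

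Next, I would suppose for contradiction that the conclusion fails, \emph{i.e.} for all pairs $r, r^\prime \in P$ with $\mathcal{V}(r) = \mathcal{V}(r^\prime)$ we have $label(r) = label(r^\prime)$. Under this assumption, every equivalence class $[r]$ is monochromatic: all its $m$ robots carry a common label. Define a map sending each class $[r]$ to that common label. Since each robot's label lies in the image of this map, the total number of distinct labels actually appearing among the $n$ robots of $P$ is at most the number of equivalence classes, namely $n/m$.

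Finally, I would close by contrasting this upper bound with the hypothesis. By the model, the $n$ robots collectively carry exactly $h$ distinct identifiers from $\{1, \ldots, h\}$, so the number of distinct labels present is $h$. We therefore obtain $h \leq n/m$, contradicting the assumption $h > n/m$. The lemma follows. The only subtle point, and hence the main thing I would verify carefully, is that the parameter $h$ really denotes the number of distinct identifiers effectively present in the system (not merely an upper bound on an a priori label space); this reading is fixed by Section~\ref{hom:sec:model}, which ensures that the counting argument above is sound.
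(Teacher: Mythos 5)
Your proposal is correct and follows essentially the same counting argument as the paper: assume all classes are monochromatic, count the $n/m$ equivalence classes given by Lemma \ref{hom:lem:sizeEquivalenceClasses}, and contradict $h > n/m$. Your version is in fact slightly more careful than the paper's (which asserts $h = n/m$ where only $h \leq n/m$ follows), and you rightly flag the model convention that all $h$ identifiers actually occur in the system, which both proofs need.
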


\begin{proof}
Assume not towards contradiction. 
This implies that all robots that have the same view in $P$ have the same label.
Note that since $P$ is $m$-symmetric, we have exactly 
$n/m$ equivalence classes (Lemma \ref{hom:lem:sizeEquivalenceClasses}) of equal cardinality. 
As all robots belonging to the same equivalence class are assumed to have the same label, 
we have $h=n/m$.
But we assumed $h>n/m$, contradiction.
\end{proof}

%%%%REGULARITY%%%%%%%%%%%%

%%%%%%%%%%

\subsection{Regularity}
In this section we formally define a weaker form of symmetry that we call regularity.
In the next section we state its precise relation with symmetricity.
We start by giving some useful definitions:

\begin{definition}[Successor]
\label{hom:def:successor}
Given a set $P$ of distinct points in the plane, and $c \not\in P$ a fixed point. 
Given some polar ordering of points in $P$ around $c$.
Let us denote by $S(r, c, \rightturn)$ the 
the clockwise successor of $r$ according
to the clockwise polar ordering of points $P$ around $c$.
The anticlockwise successor of $r$, denoted
$S(r, c, \leftturn)$, is defined similarly.

Formally \cite{K05}, $r^\prime=S(r, c, \rightturn)$ is a point of $P$ distinct from $r$ such that:
\begin{itemize}
\item If $r$ is not the only point of $P$ that lies in $[c,r]$, we take $r^\prime$
to be the point in $P \cup [c,r]$ that minimizes $|r, r^\prime|$.

\item Otherwise, we take $r^\prime$ such that 
no other point of $P$ is inside
$\sphericalangle(r,c,r^\prime, \rightturn)$. If there are many such points, we 
choose the one that is further from $c$.
\end{itemize}

When the center $c$ and the clockwise orientation are clear from context or meaningless, we simply write
$S(r, c)$, $S(r, \leftturn)$ or $S(q)$ to refer to the successor of $r$.
\end{definition}

\begin{definition}[$k$-th Successor]
\label{hom:def:ksuccessor}
The $k$-th clockwise successor of $r$ around $c$, denoted $S^k(r, c, \rightturn)$ (or simply $S^k(q)$) , is defined recursively as follows:
\begin{itemize}
\item $S^0(r)=r$ and $S^1(r)=S(r)$.
\item If $k>1$, $S^k(r)=S(S^{k-1}(r))$.
\end{itemize}

\end{definition}

\begin{definition}[String of angles]
Let $P$ be a set of distinct points in the plane, and $c \not\in P$ a fixed point.
The (clockwise) string of angles of center $c$ in $r$, denoted by $SA(r, c, \rightturn)$
is the string $\alpha_1 \ldots \alpha_n$ such that
$\alpha_i=\sphericalangle(S^{i-1}(r), c, S^i(r), \rightturn)$. 
The anticlockwise string of angles, $SA(r, c, \leftturn)$, is defined in a symmetric way.
Again, when the information about the center of the string $c$ and/or its clockwise orientation is
clear from context, we simply write $SA(r, c)$, $SA(r, \leftturn)$ or $SA(r)$.
\end{definition}

\begin{definition}[Periodicity of a string]
A string $x$ is $k$-periodic if it can be written as $x=w^k$ where $1 \leq k\leq n/2 $.
The greatest $k$ for which $x$ is $k$-periodic is called \emph{the periodicity} of $x$ and is denoted
by $per(x)$.
\end{definition}

The following property states that the periodicity of the string of angles does not depend
on the process in which it is started nor on the clockwise or anticlockwise orientation.

\begin{lemma}
Let $P$ be a set of distinct points in the plane, and let $c \not\in P$ be a point in the plane.
The following property holds:

$$
\exists m: \forall r \in P: \forall d \in \{\leftturn, \rightturn\}: per(SA(r, c, d))=m
$$
\end{lemma}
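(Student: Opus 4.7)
The plan is to reduce everything to two structural observations about strings of $n$ angles: (i) changing the starting point amounts to a cyclic rotation of the string, and (ii) changing the orientation amounts to reversing the string. Once both are established, the claim follows from the elementary fact that cyclic rotation and reversal both preserve the periodicity of a finite string.

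First I would fix the orientation, say clockwise, and show that for any two robots $r, r' \in P$, the strings $SA(r, c, \rightturn)$ and $SA(r', c, \rightturn)$ are cyclic rotations of one another. Indeed, if $r' = S^j(r, c, \rightturn)$ for some $j$, then by Definition~\ref{hom:def:ksuccessor} the $i$-th clockwise successor of $r'$ is $S^{i+j}(r, c, \rightturn)$, so $SA(r', c, \rightturn) = \alpha_{j+1}\alpha_{j+2}\cdots\alpha_n\alpha_1\cdots\alpha_j$ where $SA(r, c, \rightturn) = \alpha_1\cdots\alpha_n$. I would then observe, as a general fact about strings, that if $x = w^k$ then every cyclic rotation of $x$ is of the form $v^k$ where $v$ is a cyclic rotation of $w$; since cyclic rotation is invertible, this gives a bijection between the values of $k$ for which $x$ is $k$-periodic and the values of $k$ for which any cyclic rotation of $x$ is $k$-periodic. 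Hence $per(SA(r, c, \rightturn)) = per(SA(r', c, \rightturn))$.

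Second, I would compare the two orientations at a fixed robot $r$. The anticlockwise successors of $r$ are the same points as the clockwise successors, only traversed in reverse order; more precisely $S^{i}(r, c, \leftturn) = S^{n-i}(r, c, \rightturn)$ for $0 \leq i \leq n$. Measuring the $i$-th anticlockwise angle $\beta_i = \sphericalangle(S^{i-1}(r, c, \leftturn), c, S^{i}(r, c, \leftturn), \leftturn)$ and rewriting via the clockwise successors yields $\beta_i = \alpha_{n-i+1}$. Consequently $SA(r, c, \leftturn)$ is the reverse of $SA(r, c, \rightturn)$. Since $x = w^k$ if and only if $\mathrm{reverse}(x) = \mathrm{reverse}(w)^k$, reversal also preserves the set of admissible periods, and in particular the maximum one.

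Combining the two observations, $per(SA(r, c, d))$ is invariant under the choice of starting robot $r$ and under the choice of direction $d$, which proves the lemma with the common value $m := per(SA(r, c, \rightturn))$ for any fixed $r$. The only delicate point, and the one I would be most careful with, is the indexing in step two: making sure that the identification $S^i(r, c, \leftturn) = S^{n-i}(r, c, \rightturn)$ is consistent with Definition~\ref{hom:def:successor} (which in the degenerate case of collinear points selects the successor via distance to $c$), and that the angle at $c$ from one point to another has the same magnitude whether measured clockwise or anticlockwise after the appropriate index shift. Everything else is a routine application of the two string-level facts above.
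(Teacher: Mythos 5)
Your proof is correct and follows exactly the route the paper takes: the paper's entire argument is the one-line remark that ``periodicity of a string is invariant under symmetry and rotation,'' and your two observations (change of starting point is a cyclic rotation, change of orientation is a reversal) are precisely the fleshed-out version of that remark, including the right caution about degenerate collinear cases where the pointwise identification $S^i(r,c,\leftturn)=S^{n-i}(r,c,\rightturn)$ must be relaxed to an equality of angle strings (the discrepancy occurs only within rays, where all angles are $0$). Nothing further is needed.
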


\begin{proof}
Follows from the fact that a periodicity of a string is invariant under symmetry and rotation.
\end{proof}

\begin{theorem}
\label{hom:thm:stringAngles}
Given is a set of points $P$, a center $c \not\in P$.
There exists an algorithm with running time $O(n ~logn)$ that computes $SA(c)$.
\end{theorem}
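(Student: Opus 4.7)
The plan is to reduce the problem to one global angular sort of $P$ around $c$, and then read off successive angles in a single linear scan. First I would convert each point $p \in P$ into polar coordinates $(|c,p|, \theta(p))$ with respect to $c$, where $\theta(p)$ is the clockwise angle from the common reference direction; this preprocessing costs $O(n)$ time. To avoid trigonometric evaluations, the angular comparison between two points $p$ and $q$ can be realized using only the signs of the cross and dot products of $\overrightarrow{cp}$ and $\overrightarrow{cq}$.

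Next, I would lexicographically sort the points of $P$ by the key $(\theta(p), -|c,p|)$ in $O(n \log n)$ time using any comparison-based sort. The key claim is that this sorted sequence coincides, up to a cyclic shift, with the chain $r, S(r,c,\rightturn), S^2(r,c,\rightturn), \ldots, S^{n-1}(r,c,\rightturn)$ obtained by starting at the first element of the sorted list. This is a direct verification from Definition \ref{hom:def:successor}: within a single ray from $c$ (i.e.\ among points sharing the same $\theta$), the first case of the definition sends $r$ to the closest remaining collinear point toward $c$, which the ordering by decreasing distance reproduces; between consecutive rays, the second case sends the innermost point of a ray to the farthest point of the next clockwise ray, which is exactly what the lexicographic ordering yields.

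Having the sequence $q_1, q_2, \ldots, q_n$ in successor order, the string $SA(q_1, c, \rightturn) = \alpha_1 \ldots \alpha_n$ can be produced in an additional $O(n)$ time by setting $\alpha_i = \sphericalangle(q_i, c, q_{i+1}, \rightturn)$ (indices modulo $n$), since each angle is computable in $O(1)$ from the stored polar coordinates. The total running time is dominated by the sort, giving $O(n \log n)$. The anticlockwise string $SA(r, c, \leftturn)$ is obtained by the symmetric procedure with the angular order reversed, at the same asymptotic cost.

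The main obstacle is the careful verification that the lexicographic order matches the successor chain at the handoff between two consecutive rays, namely that after traversing a ray from outside inward we jump back outward on the next ray; the rest of the argument is routine bookkeeping. Once this boundary behaviour is checked against the two cases of Definition \ref{hom:def:successor}, the correctness and the complexity bound follow immediately.
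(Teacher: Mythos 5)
Your proposal is correct and follows essentially the same route as the paper's proof: compute the angle of each point around $c$, sort in $O(n\log n)$, and obtain the string of angles as consecutive differences in a final linear scan. Your version is in fact slightly more careful than the paper's, since you explicitly break ties among points on the same ray by decreasing distance so that the sorted order agrees with Definition~\ref{hom:def:successor}, a degenerate case the paper's proof passes over silently (harmlessly, since tied points contribute zero angles to $SA(c)$ in any order).
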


\begin{proof}
Fix some $r \in P$. We show how to compute $SA(r, c, \rightturn)$. 
The other cases are similar. We use an array $T[n]$ of $n$ cells. As a first step, we compute for each $p_i \in P$ the 
angle $\sphericalangle(r, c, p_i, \rightturn)$ and put the result in $T[i]$. Note that this step takes $O(n)$ time.
Then we sort $T$ in increasing order ($O(n~logn)$ steps). 
$SA(r, c, \rightturn)$ is the string $\alpha_1 \ldots \alpha_n$ such that
$\forall i \in [1, n-1]: \alpha_i=T[i+1]-T[i]$ and $\alpha_n=2\Pi-T[n]$.
The whole algorithm runs in $O(n~logn)$ time.
\end{proof}

The lemma means that when it comes to periodicity, the important information about a string of angles is only
its center $c$. Hence, in the following we may refer to it by writing $SA(c)$.

\begin{definition}[Regularity]
Let $P$ be a set of $n$ distinct points in the plane.
$P$ is $m$-regular (or regular) if there exists a point $c  \not\in P $ such that
$per(SA(c))=m>1$.
In this case, the \emph{regularity} of $P$, denoted $reg(P)$, is equal to $per(SA(c))$.
Otherwise, it is equal to $1$.
The point $c$ is called the \emph{center} of regularity.
\end{definition}

\begin{theorem}
\label{hom:thm:algoDetectRegularity}
Given is a set of points $P$, a center $c \not\in P$.
There exists an algorithm with running time $O(n ~logn)$ that detects if $c$ is a center of regularity for $P$.
\end{theorem}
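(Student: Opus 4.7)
The plan is to reduce the regularity detection problem to the problem of detecting a non-trivial period in the string $SA(c)$, for which standard linear-time string processing techniques apply.

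\textbf{Step 1: Compute the string of angles.} First, invoke Theorem \ref{hom:thm:stringAngles} to compute $SA(c, \rightturn)$ (starting at any fixed reference point, say the first point returned by the enumeration) in $O(n \log n)$ time. The output is a sequence $\alpha_1 \alpha_2 \cdots \alpha_n$ of real-valued angles summing to $2\pi$. We treat each $\alpha_i$ as a ``character'' and compare characters by exact equality of their numerical values.

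\textbf{Step 2: Detect a non-trivial repetition.} By the definition of regularity, $c$ is a center of regularity iff $per(SA(c)) = m > 1$, i.e.\ iff $SA(c)$ can be written as $w^m$ for some $m > 1$ dividing $n$, equivalently iff $SA(c)$ has a proper divisor period in the classical string-period sense. Compute the failure function $f[1..n]$ of $SA(c)$ using the Knuth-Morris-Pratt preprocessing in $O(n)$ time, treating angles as the alphabet. Let $p = n - f[n]$ be the smallest classical period of $SA(c)$. A well-known fact about periods states that $SA(c) = w^k$ for some integer $k \geq 2$ if and only if $p$ divides $n$ and $p < n$; in that case the maximal such $k$ equals $n/p$. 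Therefore the algorithm simply tests whether $p < n$ and $p \mid n$, answering ``yes'' (with $reg(P) = n/p$) if so and ``no'' otherwise.

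\textbf{Complexity and correctness.} Step 1 dominates the cost at $O(n \log n)$, and Step 2 runs in $O(n)$, so the total running time is $O(n \log n)$ as required. Correctness follows because Theorem \ref{hom:thm:stringAngles} guarantees that $SA(c)$ is correctly computed (and the previous lemma guarantees that its periodicity is independent of the choice of reference point and orientation), and the KMP-based test for $SA(c)$ being a non-trivial power is standard.

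\textbf{Main obstacle.} The conceptual content is light: once $SA(c)$ has been obtained, detecting periodicity is a classical string problem. The only subtle point worth noting is that the alphabet consists of real numbers rather than discrete symbols, so one must be careful that angle equality is computed consistently (e.g.\ in an exact real-arithmetic or symbolic model, as is implicitly assumed throughout the paper); otherwise KMP goes through verbatim.
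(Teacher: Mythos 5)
Your proposal is correct and follows essentially the same route as the paper: compute $SA(c)$ via Theorem \ref{hom:thm:stringAngles} in $O(n\log n)$ time, then determine the periodicity of the resulting string in linear time using KMP and answer \textsc{Yes} iff it exceeds $1$. The only difference is that you spell out the standard failure-function/period argument that the paper simply delegates to the citation of KMP.
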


\begin{proof}
The algorithm computes $SA(c)$. 
As proved in Theorem \ref{hom:thm:stringAngles}, this can be done in  $O(n ~logn)$ time.
Then it computes the periodicity of the string in linear time \cite{KMP77}.
If the obtained periodicity is greater than 1, it outputs \textsc{Yes}, otherwise it outputs \textsc{No}.
\end{proof}

\subsection{Weber Points}
In this section we state some relations between symmetricity and regularity.
Then we prove that their centers are necessarily Weber points, hence unique when the configuration is not linear.
The following lemma is trivial, its proof is left to the reader. It states the fact that if a configuration 
is $m$-symmetric, it is necessarily $m$-regular.

\begin{lemma}
\label{hom:lem:regEqSym}
Let $P$ be any configuration with $sym(P) >1$. It holds that $reg(P)=sym(P)$ and the center of regularity of $P$ 
coincides with its center of symmetricity.
\end{lemma}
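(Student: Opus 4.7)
The plan is to translate between periodicity of the angular string around a center and rotational symmetry of $P$ about that center, then deduce $reg(P) = sym(P)$ from a two-sided bound, and finally use uniqueness of the center of rotational symmetry for a finite set to identify the centers.

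First I would set $c$ to be the center of $\textsc{SEC}(P)$, which, since $sym(P) = m > 1$, is the center of symmetricity of $P$ and contains no robot by Lemma \ref{hom:lem:notLocatedInCenter}. By Lemma \ref{hom:lem:setConcentricCircles}, $P$ decomposes into $n/m$ concentric regular $m$-gons centered at $c$; this decomposition exhibits a rotation $\rho$ of angle $2\pi/m$ about $c$ that permutes $P$. Because $\rho$ preserves distances from $c$ and the clockwise polar ordering, it commutes with the successor operator of Definition \ref{hom:def:successor}. Consequently, the cyclic string $SA(c)$ is invariant under a shift by $n/m$ positions; equivalently, its length-$n$ word can be written as $w^m$ with $|w| = n/m$, giving $per(SA(c)) \geq m$.

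For the matching upper bound, suppose $per(SA(c)) = k$. Then $SA(c)$ is invariant under cyclic shift by $n/k$ positions, so the sum of any $n/k$ consecutive angles around $c$ equals $2\pi/k$, and the rotation by $2\pi/k$ about $c$ maps each point of $P$ to another point of $P$ at the same distance from $c$. This rotation therefore permutes $P$, making $P$ a $k$-symmetric configuration with center $c$. Since $sym(P)$ is defined as the minimal equivalence class size, $k \leq m$, and combined with the previous bound we get $per(SA(c)) = m$, so $c$ is a center of regularity with $reg(P) = m = sym(P)$.

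Finally, I would argue that $c$ is the unique center of regularity. If some $c' \neq c$ satisfied $per(SA(c')) > 1$, then applying the same angular-to-rotational translation at $c'$ would yield a nontrivial rotation about $c'$ preserving $P$. But any isometry permuting the finite set $P$ must also preserve $\textsc{SEC}(P)$, since the smallest enclosing circle is uniquely determined by $P$, and therefore fixes its center $c$. A nontrivial rotation fixes only its own center, so $c' = c$, contradicting $c' \neq c$.

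The principal obstacle is the reverse direction of the angular-to-rotational correspondence: recovering a genuine rotation of $P$ from mere periodicity of $SA(c)$ relies on the fact that distances from $c$, as well as polar angles, are recoverable from the successor relation. Since $sym(P) > 1$ rules out the degenerate case of a robot at $c$, the successor relation and hence $SA(c)$ are well-defined, which makes this step go through cleanly.
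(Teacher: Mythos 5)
Your first paragraph is fine, and it covers the only part of this lemma the paper itself addresses: the paper offers no proof at all (it declares the statement ``trivial'' and leaves it to the reader, glossing it as the fact that an $m$-symmetric configuration is $m$-regular). The decomposition into $n/m$ concentric regular $m$-gons from Lemma~\ref{hom:lem:setConcentricCircles}, plus the observation that the rotation by $2\pi/m$ about $c$ commutes with the successor map, does show that $SA(c)$ is a power $w^{m}$ with $|w|=n/m$, so $c$ is a center of regularity.

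The genuine gap is in your second paragraph. From $per(SA(c))=k$ you infer that the rotation by $2\pi/k$ about $c$ ``maps each point of $P$ to another point of $P$ at the same distance from $c$.'' The string of angles records only the angular gaps between consecutive points in the polar order and retains no radial information, so its periodicity only says that this rotation permutes the set of \emph{rays} from $c$ through points of $P$, not that it permutes $P$: corresponding points on corresponding rays may lie at different radii. Concretely, take $n=6$ points at angles $j\pi/3$, $j=0,\dots,5$, with radii $1,2,3,1,2,3$: here $SA(c)$ is a constant string with $per(SA(c))=3$ under the paper's definition, yet the views split into three classes of size $2$, so $sym(P)=2$. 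Thus the implication ``$per(SA(c))=k \Rightarrow P$ is $k$-symmetric'' fails, your two-sided bound does not close, and your closing claim that distances are ``recoverable from the successor relation'' is precisely the step that breaks --- this is exactly the gap the paper itself acknowledges by distinguishing regularity (angles only) from symmetricity (full views) and by needing the radial movements of Lemma~\ref{hom:lem:symEqReg} to upgrade one to the other. The same defect undermines your uniqueness argument: periodicity of $SA(c')$ does not yield an isometry permuting $P$, so you cannot appeal to preservation of $\textsc{SEC}(P)$; the paper instead derives uniqueness of the center via Weber points (Lemmas~\ref{hom:lem:weberSym} and~\ref{hom:lem:weberReg}). (The example above in fact shows the literal equality $reg(P)=sym(P)$ is itself suspect under the paper's definitions; the safe direction, which your first paragraph establishes, is that the center of symmetricity is a center of regularity whose angle string is an $m$-th power.)
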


The next lemma shows in what way the regularity of a configuration can be strengthened to become 
a symmetricity.

\begin{lemma}
\label{hom:lem:symEqReg}
Let $P$ be any configuration with $reg(P)=m>1$, and let $c$ be its center of regularity.
There exists a configuration $P^\prime$ that can be obtained from $P$ by
making robots move along their radius such that $sym(P^\prime)=m$.
Moreover, $c$ is the center of symmetricity of $P^\prime$.
\end{lemma}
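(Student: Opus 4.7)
The plan is to construct $P'$ from $P$ by adjusting only the radial distances of the points around $c$, keeping all their polar angles fixed. Since such movements leave every angle $\sphericalangle(p_i,c,p_j,\rightturn)$ unchanged, the string of angles $SA(c)$ will be identical for $P$ and $P'$; in particular $per(SA(c))$ remains $m$ in $P'$, so $reg(P')=m$ with the same center $c$.

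To describe the construction, let $t=n/m$ and enumerate the points of $P$ in clockwise polar order around $c$ as $p_0,\ldots,p_{n-1}$. The assumption $per(SA(c))=m$ means that the string $\alpha_0\cdots\alpha_{n-1}$, where $\alpha_i=\sphericalangle(p_i,c,p_{(i+1)\bmod n},\rightturn)$, is $m$-periodic, i.e. $\alpha_i=\alpha_{i+t}$ for every $i$. Summing over one block of $t$ consecutive gaps yields $\sum_{i=0}^{t-1}\alpha_i=2\pi/m$, since the $m$ identical blocks partition the full turn. I would then partition the points into $t$ groups $G_j=\{p_j,p_{j+t},\ldots,p_{j+(m-1)t}\}$ for $j\in\{0,\ldots,t-1\}$; the previous identity says that the points of $G_j$ lie on $m$ rays from $c$ whose directions are spaced exactly by $2\pi/m$. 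For each $j$ I would choose a common distance $d_j>0$ (for instance $d_j=|c,p_j|$) and move each point of $G_j$ along its radius so that it ends up at distance $d_j$ from $c$; this turns $G_j$ into a regular $m$-gon centered at $c$. Let $P'$ be the resulting configuration.

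To conclude $sym(P')=m$, I would argue that the rotation $R$ by $2\pi/m$ around $c$ is now a symmetry of $P'$ (each $G_j$ is cyclically permuted), and that $c\notin P$ implies $c\notin P'$, so every orbit of $R$ on $P'$ has exactly $m$ points. Two robots in the same orbit have identical views: rotating the whole scene by $-2\pi/m$ around $c$ maps one onto the other and leaves $P'$ invariant, while the center and unit of measure of their local polar systems are preserved since their distances to $c$ are equal. Consequently every $\backsim$-class of $P'$ has cardinality at least $m$, so $sym(P')\geq m>1$. Applying Lemma \ref{hom:lem:regEqSym} to $P'$ then gives $sym(P')=reg(P')$ and identifies their centers; together with $reg(P')=m$ established above, this yields $sym(P')=m$ with center of symmetricity $c$, as required.

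The main obstacle is the geometric step of showing that the $m$ points of each $G_j$ lie on rays equally spaced by $2\pi/m$ around $c$; the rest is essentially bookkeeping. This step is really the content of $per(SA(c))=m$: the $m$-fold repetition of the angular pattern forces the block sums to be equal, and hence each block must subtend exactly $2\pi/m$ at $c$, which is precisely what allows aligning distances within each $G_j$ to upgrade the rotational regularity of $P$ to the full rotational symmetricity of $P'$.
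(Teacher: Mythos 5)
Your proof is correct and follows essentially the same route as the paper's: move each point along its radius so that distances to $c$ become uniform on each rotational orbit, whereupon the $m$-periodicity of the string of angles (via the $2\pi/m$ block-sum identity, which is the paper's Lemma \ref{hom:lem:2piSurm}) upgrades to equality of views. The only cosmetic differences are that the paper sends \emph{every} point to a single common distance $d$ (one circle) while you use one distance $d_j$ per group $G_j$ ($n/m$ concentric regular $m$-gons), and that you pin $sym(P^\prime)$ down to exactly $m$ by invoking Lemma \ref{hom:lem:regEqSym}, where the paper instead asserts directly that each equivalence class is $\{r, S^{n/m}(r,c), \ldots, S^{(m-1)n/m}(r,c)\}$.
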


\begin{proof}
Let $d$ be the greatest distance between any point of $P$ and $c$.
Note that since $m>1$, no robot is located in $c$.
We construct a configuration $P^\prime$ by making each point $r$ in $P$ move along its radius towards the point $(d,0)$ 
(expressed in its local polar coordinate system).
In  configuration $P^\prime$, all the points are at the same distance from $c$, hence the symmetricity in this case depends only on angles
which makes it equal to regularity.
Indeed, the equivalence class of each robot $r$ is formed by the set of robots $\{r, S^{n/m}(r,c), \ldots, S^{(m-1)n/m}(r,c)\}$.
Clearly, $|[r]|=m$. Since $r$ is arbitrary, all the equivalence classes have a cardinality of $m$.
Hence, $sym(P^\prime)=m$ and $c$ is the center of symmetricity of $P^\prime$.
\end{proof}

The following lemma states that the center of symmetricity of any configuration $P$ if any is also its Weber point.
The same claim was made in \cite{ACP05} in the cases when $sym(P)=n$ (equiangular) and $sym(P)=n/2$ (biangular).
Our proof uses the same reasoning as theirs.

\begin{lemma}
\label{hom:lem:weberSym}
Let $P$ be a configuration such that $sym(P)>1$ and let $c$ be its center of symmetricity.
It holds that $c$ is the Weber point for $P$.
\end{lemma}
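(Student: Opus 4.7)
The plan is to combine the rotational symmetry of $P$ with the convexity of the total-distance function. Write $f(x) = \sum_{r \in P} |x, r|$; a Weber point is by definition any minimizer of $f$, so the task reduces to showing $f(c) \leq f(x)$ for every point $x$ in the plane.

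The first ingredient is symmetry. Since $sym(P)=m>1$, Lemma \ref{hom:lem:setConcentricCircles} tells us $P$ decomposes into disjoint regular $m$-gons all centered at $c$, so $P$ is invariant under the rotation $\rho$ of angle $2\pi/m$ about $c$. As $\rho$ is an isometry that permutes $P$,
$$ f(\rho(x)) \;=\; \sum_{r \in P} |x, \rho^{-1}(r)| \;=\; \sum_{s \in P} |x, s| \;=\; f(x) $$
for every $x$, and iteration gives $f(\rho^k(x)) = f(x)$ for all $k \in \{0, \ldots, m-1\}$.

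The second ingredient is convexity: each map $y \mapsto |y, r|$ is convex on the plane, hence so is $f$. Applying Jensen's inequality to the orbit $x, \rho(x), \ldots, \rho^{m-1}(x)$ yields
$$ f\!\left( \frac{1}{m}\sum_{k=0}^{m-1} \rho^k(x) \right) \;\leq\; \frac{1}{m}\sum_{k=0}^{m-1} f(\rho^k(x)) \;=\; f(x). $$
Working in the coordinate system centered at $c$, the orbit $\{\rho^k(x)\}_{k=0}^{m-1}$ consists of the vertices of a regular $m$-gon inscribed in the circle of radius $|x, c|$ about $c$ (collapsing to $\{c\}$ when $x = c$), whose centroid is $c$. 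Substituting yields $f(c) \leq f(x)$, so $c$ is a Weber point.

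The only nontrivial computation is the centroid identification, which reduces to the elementary fact that the vertex positions of a regular polygon sum to $m$ times its center; all other steps are direct invocations of convexity and isometric invariance. Uniqueness, implicit in the phrasing ``\emph{the} Weber point,'' follows because $f$ is strictly convex off any line containing every point of $P$: for $m \geq 3$ the polygon decomposition of Lemma \ref{hom:lem:setConcentricCircles} forces $P$ to be non-collinear, and for the remaining case $m=2$ with $P$ collinear, $c$ is the unique symmetric median of $P$ on that line.
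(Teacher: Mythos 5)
Your proof is correct, but it takes a genuinely different route from the paper. The paper argues by contradiction: if the Weber point were some $c'\neq c$, then the $m-1$ rotational images of $c'$ about $c$ would also be Weber points (by the same isometric invariance of $f$ that you make explicit), contradicting the uniqueness of the Weber point for non-collinear configurations. You instead prove the statement directly: invariance of $f$ under the rotation $\rho$ plus convexity and Jensen's inequality applied to the orbit of an arbitrary $x$ gives $f(c)\leq f(x)$ outright. Your version buys two things: it is self-contained (it does not import the uniqueness of Weber points as an external fact, only the elementary convexity of the distance function and the vanishing of the vertex sum of a regular $m$-gon about its center), and it still establishes that $c$ is a minimizer even in the degenerate collinear case $m=2$, where the paper's uniqueness-based contradiction is unavailable. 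One small caveat on your closing remark: for $m=2$ with $P$ collinear and the $n$ points distinct, the minimizer set of $f$ restricted to the line is the entire closed segment between the two middle points, so $c$ is \emph{not} the unique minimizer there --- it is merely the unique minimizer fixed by the symmetry. This does not affect the substance of the lemma (that $c$ is a Weber point), and the paper has the same blemish in sharper form, since its whole argument leans on uniqueness; but your sentence claiming uniqueness in that case should be weakened or dropped.
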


\begin{proof}
Let $m=sym(P)$.
Assume towards contradiction that the Weber point of $P$ is $c^\prime \neq c$.
Since $P$ is $m$-symmetric, it holds that all there $m-1$ other points that are rotational symmetric 
to $c^\prime$ around $c$ with angles of $\frac{2 \cdot \Pi}{m}, \ldots \frac{2\cdot (m-1)\cdot \Pi}{m}$.
By symmetricity, all these points are also Weber points.
But since $m>1$, this means that the Weber point is not unique.
But Weber points are unique when not all the points are located on the same line.
Hence $c$, the only point that is \emph{"unique"} in $P$, is the Weber point.
\end{proof}

\begin{lemma}
\label{hom:lem:weberReg}
Let $P$ be a configuration such that $reg(P)=m>1$ and let $c$ be its center of regularity.
It holds that $c$ is the Weber point for $P$.
\end{lemma}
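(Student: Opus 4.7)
The plan is to apply the standard first-order characterization of the Weber point. The function $f_P(x) = \sum_{r \in P} |x,r|$ is convex in $x$ and, at any $x \notin P$, differentiable with gradient $\sum_{r \in P} (x-r)/|x,r|$; hence, by convexity, such an $x$ is a Weber point of $P$ if and only if this gradient vanishes. Since $reg(P) = m > 1$ forces $c \notin P$ by definition of regularity, it is enough to establish that $\sum_{r \in P} (r - c)/|c,r| = 0$.

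The key observation is that the $n$ unit vectors $u_r = (r - c)/|c,r|$ are $m$-fold rotationally symmetric around $c$. Indeed, saying that $SA(c)$ has periodicity $m$ is exactly saying that the multiset $\{u_r : r \in P\}$ is invariant under the rotation $R$ of angle $2\pi/m$ centered at $c$. Therefore the sum $\Sigma = \sum_r u_r$ satisfies $R(\Sigma) = \Sigma$, and since $R$ is a non-trivial rotation (because $m > 1$) its only fixed point is the origin, forcing $\Sigma = 0$. Alternatively, one can obtain the same conclusion by invoking Lemma~\ref{hom:lem:symEqReg} to produce a symmetric companion $P'$ of $P$ whose points lie on the same radii through $c$ as those of $P$ (so the unit vectors $u_r$ are unchanged), then applying Lemma~\ref{hom:lem:weberSym} to $P'$ and transferring the gradient identity $\nabla f_P(c) = \nabla f_{P'}(c) = 0$ back to $P$.

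The main subtlety I expect is uniqueness: the gradient argument alone shows only that $c$ is \emph{a} Weber point, whereas Weber points are unique only for non-collinear configurations. When $m \geq 3$ the rotational symmetry of angles forces $P$ to be non-collinear and the Weber point is automatically unique. The borderline case $m = 2$ is the delicate one, since $P$ can then consist of $n/2$ pairs whose angular arrangement is invariant under a half-turn through $c$, and such a $P$ may lie on a line; this sub-case is best dispatched by reference to the classical biangular Weber-point computation cited from \cite{ACP05}, which independently identifies $c$ as the (set-theoretic) Weber point in each scenario.
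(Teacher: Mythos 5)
Your main argument is correct, but it takes a genuinely different route from the paper. The paper proves the lemma by reduction: it invokes Lemma~\ref{hom:lem:symEqReg} to slide each point along its radius so as to obtain an $m$-symmetric configuration $P^\prime$ with the same center $c$, applies Lemma~\ref{hom:lem:weberSym} to conclude that $c$ is the Weber point of $P^\prime$, and then cites the classical fact that the Weber point is invariant under straight-line motion of any point towards or away from it in order to transfer the conclusion back to $P$. Your primary argument instead verifies the first-order optimality condition directly: periodicity $m>1$ of $SA(c)$ makes the multiset of unit vectors $(r-c)/|c,r|$ invariant under the rotation of angle $2\pi/m$ about $c$, so their sum is a fixed vector of a non-trivial rotation and hence zero, and convexity of $x \mapsto \sum_{r\in P}|x,r|$ then makes $c$ a global minimizer. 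This is more self-contained --- it does not need the radial-invariance fact at all, and indeed your gradient identity is precisely the reason that fact is true, so your ``alternative'' paragraph is essentially the paper's proof with the key mechanism made explicit. You are also more careful than the paper on the uniqueness caveat: the proof of Lemma~\ref{hom:lem:weberSym} asserts that Weber points are unique, which requires non-collinearity, yet a collinear configuration with $n/2$ points on each side of $c$ is $2$-regular and its minimizer is a whole segment, so $c$ is then only \emph{a} Weber point (and the center of regularity is not even unique). The paper only excludes collinear configurations from Section~\ref{hom:sec:regular} onward, so your explicit treatment of the $m=2$ borderline case is a genuine improvement rather than an unnecessary detour.
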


\begin{proof}
Our argument is similar to \cite{ACP05}.
According to Lemma \ref{hom:lem:symEqReg}, there exists a configuration $P^\prime$ that can be obtained from $P$ by
making robots move along their radius. Moreover, $sym(P^\prime)=m$ and $c$ is its center of symmetricity.
According to Lemma \ref{hom:lem:weberSym}, $c$ is the Weber point of $P^\prime$.
But it is known in geometry that a Weber point remains invariant under straight movements of any
of the points towards or away from it.
Hence, since $c$ is the Weber point of $P^\prime$, it must be
the Weber point of $P$ also.
This proves the lemma.
\end{proof}

The next corollary follows directly from Lemma \ref{hom:lem:weberReg} and the fact
that Weber points are unique.

\begin{corollary} [Unicity of $c$]
The center of regularity is unique if it exists.
\end{corollary}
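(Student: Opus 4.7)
The plan is to derive the corollary directly from Lemma \ref{hom:lem:weberReg}: any center of regularity of $P$ is a Weber point of $P$. Since the Weber point is classically unique whenever the points of $P$ are not all collinear, two distinct centers $c_1 \neq c_2$ would both be Weber points, which is impossible unless $P$ is collinear. Hence the only real work is to dispose of the collinear case.

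Concretely, I would argue by contradiction: assume $P$ admits two distinct centers of regularity $c_1$ and $c_2$. Applying Lemma \ref{hom:lem:weberReg} at each produces two distinct Weber points of $P$, immediately contradicting uniqueness whenever the points of $P$ are not collinear. To close the argument, I would show separately that no collinear configuration can admit a center of regularity with period $m > 1$. If $P$ lies on a line $\ell$ and the candidate center $c \notin \ell$, then all rays from $c$ through points of $P$ lie inside an angular interval of width strictly less than $\pi$; consequently, $SA(c)$ contains exactly one ``wrap-around'' entry strictly greater than $\pi$ while every other entry is strictly smaller. This single large entry cannot reappear, so no periodicity $m \geq 2$ is possible. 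If instead $c \in \ell$, a short case analysis based on Definition \ref{hom:def:successor} shows that $SA(c)$ consists only of entries equal to $0$ (for successors staying on the same ray out of $c$) together with at most two entries equal to $\pi$ (for the transitions across $c$), a pattern that again admits no nontrivial period.

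The principal obstacle I anticipate is this last subcase, with $c$ lying on the spanning line of $P$: the successor definition was specifically crafted to cope with collinear triples $c, r, S(r)$, so confirming the claimed structure of $SA(c)$ requires a careful unwinding of Definition \ref{hom:def:successor}. Modulo this technicality, the corollary is just a one-line composition of Lemma \ref{hom:lem:weberReg} with the uniqueness of Weber points recalled immediately before its statement.
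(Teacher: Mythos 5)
Your main line --- compose Lemma \ref{hom:lem:weberReg} with the classical uniqueness of the Weber point for non-collinear point sets --- is exactly the paper's own (one-line) proof, so that part is fine. The genuine problem is your disposal of the collinear case, which you correctly identify as the only remaining work but then resolve incorrectly. The claim that a collinear configuration admits no center of regularity with $m>1$ is false. Take $P=\{-2,-1,1,2\}$ on a line and $c=0$: unwinding Definition \ref{hom:def:successor}, one gets $SA(c)=0,\pi,0,\pi=(0\,\pi)^2$, so $per(SA(c))=2$ and $c$ is a center of regularity. More generally, whenever $c$ lies on the spanning line with exactly $n/2$ points on each side, $SA(c)=(0^{n/2-1}\pi)^2$ is $2$-periodic, which contradicts your assertion that a string of $0$'s with at most two $\pi$'s ``admits no nontrivial period'' --- that is only true when the two sides are unbalanced.

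Worse, in the example above every point strictly between $-1$ and $1$ is also a center of regularity, so for collinear configurations the corollary is actually \emph{false}, not merely harder to prove. The collinear case therefore cannot be argued away; it must be excluded by hypothesis. This is what the paper does, implicitly in the proof of Lemma \ref{hom:lem:weberSym} (``Weber points are unique when not all the points are located on the same line'') and explicitly at the start of Section \ref{hom:sec:regular}. So you should state the corollary under the standing non-collinearity assumption and then your first paragraph already constitutes the full (and the paper's) proof; the second half of your argument should be dropped.
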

\section{Detection of Regular Configurations}
\label{hom:sec:regular}

In this section we show how to identify geometric configurations that are regular.
We present two algorithms 
that detects whether a configuration $P$ of $n$ points given in input is $m$-regular for some $m>1$, and if so, they output
its center of regularity.
The first one detects the regularity only if $m$ is even, it is very simple and runs in $0(n~logn)$ time.
The second one can detect any regular configuration, provided that $m\geq 3$.
It is a little more involved and runs in $0(n^4~log n)$ time.
We assume in this section that $P$ is not a configuration in which all the points are collinear.

\subsection{Preliminaries}
In this section, we state some technical lemmas that help us in the presentation and the proofs.

\begin{lemma}
\label{hom:lem:2piSurm}
Let $P$ be a regular configuration of $n$ distinct points and let $c$ be its center of regularity.
Let $m=reg(P)$.
The following property holds:

$$
\forall r \in P: 
(\sphericalangle(r, c, S^{n/m}(r, \rightturn), \rightturn) = 2\Pi/m)
\wedge
(\sphericalangle(r, c, S^{n/m}(r, \leftturn), \leftturn) = 2\Pi/m)
$$
\end{lemma}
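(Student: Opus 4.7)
The plan is to reduce the claim to a direct computation on the string of angles $SA(r,c,\rightturn)$, exploiting its $m$-periodicity. Fix an arbitrary $r\in P$, and write $SA(r,c,\rightturn)=\alpha_1\alpha_2\cdots\alpha_n$, where by definition $\alpha_i=\sphericalangle(S^{i-1}(r),c,S^i(r),\rightturn)$. Since $reg(P)=m$, the string is $m$-periodic, so there exists a word $w=\beta_1\cdots\beta_{n/m}$ of length $n/m$ such that $SA(r,c,\rightturn)=w^m$.

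The first key observation I would establish is that the clockwise angle from $r$ to its $k$-th clockwise successor around $c$ telescopes, that is,
\[
\sphericalangle(r,c,S^k(r,\rightturn),\rightturn)=\alpha_1+\alpha_2+\cdots+\alpha_k
\]
for every $1\le k\le n$. This follows from the Successor definition together with the fact that the points $r,S(r),S^2(r),\ldots,S^{n-1}(r)$ are encountered in clockwise polar order around $c$; consecutive sub-angles therefore add (modulo $2\pi$) to give the total clockwise sweep. As a special case with $k=n$, the robot comes back to its starting polar direction, giving $\alpha_1+\cdots+\alpha_n=2\pi$.

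Combining these two facts with the periodic structure, the sum of a single period $w$ is
\[
\beta_1+\cdots+\beta_{n/m}=\frac{1}{m}\sum_{i=1}^{n}\alpha_i=\frac{2\pi}{m}.
\]
Applying the telescoping identity with $k=n/m$ yields $\sphericalangle(r,c,S^{n/m}(r,\rightturn),\rightturn)=\alpha_1+\cdots+\alpha_{n/m}=2\pi/m$, which is the clockwise half of the statement. For the anticlockwise half, I would invoke the previous lemma showing that $per(SA(r,c,\leftturn))=per(SA(r,c,\rightturn))=m$, and then repeat the same telescoping/periodicity argument on $SA(r,c,\leftturn)$.

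The only non-routine step is justifying the telescoping identity rigorously: one must check that when following successors clockwise, the cumulative clockwise angle around $c$ after $k$ steps is really the plain sum of the first $k$ entries of the string, with no ``wrap-around'' issue before $k$ reaches $n$. This is immediate from Definition~\ref{hom:def:successor}, which forces $S^i(r)$ to be the next point in the clockwise polar order around $c$, so all partial sums stay in $[0,2\pi)$. Once this is in place, the rest is just arithmetic on the period length.
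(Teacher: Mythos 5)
Your proof is correct and follows essentially the same route as the paper's: sum all $n$ entries of $SA(r,c,\rightturn)$ to $2\pi$, use the $m$-periodicity to conclude that one period sums to $2\pi/m$, and identify that partial sum with $\sphericalangle(r,c,S^{n/m}(r,\rightturn),\rightturn)$ (you are in fact slightly more careful than the paper in flagging the telescoping identity as the step that needs justification). The only divergence is in the anticlockwise half, where the paper relates $\sphericalangle(r,c,S^{n/m}(r,\leftturn),\leftturn)$ to the clockwise angle at $r'=S^{n/m}(r,\leftturn)$ and reuses Part~1, whereas you rerun the summation on $SA(r,c,\leftturn)$ after invoking the direction-invariance of the periodicity; both are valid and equally short.
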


%% REMOVE %%
%\remove{

\begin{proof}
Fix $r \in P$.
Since $P$ is $m$-regular, then $m$ is necessarily a divisor of $n$.
Let $x=n/m$. We divide the proof into two parts:

\begin{enumerate}

\item \textbf{Part 1:} $\sphericalangle(r, c, S^{x}(r, \rightturn), \rightturn) = 2\Pi/m$

Let $S=SA(r,c,\rightturn)$.
Assume that $S=\alpha_1 \ldots \alpha_n$.
Clearly we have that $$\alpha_1 + \alpha_2 + \ldots + \alpha_n=2\Pi$$
Since $n=m \cdot x$, the above sum can be rewritten as follows:
$$
(\alpha_1 + \ldots + \alpha_x )+ (\alpha_{x+1}+ \ldots + \alpha_{2x}) + \ldots + (\alpha_{(m-1)x+1}+ \ldots+\alpha_{mx}) = 2\Pi
$$
But $S$ is $m$-regular. Hence, 
$\forall i \in [1,n]: \alpha_i = \alpha_{(i+x) mod (n+1)}$. 
This implies that:
$$
m \cdot (\alpha_1 + \ldots + \alpha_x ) = 2\Pi
$$
But $\sphericalangle(r, c, S^x(r, \rightturn), \rightturn) = (\alpha_1 + \ldots + \alpha_x)$.
Hence $\sphericalangle(r, c, S^x(r, \rightturn), \rightturn)= 2\Pi /m$. 
This finishes the first part of the proof.
\\

\item \textbf{Part 2:} $\sphericalangle(r, c, S^{x}(r, \leftturn), \leftturn) = 2\Pi/m $
%It remains to show that  $\sphericalangle(r, c, S^x(r, \leftturn), \leftturn)= 2\Pi /m$.

Let $r^\prime= S^{x}(r, \leftturn)$ and notice that $S^{x}(r^\prime, \rightturn)=q$.
It holds that

$$
\sphericalangle(r, c, S^x(r, \leftturn), \leftturn)= \sphericalangle(r, c, r^\prime, \leftturn)= \sphericalangle(r^\prime, c, r, \rightturn)
$$
But since $S^{x}(r^\prime, \rightturn)=r$, we have:
$\sphericalangle(r^\prime, c, r, \rightturn)=\sphericalangle(r^\prime, c, S^x(r^\prime, \rightturn), \rightturn)=2\Pi /m$ (as shown in Part 1).
Hence $\sphericalangle(r, c, S^x(r, \leftturn), \leftturn)=2\Pi /m$.

\end{enumerate}

%Since $reg(P)=m$, it holds from the definition of regularity (Definition \ref{}) and Lemma \ref{} that
%$S$ can be written in the form $w^m$.

\end{proof}

%}
%% END REMOVE

The following lemma proves that when a configuration $P$ is $m$-regular with $m$ even, then
for each point in $r \in P$ there exists a corresponding point ($S^{n/2}(r)$) that lies
on the line that passes through $r$ and $c$ with $c$ being the center of regularity of $P$.
\begin{lemma}
\label{hom:lem:pisur2}

Let $P$ be a regular configuration of $n$ distinct points and let $c$ be its center of regularity.
The following property holds:

$$
(reg(P) \text{ is even}) 
\Rightarrow
(\forall r \in P: \sphericalangle(r, c, S^{n/2}(r)) = \Pi)
$$
\end{lemma}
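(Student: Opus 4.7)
The plan is to reduce the claim to Lemma \ref{hom:lem:2piSurm} applied iteratively. Let $m = reg(P)$, assumed even, and set $x = n/m$ so that $n/2 = (m/2) \cdot x$. By Lemma \ref{hom:lem:2piSurm}, for every point $q \in P$ we have $\sphericalangle(q, c, S^{x}(q,\rightturn), \rightturn) = 2\Pi/m$. The idea is that successive applications of $S^{x}(\cdot, \rightturn)$ produce rotations of $2\Pi/m$ around $c$ that add up consistently, and iterating $m/2$ times yields a total angle of $\Pi$.

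Concretely, I would fix $r \in P$ and prove by induction on $k \in \{0, 1, \ldots, m\}$ that
\[
\sphericalangle\bigl(r, c, S^{kx}(r, \rightturn), \rightturn\bigr) \;=\; \frac{2k\Pi}{m}.
\]
The base case $k=0$ is immediate. For the inductive step, set $r' = S^{(k-1)x}(r,\rightturn)$, so $S^{kx}(r,\rightturn) = S^{x}(r', \rightturn)$. By the induction hypothesis, $\sphericalangle(r, c, r', \rightturn) = 2(k-1)\Pi/m$, and by Lemma \ref{hom:lem:2piSurm} applied at $r'$, $\sphericalangle(r', c, S^{x}(r',\rightturn), \rightturn) = 2\Pi/m$. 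Since both rotations are taken in the same clockwise direction around the common center $c$, they add up, giving the desired $2k\Pi/m$.

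Substituting $k = m/2$ (which is an integer since $m$ is even) gives
\[
\sphericalangle\bigl(r, c, S^{(m/2) x}(r, \rightturn), \rightturn\bigr) \;=\; \Pi,
\]
and since $(m/2) \cdot x = n/2$, this is exactly $\sphericalangle(r, c, S^{n/2}(r)) = \Pi$, as claimed (orientation is irrelevant for a straight angle).

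The only subtlety, and the step I expect to require a bit of care, is justifying that the intermediate angles $2k\Pi/m$ for $k < m$ all stay strictly less than $2\Pi$, so that the clockwise arc from $r$ to $S^{kx}(r,\rightturn)$ really is the sum of the $k$ consecutive arcs of size $2\Pi/m$ (rather than wrapping around past $r$). This holds because the total $2k\Pi/m < 2\Pi$ for $k < m$, and at $k = m$ we close up at $r$ itself, which is consistent with $S^{mx}(r,\rightturn) = S^{n}(r,\rightturn) = r$. With this bookkeeping settled, the induction closes cleanly.
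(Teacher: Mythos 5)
Your proposal is correct and follows essentially the same route as the paper: decompose the clockwise angle from $r$ to $S^{n/2}(r)$ into $m/2$ consecutive arcs, each equal to $2\Pi/m$ by Lemma \ref{hom:lem:2piSurm}, and sum them to get $\Pi$. Your version is slightly more careful (the explicit induction and the check that intermediate angles do not wrap past $2\Pi$), and for what it is worth your conclusion $\Pi$ is the right one --- the paper's final line ``$=\Pi/2$'' is a typo.
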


%% REMOVE 
%\remove{
\begin{proof}
Let $m=reg(P)$. By assumption $m$ is even, hence there exists some $m^\prime \in \mathbb{N^{+}}$ such that $m=2 m^\prime$.
Then, it holds that $n/2=m^\prime (n/m)$.
This implies that 
$$
\sphericalangle(r, c, S^{n/2}(r))=\sphericalangle(r, c, S^{n/m}(r))+\sphericalangle(S^{n/m}(r), c, S^{2n/m}(r))
+ \ldots + \sphericalangle(S^{(m^\prime-1)n/m}(r), c, S^{n/2}(r))
$$
But since $P$ is $m$-regular, we have according to Lemma \ref{hom:lem:2piSurm} that:

$$
\sphericalangle(r, c, S^{n/2}(r))=2\Pi/m + \ldots + 2\Pi/m=m^\prime(2\Pi/m)
$$
By replacing $m=2m^\prime$ we get: $\sphericalangle(r, c, S^{n/2}(r))=\Pi /2$ which proves the claim.
\end{proof}

%}
%% END REMOVE

\subsection{Detection of Even Regularity}

In this section we present our algorithm for detecting regular configurations when the regularity is even.
Note that since regularity is a divisor of $n$, it holds that $n$ must be also even in this case.
It is inspired from Algorithm 2 in \cite{ACP05} and runs in $0(n~log n)$ steps.
It is based on the notion of median lines taken from \cite{ACP05}.

\begin{definition}[$median(r)$]
Given a configuration $P$ of $n$ robots, $n$ even, given $r$ any robot in $P$, 
the median line of $r$, denoted $median(r)$, is the line that passes through $r$ and
some other robot $r^\prime$ in $P$ and divides the set of points into two subsets of equal cardinality 
$\frac{n}{2}-1$. Note that in our case, contrary to \cite{ACP05}, we allow other robots 
than $r$ and $r^\prime$ to lie in the median. This is explained by the fact that, contrary to them, we allow 
angles equal to $0$.
\end{definition}

\begin{lemma}
Let $P$ be a regular configuration of $n$ distinct points with $reg(P)=m$ even.
Let $c$ be the center of regularity.
The following property holds:
$$
\forall r \in P: median(r) \text{ passes through $S^{n/2}(r)$ and $c$}
$$
\end{lemma}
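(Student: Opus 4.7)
The plan is to identify the median line of $r$ with the line $\ell$ through $r$ and $S^{n/2}(r)$, and verify the two properties demanded by the definition of $median$: $\ell$ passes through a second robot of $P$, and it splits the remaining robots into two equal-sized sides. Collinearity of $r$, $c$, and $S^{n/2}(r)$ (hence $c \in \ell$) is immediate from Lemma~\ref{hom:lem:pisur2} applied to $r$: that lemma gives $\sphericalangle(r, c, S^{n/2}(r)) = \Pi$, which simultaneously furnishes the ``second robot on $\ell$'' property and the desired passage through $c$.

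For the bisection property, the key observation is that Lemma~\ref{hom:lem:pisur2} applies uniformly to \emph{every} robot of $P$. Writing $\Theta_i = \sphericalangle(r, c, S^i(r), \rightturn)$ for the clockwise angular coordinate of $S^i(r)$ around $c$, applying the lemma with $S^i(r)$ in place of $r$ yields $\sphericalangle(S^i(r), c, S^{i+n/2}(r)) = \Pi$, i.e.\ $\Theta_{i+n/2} \equiv \Theta_i + \Pi \pmod{2\Pi}$ for every $i$. Consequently the shift $\sigma(S^i(r)) = S^{i+n/2}(r)$ is a well-defined self-involution of $P$ that rotates every angular coordinate around $c$ by exactly $\Pi$. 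Since $\ell$ is precisely the set of points of angular coordinate $0$ or $\Pi$ from $c$, $\sigma$ swaps the two open half-planes determined by $\ell$; this gives a bijection between robots strictly on one side of $\ell$ and robots strictly on the other, hence the required equality of cardinalities.

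The only subtlety I foresee is reconciling the argument with the remark following the definition of $median$: unlike in \cite{ACP05}, additional robots may lie on $\ell$, corresponding to indices $i \neq 0, n/2$ for which enough consecutive angles $\alpha_j$ of $SA(r,c,\rightturn)$ vanish to force $\Theta_i \in \{0,\Pi\}$. This is harmless, because $\sigma$ restricts to an involution on $P \cap \ell$ that merely swaps the two rays of $\ell$ issuing from $c$; it leaves the counts on the two open sides of $\ell$ untouched, so the two open-side cardinalities remain equal (and coincide with $\tfrac{n}{2}-1$ exactly when $|P \cap \ell|=2$). Apart from this bookkeeping, every step reduces to a direct invocation of Lemma~\ref{hom:lem:pisur2}, so I expect no genuine obstacle.
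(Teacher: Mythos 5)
Your proposal is correct and follows the same route as the paper: the key step in both is the invocation of Lemma~\ref{hom:lem:pisur2} to get $\sphericalangle(r, c, S^{n/2}(r)) = \Pi$. The paper's proof stops there with ``hence the lemma follows,'' whereas you additionally verify the equal-cardinality (bisection) property via the involution $S^i(r) \mapsto S^{i+n/2}(r)$ and handle the degenerate case of extra robots on the line --- details the paper leaves implicit but which do not change the approach.
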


\begin{proof}
According to Lemma \ref{hom:lem:pisur2}, $\sphericalangle(r, c, S^{n/2}(r))= \pi$. Hence the lemma follows.

\end{proof}

As a corollary we have that the center of regularity lies in the intersection of all medians.

\begin{lemma}
\label{hom:lem:intersectionMedians}
Let $P$ be a regular configuration of $n$ distinct points with $reg(P)=m$ even.
Let $c$ be the center of regularity. It holds that
$$\forall r \in P: median(r) \cap \{c\}=\{c\}$$
\end{lemma}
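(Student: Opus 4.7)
The plan is to obtain this result as an immediate corollary of the previous lemma, which established that for every robot $r \in P$, the line $median(r)$ passes through both $S^{n/2}(r)$ and $c$. The statement to be proved here is simply a restatement of the fact that $c$ lies on every median, packaged in set-theoretic notation: since $\{c\}$ is a singleton, $median(r) \cap \{c\} = \{c\}$ is equivalent to saying $c \in median(r)$.

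Concretely, I would proceed as follows. First, fix an arbitrary $r \in P$. Since $P$ is $m$-regular with $m$ even and $c$ is the center of regularity, the preceding lemma applies and yields that $median(r)$ passes through $c$, i.e.\ $c \in median(r)$. Consequently, $c$ belongs to the intersection $median(r) \cap \{c\}$, so this intersection contains $\{c\}$; on the other hand, any element of $median(r) \cap \{c\}$ must in particular lie in $\{c\}$, forcing the intersection to be a subset of $\{c\}$. The two inclusions give $median(r) \cap \{c\} = \{c\}$. Since $r$ was chosen arbitrarily, the property holds for every $r \in P$.

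There is no real obstacle here: the content of the lemma has already been produced by the previous result (which itself rested on Lemma \ref{hom:lem:pisur2} giving $\sphericalangle(r, c, S^{n/2}(r)) = \Pi$, forcing $r$, $c$, and $S^{n/2}(r)$ to be collinear and hence $c$ to lie on $median(r)$). The only thing the present corollary adds is the packaging: by asserting $c \in median(r)$ for every $r$, it records that $c$ lies in the intersection of all medians, which is the form in which the fact will be used by the detection algorithm in the next subsection (the intersection of the medians gives a way to recover $c$ geometrically).
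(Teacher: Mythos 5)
Your proposal is correct and follows exactly the route the paper intends: the statement is presented there as an immediate corollary of the preceding lemma (which shows $median(r)$ passes through $S^{n/2}(r)$ and $c$), and the paper gives no further argument beyond that observation. Your unpacking of $median(r) \cap \{c\} = \{c\}$ as simply $c \in median(r)$ is the same reasoning, made explicit.
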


\begin{theorem}
\label{hom:thm:testRegularityEven}
Given $P$ a configuration of $n$ distinct points with $n$ even. 
There exists an algorithm running in $0(n~log n)$ steps that detects if $P$ is $m$-regular with $m$ even,
and if so, it outputs $m$ and the center of regularity.
\end{theorem}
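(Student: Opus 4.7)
The plan is to exploit Lemma \ref{hom:lem:intersectionMedians}, which states that when $P$ is $m$-regular with $m$ even the unique center of regularity $c$ lies on every median line of $P$. This reduces the problem to finding two distinct medians, whose intersection pins down a candidate for $c$, and then to verifying that candidate with the detection algorithm of Theorem \ref{hom:thm:algoDetectRegularity}.

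I would organise the algorithm in three phases. First, I pick an arbitrary $r_1 \in P$ and compute $median(r_1)$ in $O(n \log n)$ time by computing the angle of the ray from $r_1$ to each other point, sorting the $n-1$ resulting values, and sweeping for a direction passing through some other robot that leaves the same number of points strictly on each side of the line; if no such line exists, output \textsc{No}. Second, exploiting the hypothesis that $P$ is not collinear, I pick some $r_2 \in P$ that does not lie on $median(r_1)$ and compute $median(r_2)$ by the same sweep. Because $r_2 \notin median(r_1)$, the two lines are distinct, so their intersection $c := median(r_1) \cap median(r_2)$ is a single point (or empty if they are parallel, in which case I again output \textsc{No}). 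Third, I invoke Theorem \ref{hom:thm:algoDetectRegularity} on $c$, which returns $m := per(SA(c))$ in $O(n \log n)$ time, and I accept, outputting $(m, c)$, if and only if $m > 1$ and $m$ is even.

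For correctness, if $P$ is $m$-regular with $m$ even then Lemma \ref{hom:lem:intersectionMedians} guarantees that both median lines exist and pass through the true center; since they are distinct, their intersection is precisely $c$, and the verification step then returns the correct $m$. Conversely, whenever the verification step reports an even $m > 1$, Theorem \ref{hom:thm:algoDetectRegularity} certifies $c$ as a genuine center of regularity, so $P$ is indeed even-regular; uniqueness of the center (the corollary following Lemma \ref{hom:lem:weberReg}) rules out verification accidentally succeeding at a spurious point. Each of the three phases runs in $O(n \log n)$, so the total complexity is as claimed.

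The main obstacle is the efficient implementation of $median(r)$. Because the definition in the excerpt explicitly allows robots other than $r$ and $r'$ to lie on the median line, the usual ``sort by angle and pick the middle element'' approach does not quite apply; instead, a two-pointer sweep is needed to count, for each candidate direction through some other robot, the points strictly in each open half-plane and to test whether these counts agree. Once that subroutine is in place, the rest of the argument is the straightforward intersection-plus-verification pattern described above.
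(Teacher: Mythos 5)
Your proposal follows essentially the same route as the paper: compute two distinct median lines, intersect them to obtain the unique candidate center (justified by Lemma \ref{hom:lem:intersectionMedians}), and verify that candidate via the string-of-angles test of Theorem \ref{hom:thm:algoDetectRegularity}, all within $O(n \log n)$. The only differences are cosmetic --- the paper first takes the convex hull and picks its two reference points there, while you pick arbitrary points and spell out the median-computation sweep and the final evenness check in more detail.
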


\begin{proof}
The algorithm is described formally in Figure \ref{hom:alg:evenRegularity}.
It first computes the convex hull $CH$ of the configuration which takes $0(n~log n)$ operations.
Then it chooses an arbitrary point $r$ in $CH$ and computes its $median(r)$.
After that, it picks another point $r^\prime$ that belongs to $CH$ but not to $median(r)$.
The point $r^\prime$ is well defined since we assume that the points of $P$ are not all collinear.
This choice of $r^\prime$ guarantees  that $median(r) \neq median(r^\prime)$.
If the configuration is regular, its center of regularity necessarily lies in the intersection $c$
of the two medians (Lemma \ref{hom:lem:intersectionMedians}).
Then, it can be checked in $0(n~log n)$ steps whether $c$ is a center of regularity (Theorem \ref{hom:thm:algoDetectRegularity}).

\begin{figure}[htb]
\centering{ \fbox{
\begin{minipage}[t]{150mm}
\scriptsize 
\renewcommand{\baselinestretch}{2.5} \resetline{}
\begin{tabbing}
aaaaa\=aa\=aaa\=aaa\=aaaaa\=aaaaa\=aaaaaaaaaaaaaa\=aaaaa\=\kill %~\\

%\textbf{Init:} \> $Quorums_i=\emptyset$; $Label_i = \bot$;\\
\line{T1} \> $CH \leftarrow \textsc{Convex Hull}(P)$ \\

\line{T2} \> $r \leftarrow \text{ an arbitrary point on $CH$}$ \\
\line{T3} \> $r^\prime \leftarrow \text{a point on $CH$ but not on $median(r)$ }$\\
\line{T4} \> \textbf{If} ($(median(r) \cap median(r^\prime)) \neq \bot$) \\
\line{T5} \>\> $c \leftarrow median(r) \cap median(r^\prime)$ \\
\line{T6} \>\> $SA \leftarrow \text{ string of angles of $P$ around $c$ }$\\
\line{T7} \>\> $m \leftarrow per(SA)$ \\
\line{T8} \>\> \textbf{If} $(m>1)$ \textsc{return} (\textsc{Regularity $m$ with Center $c$}) \\
\line{T9} \> \textbf{Endif} \\
\line{T10} \> \textsl{return} (\textsc{Not Regular}) \\
%\line{S1} \>\> \textbf{while} (true) \\
%\line{S2} \>\>\>$List \leftarrow A\Sigma$-\textsc{Query()} \\

\end{tabbing}
\normalsize
\end{minipage}
}
\caption{Detection of Even Regularity}
\label{hom:alg:evenRegularity}
}
\end{figure}

\end{proof}

\subsection{Detection of Odd Regularity}

\begin{definition}[$\alpha$-Circle]
\label{hom:def:thales}

Given two distinct points $x$ and $y$ and an angle $0 < \alpha < \Pi$, 
we say that the circle $C_{xy}$ is a $\alpha$-circle for $p$ and $q$ if $x, y \in C_{pq}$ and there \emph{exists}
a point $p \in C_{xy}$ such that $\sphericalangle(x, p, y)=\alpha$.
\end{definition}
In the following we present three known properties \cite{ACP05} about $\alpha$-circles.

\begin{property}
\label{hom:prop:thales}
it holds that $\sphericalangle(x, p^\prime, y)=\alpha$
for every point $p^\prime \in C_{pq}$ on the same arc as $p$.
\end{property}

\begin{property}
\label{hom:prop:thales90}
If $\alpha=\Pi/2$, the $\alpha$-circle is \emph{unique} and is called the Thales circle.
\end{property}

\begin{property}
\label{hom:prop:2thalesNot90}
If $\alpha \neq \Pi/2$, there are \emph{exactly} two $\alpha$-circles. 
We denote them in the following by $C_{xy}$ and $C^\prime_{xy}$.
$\alpha = \Pi/2$ can be seen as a special case in which $C_{xy}=C^\prime_{xy}$.
\end{property}

\begin{definition}[$C_{xy} \cap C_{yz}$]
Given $C_{xy}$ and $C_{yz}$, we define their intersection, denoted $C_{xy} \cap C_{yz}$,
as the point $p$ such that $(p \neq y) \wedge (p \in C_{xy}) \wedge (p \in C_{yz})$.
If $p$ does not exists we write $C_{xy} \cap C_{yz}=\emptyset$.
\end{definition}

\begin{lemma}
\label{hom:lem:cIntersectionCircles}
Let $m\geq 3$. Let $P$ be an $m$-regular configuration of $n$ distinct points with center $c \not\in P$.
Let $x$ be any point in $P$, 
and let us denote by $y$ and $z$ 
the points $S^{n/m}(x,c,\rightturn)$ and $S^{n/m}(x,c,\leftturn)$ respectively.
Let $C_{xy}, C_{xy}^\prime, C_{xz}, C_{xz}^\prime$ be the $2\Pi/m$-circles for the appropriate points.
It holds that 
$c \in  (C_{xy} \cap C_{xz}) \cup (C_{xy}^\prime \cap C_{xz}) \cup (C_{xy} \cap C_{xz}^\prime) \cup (C_{xy}^\prime \cap C_{xz}^\prime)$.

\end{lemma}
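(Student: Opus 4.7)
The plan is to reduce this to a direct geometric consequence of Lemma \ref{hom:lem:2piSurm} together with the structural properties of $\alpha$-circles (Properties \ref{hom:prop:thales}, \ref{hom:prop:thales90}, \ref{hom:prop:2thalesNot90}). The heart of the argument is simply that $c$ sees both chords $[x,y]$ and $[x,z]$ under the same angle $2\Pi/m$, so it must sit on a $2\Pi/m$-circle through each pair.

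First I would invoke Lemma \ref{hom:lem:2piSurm} to assert that $\sphericalangle(x, c, y, \rightturn) = 2\Pi/m$ and $\sphericalangle(x, c, z, \leftturn) = 2\Pi/m$. Since $c \notin P$ by hypothesis, $c$ is distinct from each of $x, y, z$, so these angles are well-defined non-degenerate angles strictly between $0$ and $\Pi$ (using $m \geq 3$, so $2\Pi/m \leq 2\Pi/3 < \Pi$). In particular, $c$ is a point in the plane from which the chord $[x,y]$ subtends angle $2\Pi/m$, and likewise for $[x,z]$.

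Next I would apply the converse of the inscribed-angle theorem, which is exactly what Definition \ref{hom:def:thales} and Property \ref{hom:prop:thales} encode: every point $p$ from which a fixed segment $[x,y]$ subtends a fixed angle $\alpha \in (0,\Pi)$ lies on one of the $\alpha$-circles for $x,y$. By Properties \ref{hom:prop:thales90} and \ref{hom:prop:2thalesNot90}, the set of such $\alpha$-circles is exactly $\{C_{xy}, C_{xy}^\prime\}$ (possibly with $C_{xy} = C_{xy}^\prime$ when $m = 4$). Hence $c \in C_{xy} \cup C_{xy}^\prime$. The identical argument applied to $[x,z]$ yields $c \in C_{xz} \cup C_{xz}^\prime$.

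Combining the two memberships gives
\[
c \;\in\; (C_{xy} \cup C_{xy}^\prime) \cap (C_{xz} \cup C_{xz}^\prime),
\]
which distributes into the required four-term union. Because $c \neq x$ is the common point of each chosen pair of circles from $\{C_{xy}, C_{xy}^\prime\}$ and $\{C_{xz}, C_{xz}^\prime\}$, $c$ is a legitimate intersection point in the sense of the preceding definition (the ``other'' intersection, different from the shared vertex $x$). I do not expect a serious obstacle here: the content is really just unpacking Lemma \ref{hom:lem:2piSurm} with the geometric characterization of $\alpha$-circles. The only point requiring minor care is to confirm that $c \neq x$ so that the intersection-of-circles notation is legitimate, and this follows immediately from $c \notin P$.
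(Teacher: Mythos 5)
Your proposal is correct and follows essentially the same route as the paper's own proof: invoke Lemma \ref{hom:lem:2piSurm} to get that $c$ sees both $[x,y]$ and $[x,z]$ under the angle $2\Pi/m$, conclude $c \in C_{xy}\cup C_{xy}^\prime$ and $c \in C_{xz}\cup C_{xz}^\prime$ from the $\alpha$-circle properties, and distribute the conjunction over the disjunction. Your added checks that $2\Pi/m < \Pi$ (from $m\geq 3$) and that $c\neq x$ so the intersection notation is legitimate are minor refinements the paper omits, but they do not change the argument.
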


\begin{proof}
\label{hom:lem:cIntersectionCircles-bis}

According to Theorem \ref{hom:lem:2piSurm}, since $P$ is $m$-regular, then
$\sphericalangle(x, c, y, \rightturn) = 2\Pi/m$ where $c$ is the center of regularity.
Hence, either $c \in C_{xy}$ or $c \in C_{xy}^\prime$ (According to Definition \ref{hom:def:thales} and Property \ref{hom:prop:2thalesNot90}).
Using the same argument, we can also show that either $c \in C_{xz}$ or $c \in C_{xz}^\prime$.
Hence: 
$$((c \in C_{xy})  \vee (c \in C_{xy}^\prime)) \wedge ((c \in C_{xz}) \vee (c \in C_{xz}^\prime))$$

Distributing $\wedge$ over $\vee$ gives us:
$$((c \in C_{xy}) \wedge (c \in C_{xz}))
\vee ((c \in C_{xy}) \wedge (c \in C_{xz}^\prime))
\vee ((c \in C_{xy}^\prime) \wedge (c \in C_{xz}))
\vee ((c \in C_{xy}^\prime) \wedge (c \in C_{xz}^\prime))
$$

This proves the lemma.
\end{proof}

\begin{lemma}
Given $3 \leq m\leq n$, given $P$ a configuration of $n$ distinct points. 
Let $x$ be any point in $P$
The following property holds:

$$(P \text{ is $m$-regular with center } c)$$
$$ \Leftrightarrow $$
$$
(\exists y,z \in P: (x \neq y \neq z) 
\wedge 
(c
\in  
(C_{xy} \cap C_{xz}) \cup (C_{xy}^\prime \cap C_{xz}) \cup (C_{xy} \cap C_{xz}^\prime) \cup (C_{xy}^\prime \cap C_{xz}^\prime)
))
$$
Where $C_{xy},C_{xz},C_{xy}^\prime,C_{xz}^\prime$ are $2\Pi/m$-circles of the corresponding points.
\end{lemma}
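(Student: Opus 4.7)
My plan is to prove the biconditional by treating the two directions separately, with the forward direction following readily from Lemma \ref{hom:lem:cIntersectionCircles} and the backward direction requiring more delicate reasoning.

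For the forward direction, I would assume $P$ is $m$-regular with center $c$ and set $y = S^{n/m}(x, c, \rightturn)$ and $z = S^{n/m}(x, c, \leftturn)$. Both $y$ and $z$ lie in $P$ by construction as $n/m$-th successors of $x$. Since $m \geq 3$ we have $n/m \leq n/3 < n/2$, which together with the distinctness of points of $P$ guarantees that $x$, $y$, and $z$ are pairwise distinct, in particular $y \neq z$. The desired conclusion is then an immediate application of Lemma \ref{hom:lem:cIntersectionCircles} to this specific pair, which places $c$ in the required union of four intersections.

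For the backward direction, the hypothesis supplies a pair $y, z \in P$ with $c$ in one of the four candidate intersections. Unpacking Definition \ref{hom:def:thales} and Property \ref{hom:prop:thales} shows that $c$ sees the chord $[x,y]$ under an angle of $2\Pi/m$ and likewise for $[x,z]$; the four cases $C_{xy}\cap C_{xz}$, $C_{xy}\cap C_{xz}^\prime$, $C_{xy}^\prime\cap C_{xz}$, $C_{xy}^\prime\cap C_{xz}^\prime$ correspond to the two possible arcs (and hence the two possible orientations) for each of the two points. My strategy is to argue that the combined angle data forces $y$ and $z$ to sit on opposite sides of the ray from $c$ through $x$ — essentially because the paired primed/unprimed circles record the choice of clockwise versus anticlockwise opening — and then to iterate the rotation of angle $2\Pi/m$ around $c$, obtaining $m$ rotational copies of $x$; by Lemma \ref{hom:lem:2piSurm} this angular pattern is exactly what characterizes an $m$-regular configuration around $c$, after which the uniqueness of $c$ as a Weber point (corollary following Lemma \ref{hom:lem:weberReg}) identifies it as \emph{the} center of regularity of $P$.

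The main obstacle will be the propagation step in the backward direction, because two isolated angle equalities among three distinguished points of $P$ do not by themselves imply $m$-periodicity of the full string $SA(c)$: \emph{a priori} the remaining $n-3$ points of $P$ could be arranged arbitrarily. I expect that the cleanest way to close this gap is to combine the angular argument above with an explicit verification using Theorem \ref{hom:thm:algoDetectRegularity}, which is also how the lemma will be used algorithmically — namely, one enumerates candidate centers from the four intersections over all admissible pairs $y, z$ and then tests each candidate for $m$-regularity, the biconditional guaranteeing simultaneously completeness (no true center is missed) and soundness (every accepted $c$ is certified regular).
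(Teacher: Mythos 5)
Your forward direction is exactly the paper's proof: the paper's entire argument is the single sentence that the lemma ``follows from Lemma \ref{hom:lem:cIntersectionCircles} by setting $y=S^{n/m}(x,c,\rightturn)$ and $z=S^{n/m}(x,c,\leftturn)$'', which is precisely your instantiation (your added check that $m\geq 3$ forces $y\neq z$ is a detail the paper omits but that is needed for the clause $x\neq y\neq z$).

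The real issue is the backward direction, and here your diagnosis is correct and more candid than the paper, which simply does not address it. As you observe, knowing that $c$ sees the chords $[x,y]$ and $[x,z]$ under angle $2\Pi/m$ constrains only three points of $P$ and says nothing about the remaining $n-3$, so the implication from the right-hand side to ``$P$ is $m$-regular with center $c$'' is false as stated: for a generic non-regular $P$ and an arbitrary choice of $y,z$, two $2\Pi/m$-circles through the common point $x$ meet in one further point, so the candidate set is nonempty and any $c$ in it satisfies the right-hand side while $P$ is not regular. Your proposed ``iterate the rotation by $2\Pi/m$'' step therefore cannot be completed --- there is no information to propagate to the other points --- and you were right to abandon it. Your fallback, namely that soundness is recovered only by explicitly running the periodicity test of Theorem \ref{hom:thm:algoDetectRegularity} on each candidate, is exactly how the lemma is consumed in Theorem \ref{hom:thm:testRegularityOdd}: only the forward direction (completeness of the candidate set) is ever used. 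So the honest conclusion is that the statement should be weakened to a one-way implication; neither your argument nor the paper's establishes the biconditional, because the biconditional is not true. Relative to the paper you have proved the same half by the same means and correctly identified that the other half is a defect of the statement rather than a provable claim.
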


\begin{proof}
The lemma follows from Lemma \ref{hom:lem:cIntersectionCircles} by setting 
$y=S^{n/m}(x,c,\rightturn)$ and $z=S^{n/m}(x,c,\leftturn)$.
\end{proof}

\begin{theorem}
\label{hom:thm:testRegularityOdd}
Given $3 \leq m\leq n$, given $P$ a configuration of $n$ distinct points. 
There exists an algorithm running in $0(n^3~log n)$ that detects if $P$ is $m$-regular,
and if so, it outputs the center of regularity.
\end{theorem}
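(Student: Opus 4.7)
The plan is to use the characterization given in the previous lemma, which says that $P$ is $m$-regular with center $c$ if and only if, fixing any $x \in P$, there exist two other points $y, z \in P$ such that $c$ lies in one of the four possible intersections of the $2\Pi/m$-circles $C_{xy}, C_{xy}^\prime, C_{xz}, C_{xz}^\prime$. This immediately suggests a ``guess and verify'' algorithm: enumerate all candidate centers by iterating over pairs $(y,z)$, then test each candidate using the regularity-check procedure of Theorem \ref{hom:thm:algoDetectRegularity}.

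More concretely, I would first pick an arbitrary point $x \in P$; by the characterization, if $P$ is $m$-regular with some center $c$, then $c$ is discovered by the algorithm when $y = S^{n/m}(x,c,\rightturn)$ and $z = S^{n/m}(x,c,\leftturn)$ are tried. Next, for each ordered pair $(y,z)$ of distinct points in $P \setminus \{x\}$, I would construct the two $2\Pi/m$-circles through $x$ and $y$ and the two through $x$ and $z$ (using Properties \ref{hom:prop:thales90} and \ref{hom:prop:2thalesNot90}, these are explicitly computable in $O(1)$ time from the chord, the inscribed angle $2\Pi/m$, and elementary trigonometry). This yields at most four candidate centers per pair $(y,z)$, namely the points of the set $(C_{xy} \cap C_{xz}) \cup (C_{xy}^\prime \cap C_{xz}) \cup (C_{xy} \cap C_{xz}^\prime) \cup (C_{xy}^\prime \cap C_{xz}^\prime)$. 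For each such candidate $c^\star$, I would invoke the procedure of Theorem \ref{hom:thm:algoDetectRegularity} to check whether $c^\star$ is indeed a center of regularity for $P$; if it is, and if the computed periodicity equals $m$, output $c^\star$ and halt. If no candidate passes, report that $P$ is not $m$-regular.

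Correctness is immediate from the previous lemma: the ``only if'' direction ensures that whenever $P$ is $m$-regular, the true center will be among the $O(n^2)$ candidates enumerated and will be accepted by the verification step; the ``if'' direction, together with Theorem \ref{hom:thm:algoDetectRegularity}, ensures that the algorithm never outputs a spurious center. For the running time, there are at most $(n-1)(n-2) = O(n^2)$ pairs $(y,z)$, each producing $O(1)$ candidate centers; each verification costs $O(n \log n)$ by Theorem \ref{hom:thm:algoDetectRegularity}. Hence the total running time is $O(n^3 \log n)$, as required.

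The main obstacle is not correctness (which is handed to us by the preceding lemma) nor complexity (which is a simple product), but rather ensuring that the candidate set is genuinely enumerated: one must verify that once $x$ is fixed, the specific pair $(y,z) = (S^{n/m}(x,c,\rightturn), S^{n/m}(x,c,\leftturn))$ actually appears in the enumeration, which is why we quantify over \emph{all} ordered pairs of points in $P \setminus \{x\}$ rather than trying to guess which pair is relevant. A minor care point is the degenerate case $\alpha = \Pi/2$ (here $m = 4$), where $C_{xy} = C_{xy}^\prime$ so the four intersection expressions collapse to one; the characterization lemma still applies because Property \ref{hom:prop:2thalesNot90} explicitly treats this as a special case.
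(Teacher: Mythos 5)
Your proposal matches the paper's own proof essentially verbatim: fix an arbitrary $x$, enumerate all pairs $(y,z)$ to generate $O(n^2)$ candidate centers from the intersections of the $2\Pi/m$-circles, and verify each candidate in $O(n\log n)$ time via Theorem \ref{hom:thm:algoDetectRegularity}, for a total of $O(n^3\log n)$. The additional remarks you make (explicitly checking that the computed periodicity equals $m$, and handling the collapsed case $\alpha=\Pi/2$) are sensible refinements but do not change the argument.
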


\begin{proof}
The algorithm is the following. We fix any robot $x \in P$.
Then, for every $y \in P \setminus \{x\}$, for every $z \in P \setminus \{x,y\}$, 
for every $c \in (C_{xy} \cap C_{xz}) \cup (C_{xy}^\prime \cap C_{xz}) \cup (C_{xy} \cap C_{xz}^\prime) \cup (C_{xy}^\prime \cap C_{xz}^\prime)$,
we test if $c$ is a center of regularity (Theorem \ref{hom:thm:algoDetectRegularity}, $O(n~log n)$ time).
Lemma \ref{hom:lem:cIntersectionCircles-bis} guarantees that if $P$ is $m$-regular, the test will be conclusive for at least one pair $(y,z)$ of robots.
The whole algorithm executes in $O(n^3~log n)$:
we browse all the possible pairs $(y, z)$, and for each pair we generate up to four candidates for the center of regularity, hence we have $O(n^2)$ 
candidates. Then, $O(n~log n)$ time is needed to test each candidate.
Note that our algorithm follows the same patterns as those presented in \cite{ACP05}: generating a restricted set of candidates (points) and 
testing whether each of them is a center of regularity.
\end{proof}

\begin{theorem}
\label{hom:thm:testRegularityOdd2}
Given $P$ a configuration of $n$ distinct points. 
There exists an algorithm running in $0(n^4~log n)$ steps that detects if $P$ is $m$-regular with $m\geq 3$,
and if so, it outputs $m$ and the center of regularity.
\end{theorem}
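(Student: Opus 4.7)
The plan is to reduce the general detection problem to the fixed-$m$ problem solved by Theorem \ref{hom:thm:testRegularityOdd}, by enumerating the candidate value of $m$. Concretely, I would iterate over all $m \in \{3, 4, \ldots, n\}$ (optionally restricted to divisors of $n$, since the periodicity of $SA(c)$ must divide $n$---this is a cosmetic optimization that does not change the worst-case bound). For each candidate $m$, I would invoke the subroutine of Theorem \ref{hom:thm:testRegularityOdd}, which either rejects the pair $(P,m)$ or returns a point $c$ that has already been verified (via Theorem \ref{hom:thm:algoDetectRegularity}) to satisfy $per(SA(c)) > 1$.

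If some iteration returns a point $c$, I would then run Theorem \ref{hom:thm:algoDetectRegularity} once more on $c$ to recover the true value $m^{\star} = per(SA(c))$, and output $(m^{\star}, c)$. If every iteration rejects, I would report that $P$ is not $m$-regular for any $m \geq 3$. The corollary on the unicity of the center of regularity guarantees that the returned $c$ is unambiguous.

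Correctness follows in both directions. If $P$ is $m^{\star}$-regular with $m^{\star} \geq 3$, then when the outer loop reaches $m = m^{\star}$, Theorem \ref{hom:thm:testRegularityOdd} guarantees that a center of regularity is returned. Conversely, the subroutine only ever returns a point that passes the $per(\cdot) > 1$ test, so any positive answer corresponds to a genuine regular configuration, and recomputing the periodicity on the returned center yields precisely $reg(P)$. The running time is immediate: $O(n)$ outer iterations, each costing $O(n^3 \log n)$ by Theorem \ref{hom:thm:testRegularityOdd}, plus one final $O(n \log n)$ call, for a total of $O(n^4 \log n)$.

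There is no real obstacle here, since the hard geometric work---generating a polynomially bounded pool of candidate centers through the $2\Pi/m$-circle construction and verifying each---has already been done in Theorem \ref{hom:thm:testRegularityOdd}. The only subtlety worth flagging is that when the subroutine is invoked with a ``wrong'' value of $m$, it might still succeed and return some valid center (because its internal check only enforces $per(SA(c)) > 1$, not $per(SA(c)) = m$); this is precisely why I recompute the periodicity on the returned center before emitting the final answer, rather than trusting the loop index $m$.
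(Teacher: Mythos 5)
Your proof is correct and follows essentially the same route as the paper: enumerate candidate values of $m$ (the paper restricts to divisors of $n$ greater than $2$) and invoke the fixed-$m$ algorithm of Theorem \ref{hom:thm:testRegularityOdd} for each, giving $O(n)\cdot O(n^3 \log n) = O(n^4 \log n)$. Your final recomputation of $per(SA(c))$ on the returned center is a small extra precaution the paper omits (it simply reports the loop index $m$), and it does tighten the argument, but it does not change the overall approach.
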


\begin{proof}
It suffices to generates all the divisors $m$ of $n$ that are greater than 2.
Then, for each $m$, we test if $P$ is $m$-regular using the algorithm of Theorem \ref{hom:thm:testRegularityOdd}).
When the test is conclusive, this algorithm return the center of regularity $c$, so we can output 
\textsc{Regularity $m$, Center $c$}.
If test was inconclusive for every generated $m$, we output \textsc{Not Regular}
\end{proof}

\begin{theorem}
\label{hom:thm:testRegularityOdd3}
Given $P$ a configuration of $n$ distinct points. 
There exists an algorithm running in $0(n^4~log n)$ steps that detects if $P$ is $m$-regular with $m\geq 2$,
and if so, it outputs $m$ and the center of regularity.
\end{theorem}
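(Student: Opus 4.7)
The plan is to observe that Theorem \ref{hom:thm:testRegularityOdd3} is essentially a packaging result: it combines the two detection procedures established in the preceding theorems to cover all regularity values $m \geq 2$. Since $m = 2$ is an even regularity (and is in fact the only even case not already picked up by the $m \geq 3$ procedure, should one prefer to use only it), and every $m \geq 3$ is handled by the algorithm of Theorem \ref{hom:thm:testRegularityOdd2}, together they exhaust the range $m \geq 2$.

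Concretely, I would describe the algorithm as follows. First, run the even-regularity detector of Theorem \ref{hom:thm:testRegularityEven}; if it succeeds, output its answer (regularity $m$ and center $c$). Otherwise, invoke the procedure of Theorem \ref{hom:thm:testRegularityOdd2} which tests every divisor of $n$ greater than or equal to $3$. If this second call returns a regularity $m$ together with a center $c$, output \textsc{Regularity $m$, Center $c$}. If both calls are inconclusive, output \textsc{Not Regular}.

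For correctness, I would argue: if $P$ is $m$-regular with $m \geq 2$, then either $m = 2$, in which case the even-regularity algorithm detects it (this is the content of Theorem \ref{hom:thm:testRegularityEven} applied to the case $m$ even, and $n$ must then be even since $m$ divides $n$), or $m \geq 3$, in which case Theorem \ref{hom:thm:testRegularityOdd2} detects it. Conversely, if neither call succeeds, then $P$ cannot be $m$-regular for any $m \geq 2$, since any such regularity would have been caught by one of the two sub-procedures. For the running time, the two calls cost $O(n \log n)$ and $O(n^4 \log n)$ respectively, giving a total of $O(n^4 \log n)$.

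There is essentially no obstacle here: the proof is just an assembly of two prior results, with a short correctness argument based on the dichotomy $m = 2$ versus $m \geq 3$. The only minor care point is to note that if $n$ is odd the even-regularity detector is either skipped or returns \textsc{Not Regular} trivially (since $m \mid n$ forces $m$ odd), and that using the odd-regularity detector alone would already cover all cases with $m \geq 3$, so no double counting or conflict between the two outputs arises.
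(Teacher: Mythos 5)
Your proposal is correct and follows essentially the same route as the paper: run the even-regularity detector of Theorem \ref{hom:thm:testRegularityEven} first, then fall back to the $m\geq 3$ detector of Theorem \ref{hom:thm:testRegularityOdd2}, with total cost dominated by the $O(n^4 \log n)$ second call. The only (immaterial) difference is that the paper restricts the second pass to the odd divisors of $n$, since the even ones were already handled, whereas you rerun it on all divisors $\geq 3$; this changes neither correctness nor the asymptotic bound.
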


\begin{proof}
We combine the algorithms of Theorems \ref{hom:thm:testRegularityEven} and \ref{hom:thm:testRegularityOdd2}.
First, we test if $P$ is $m$-regular for some even $m$ using the algorithm of Theorem \ref{hom:thm:testRegularityEven}  ($0(n~log n)$).
If so, we output $m$ and the center of regularity $c$ which are provided by the called algorithm.
Otherwise, we test odd regularity using the algorithm of Theorem \ref{hom:thm:testRegularityOdd2} ($0(n^4~log n)$)
but
by restricting the analysis to only the \emph{odd} divisors of $n$ (the even divisors were already tested).

\end{proof}
\section{Formation of a Series of Geometric Patterns: Lower Bound}
\label{hom:sec:formationImpossibility}

In this section we prove a necessary condition
that geometric series have to satisfy in order to be formable.
The condition relates three parameters: the number of robots in the system $n$, its homonymy $h$
and the symmetricity of the patterns to form. It is stated in Lemma \ref{hom:lem:necessaryFormationSeries}.

\begin{property}\cite{DFSY10} 
\label{hom:prop:symDividesN}
For any configuration $P$ of $n$ distinct robots, $sym(P)$ divides $n$.
\end{property}

\begin{lemma}
\label{hom:lem:relationHandM}
Let $P$ be a configuration of $n$ distinct robots with symmetricity $s$, \emph{i.e.} $sym(P)=s$.
For any divisor $d$ of $s$, if $h \leq \frac{n}{d}$, then for any pattern formation algorithm,
there exists an execution where all subsequent configurations $P^\prime$ satisfy
$sym(P^\prime)= k\cdot d$, $k >1$.
\end{lemma}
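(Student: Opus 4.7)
My plan is to cook up an initial labeled configuration and a scheduling strategy that jointly force the algorithm to preserve a $d$-fold rotational symmetry throughout the whole execution; this will immediately yield that $sym(P^\prime)$ is a multiple of $d$ at every reachable configuration~$P^\prime$.

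Since $d\mid s$, the $s$-symmetric configuration $P$ is invariant under the rotation $\rho$ of angle $2\pi/d$ around its center $c$ (which is its unique Weber point by Lemma~\ref{hom:lem:weberSym}), so the $n$ positions split into exactly $n/d$ orbits of size $d$ under $\langle\rho\rangle$. I will then fix the labels on $P$ so that the labeling itself is $\rho$-invariant: pick any surjection from the $n/d$ orbits onto $\{1,\dots,h\}$ and give every robot the label of its orbit. Such a surjection exists precisely because $h\le n/d$, and it uses exactly the required $h$ distinct identifiers, so the labeled configuration $P$ is $\rho$-invariant in both positions and labels.

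I will then run the fully synchronous scheduler (every robot activated at every cycle) and prove by induction on the cycle count that every reachable $P_t$ is $\rho$-invariant in both positions and labels. The heart of the argument is the inductive step: for two robots $r_1,r_2$ sharing a $d$-orbit, the local polar frames defined in Section~\ref{hom:subsec:polarCoordinate} have common origin $c$ and common unit of length (their equal distance to $c$), and their polar axes are related exactly by $\rho$. By the inductive hypothesis, $P_t$ is $\rho$-invariant, so $\mathcal V(r_1)=\mathcal V(r_2)$, and by construction $label(r_1)=label(r_2)$; hence the deterministic algorithm prescribes the same local destination for both robots, and unwinding into the frame centered at $c$ shows that their two global destinations differ by exactly $\rho$. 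Doing this orbit by orbit, $P_{t+1}$ is again $\rho$-invariant, and $c$ remains its SEC center by uniqueness of the SEC of a $\rho$-invariant point set. Labels do not change, so their $\rho$-invariance is maintained trivially, and the induction goes through.

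Since every reachable $P_t$ is $\rho$-invariant, every $\backsim$-class contains full $\rho$-orbits and hence has cardinality divisible by $d$, giving $sym(P_t)=k\cdot d$ as claimed. The principal obstacle is the careful bookkeeping of the local polar frames at each step: I must verify that the SEC center does not drift under the $\rho$-preserving dynamics and that the units of length at orbit-related robots remain equal throughout, both of which follow from rotational invariance but must be stated cleanly. A minor subtlety is the degenerate case in which an entire orbit collapses onto $c$ at once, creating a multiplicity point; by symmetry this affects all $d$ robots of the orbit simultaneously, which only strengthens the rotational symmetry around $c$ rather than destroying it, so it does not endanger the induction.
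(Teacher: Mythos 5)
Your proof is correct and takes essentially the same route as the paper's: a rotation-invariant labeling of the $n/d$ orbits under the $2\pi/d$ rotation about $c$ (possible exactly because $h\le n/d$), combined with a scheduler that activates orbit-mates simultaneously so that determinism preserves the $d$-fold symmetry in every subsequent configuration. The paper merely phrases the orbits as an explicit partition into concentric regular $d$-gons $T_{ij}$ and lets the adversary activate one such set at a time instead of running fully synchronously; the underlying invariant is identical.
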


\begin{proof}
The lemma holds trivially if $s=1$, hence we assume in the following that $s>1$.
According to Lemma \ref{hom:lem:setConcentricCircles}, there exists
a partition of $P$ into $x=n/s$ subsets $S_1, \ldots, S_x$ such that
the $s$ robots in each $S_i$ occupy the vertices of a regular convex polygon of 
$s$ sides whose center is $c$.

Now, partition each set $S_i$ into $\frac{s}{d}$ subsets $T_{i1}, \ldots, T_{(i\frac{s}{d})}$ with
$|T_{ij}|=d$ for each $i \in \{1, \ldots, x\}, j \in \{1, \ldots, \frac{s}{d}\}$. 
Each subset $T_{ij}$ is chosen in such a way that the $d$ robots belonging to it
are located in the vertices of a regular 
convex polygon of $d$ sides with center $c$. 
This choice is possible because $d$ is a divisor of $s=|S_i|$.
For example, let $r_1, \ldots, r_s$ be the robots of $S_i$ ordered according to some polar ordering around $c$.
$T_{i1}$ is the set of robots $\{r_1, r_{(\frac{s}{d}+1)}, r_{(\frac{2s}{d}+1)}, \ldots, r_{(\frac{(d-1)s}{d}+1)}\}$. 
Clearly, this subset defines 
a regular polygon of $d$ sides with center $c$.

There are total of  $\frac{s\cdot x}{d}= \frac{n}{d}$ subsets $T_{ij}$. So we have also 
a total of $\frac{n}{d}$ concentric regular polygons of $d$ sides.
What is important to notice now is that the robots in each $T_{ij}$ have the same view.

Since $h \leq \frac{n}{d}$, there exists a set of labels $|\mathcal{L}|=h$, and a labeling of robots in $P$ such that 
(1) The same label is assigned to the robots that belong to the same subset $T_{ij}$.
(2) For each label $l \in \mathcal{L}$, there exists a robot $r_i \in P$ such that $l$ is the label of $r_i$.

Since we assume that algorithms are deterministic, the actions taken by robots at each activation depend solely on 
they observed view and their identity (label).
Hence, two robots having the same view and the same label will take the same actions if they are activated simultaneously.
Therefore, the adversary can guarantee that the network will always 
have a symmetricity $\geq d$ 
by activating each time the robots
that belong to the same $T_{ij}$ together. 
This way, we are guaranteed to have all the subsequent configurations
that consists of a set of $\frac{n}{d}$ concentric 
regular polygons of $d$ sides.
That is, all subsequent configuration have a symmetricity that is a multiple of $d$.
This proves the lemma.
\end{proof}

The following lemma states a necessary condition for a geometric figure 
$P_j$ to be formable starting from $P_i$.

\begin{lemma}
\label{hom:lem:necessaryFormationPattern}
If the current configuration $P_i$ has symmetricity $sym(P_i)=s$,
the configuration $P_j$ with symmetricity $sym(P_j)=s^\prime$
is formable only if  (1) $size(P_j)=size(P_i)$ and (2)
$h > \frac{n}{\textsc{sncd}(s, s^\prime)}$ where \textsc{sncd} (read smallest non common divisor) is equal to
the smallest $x$ that divides $s$ but not $s^\prime$ if any, $n+1$ otherwise.
\end{lemma}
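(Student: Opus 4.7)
The plan is to dispatch the two conditions separately, with condition (1) being essentially bookkeeping and condition (2) being the real content.

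For condition (1), I would simply appeal to the problem definition: a pattern is a set of exactly $n$ distinct points and $L(C)$, being the set of positions occupied by the $n$ robots (located at distinct positions in the setting considered here), has cardinality $n$. Hence every formable pattern must satisfy $size(P) = n$, and in particular $size(P_i) = size(P_j) = n$.

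For condition (2), I would argue by contradiction. Suppose $h \leq n/\textsc{sncd}(s, s')$ (the case $\textsc{sncd}(s, s') = n+1$ is vacuous since then $h > n/(n+1)$ reduces to $h \geq 1$, which always holds; so we may assume a genuine smallest divisor $d$ of $s$ that does not divide $s'$ exists). Set $d = \textsc{sncd}(s, s')$. By definition $d \mid s$, so Lemma \ref{hom:lem:relationHandM} applies to $P_i$ with this divisor and yields an adversarial execution along which every subsequent configuration $P'$ satisfies $sym(P') = k\cdot d$ for some integer $k$; in particular $d \mid sym(P')$. Now the pattern $P_j$ has symmetricity $s'$, and since $d \nmid s'$ by definition of \textsc{sncd}, no reachable configuration along this execution can have symmetricity equal to $s'$, hence none can be isomorphic to $P_j$ (because symmetricity is invariant under translation, rotation and uniform scaling, as it is computed purely from the equivalence relation $\backsim$ on views). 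This contradicts the assumption that the algorithm forms $P_j$.

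The only subtle point to justify cleanly is the invariance of symmetricity under pattern-isomorphism; I would note this in one sentence by observing that the local polar coordinate system entering the definition of $\mathcal{V}(r)$ is itself defined from \textsc{SEC} and the position of $r$, so it transforms covariantly with translations, rotations and scalings, leaving the equivalence classes $[r]$ (and thus their minimum cardinality) unchanged. Everything else is a direct plug-in of Lemma \ref{hom:lem:relationHandM} and the definition of \textsc{sncd}, so no step should pose any real difficulty.
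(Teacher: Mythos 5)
Your proposal is correct and follows essentially the same route as the paper: a direct contradiction argument that dispenses with the $\textsc{sncd}(s,s')=n+1$ case via $h\geq 1$, then invokes Lemma \ref{hom:lem:relationHandM} with the divisor $d=\textsc{sncd}(s,s')$ to obtain an adversarial execution in which every configuration's symmetricity is a multiple of $d$, hence never equal to $s'$. Your explicit treatment of condition (1) and of the isomorphism-invariance of symmetricity are small additions the paper leaves implicit, but they do not change the argument.
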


\begin{proof}
Assume towards contradiction that 
(1) $h \leq \frac{n}{\textsc{sncd}(s, s^\prime)}$ and (2) $P_j$ is formable.
Note that since $h \geq 1$, (1) implies that $\textsc{sncd}(s,s^\prime) \neq n+1$. 
Otherwise we would have $h \leq 0$, contradiction.
By definition, $\textsc{sncd}(s,s^\prime) \neq n+1$ implies that
 $\textsc{sncd}(s,s^\prime) =t$ divides $s$ but not $s^\prime$.

According to Lemma \ref{hom:lem:relationHandM}, for any algorithm,
there exists an execution starting from $P_i$ where all subsequent configurations $P^\prime$ satisfy
$sym(P^\prime)= k\cdot t$, $k >1$. 
But $sym(P_j)=s^\prime$ is not multiple of $t$, hence $P_j$ is never reached in this execution.
This means that $P_j$ is not formable starting from $P_i$, which contradicts (2).
Hence, the lemma is proved.
\end{proof}

Now, we are ready to state the necessary condition for formation of geometric series.
It relates the symmetricity of its constituent patterns and the homonymy of the system.

\begin{lemma}
\label{hom:lem:necessaryFormationSeries}
A cyclic series of distinct patterns 
$\langle P_1, P_2, ..., P_m \rangle^{\infty}$ each of size $n$ is formable only if 
$$\forall i \in \{1, \ldots, m\}: 
h > \frac{n}{\textsc{sncd}(sym(P_i), sym(P_{i+1}))}$$ 
\end{lemma}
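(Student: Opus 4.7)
The plan is to reduce the claim to Lemma \ref{hom:lem:necessaryFormationPattern}, which already establishes the necessary condition for a single transition between two patterns of prescribed symmetricities. The cyclic, infinitary nature of the problem means that a formation algorithm must in particular implement the transition $P_i \to P_{i+1}$ for every consecutive pair (including the wrap-around pair $(P_m, P_1)$), and so the previous lemma must hold for each such pair.

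First I would make explicit the invariance of symmetricity under pattern isomorphism. The relation $\backsim$ on robots is defined purely from views in the polar coordinate system centered at the center of $\textsc{SEC}$, which is itself translation-, rotation-, and scale-equivariant; hence $sym$ is invariant under the translations, rotations and uniform scalings that define when two configurations are isomorphic. Consequently, any configuration whose position set is isomorphic to a pattern $P$ has symmetricity $sym(P)$.

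Next I would argue the reduction itself. Fix any index $i \in \{1, \ldots, m\}$, with indices taken modulo $m$ so that $P_{m+1} = P_1$. By the problem definition, any valid algorithm produces, for every time $\tau$, a later time $\tau_i \geq \tau$ at which the robots form $P_i$, and then in turn a later time $\tau_{i+1} \geq \tau_i$ at which they form $P_{i+1}$. Between these two instants the system transitions from a configuration whose symmetricity equals $sym(P_i)$ (by the invariance above) to one whose symmetricity equals $sym(P_{i+1})$. Applying Lemma \ref{hom:lem:necessaryFormationPattern} with $s = sym(P_i)$ and $s' = sym(P_{i+1})$ yields
\[
h > \frac{n}{\textsc{sncd}(sym(P_i), sym(P_{i+1}))}.
\]
Since $i$ was arbitrary, this inequality holds for every consecutive pair in the cyclic series, which is precisely the statement of the lemma.

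The main step carrying the content is really the reduction to the single-transition lemma; the only subtlety is making sure that Lemma \ref{hom:lem:necessaryFormationPattern} can be invoked for an arbitrary configuration isomorphic to $P_i$ (not just $P_i$ on the nose), and this is handled by the invariance remark. No further obstacle is expected: once the reduction is articulated, the proof is a direct application of the previous lemma to each of the $m$ consecutive pairs in the cyclic series.
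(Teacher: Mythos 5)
Your proof is correct and takes essentially the same route as the paper, which simply states that the lemma ``follows from'' the single-transition necessary condition (Lemma \ref{hom:lem:necessaryFormationPattern}) --- note the paper's proof actually contains a typo, citing the lemma itself rather than Lemma \ref{hom:lem:necessaryFormationPattern}. Your version just makes explicit the two details the paper leaves implicit (invariance of $sym$ under isomorphism and the application to every consecutive pair including the wrap-around), which is a welcome elaboration rather than a different approach.
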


\begin{proof}
Follows from Lemma \ref{hom:lem:necessaryFormationSeries}.
\end{proof}

\section{Formation of Series of Patterns: Upper Bound}
\label{hom:sec:formationPossibility}
In this section, we present an algorithm that allows robot to form a series of patterns,
provided that some conditions about homonymy and symmetricity are satisfied. 
The result is stated in the following theorem:

\begin{theorem}
\label{hom:thm:formationPossibility}

A cyclic series of distinct patterns 
$\langle P_1, P_2, ..., P_m \rangle^{\infty}$ each of size $n$ is formable if and only if 
$$\forall i \in \{1, \ldots, m\}: 
h > \frac{n}{\textsc{sncd}(sym(P_i), sym(P_{i+1}))}$$ 
\end{theorem}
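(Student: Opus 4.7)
My plan is to address the two directions separately. The necessary direction is already established by Lemma \ref{hom:lem:necessaryFormationSeries}, so the bulk of the work is to prove sufficiency by exhibiting an explicit algorithm. Assuming $h > n/\textsc{sncd}(sym(P_i), sym(P_{i+1}))$ for every $i$, I would construct an algorithm that, starting from any configuration isomorphic to some $P_i$, drives the system through a finite transition sequence reaching $P_{i+1}$, and then iterates.

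Each transition would proceed in three stages. First, a \emph{recognition} stage in which the robots detect, using the regularity-detection procedures of Theorems \ref{hom:thm:algoDetectRegularity} and \ref{hom:thm:testRegularityOdd3}, that the current configuration is isomorphic to $P_i$, and locate the center of regularity $c$; by Lemma \ref{hom:lem:weberReg} this point is also the Weber point, hence unique and preserved under radial movements. Second, a \emph{symmetry-breaking} stage that reduces the symmetricity of the configuration from $s = sym(P_i)$ to a divisor of $s' = sym(P_{i+1})$. Third, a \emph{destination} stage in which each robot moves along the ray from $c$ to its pre-assigned target in $P_{i+1}$; the invariance of the Weber point then ensures that $c$ remains the common reference point throughout the motion.

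The main obstacle, and the crux of the argument, is the symmetry-breaking stage. Let $t = \textsc{sncd}(s, s')$, so that $t \mid s$ and $t \nmid s'$, and by hypothesis $h > n/t$. Viewing the $s$-symmetric configuration through the coarser $t$-fold rotation, Lemma \ref{hom:lem:setConcentricCircles} partitions its points into $n/t$ orbits of view-equivalent robots. A straightforward adaptation of Lemma \ref{hom:lem:2robotsEquivDistinctLabels} at ``symmetricity level'' $t$ then forces the existence of two view-equivalent robots carrying distinct labels. The algorithm exploits this by instructing only the robot with the smaller label (within one such pair chosen canonically) to move slightly along its radius, thereby strictly breaking the $t$-symmetry. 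Iterating this argument a bounded number of times brings the symmetricity to a value that divides $s'$, enabling the destination stage to proceed without further asymmetry issues.

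A secondary difficulty I would have to handle carefully is that, because robots are oblivious, each stage must be unambiguously identifiable from the current snapshot alone. I would encode stages geometrically: radial displacements used for symmetry-breaking would be chosen small enough that intermediate configurations stay inside a dedicated annulus around $c$, clearly distinguishable from any $P_j$ in the series, while the destination stage places the robots on a canonical scaled copy of $P_{i+1}$ centered at $c$. Finally, I would verify that no intermediate configuration is accidentally isomorphic to some $P_j$, so that progress through the series is unambiguous; this, together with the correctness of the three stages, would yield the desired formation algorithm and complete the sufficiency proof.
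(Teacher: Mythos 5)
Your overall architecture matches the paper's: necessity from Lemma \ref{hom:lem:necessaryFormationSeries}, and sufficiency via label-driven symmetry breaking around the Weber point, with radial moves preserving the center and a geometric encoding (your ``annulus'', the paper's $\textsc{Stretch}$) telling oblivious robots which pattern is next. However, your symmetry-breaking step has a genuine gap: you instruct ``only the robot with the smaller label (within one such pair chosen canonically)'' to move. Identifiers are \emph{invisible} in this model --- a robot knows only its own label and the snapshot of positions. It therefore cannot determine which pairs of view-equivalent robots carry distinct labels, nor whether its own label is the smaller one in such a pair, so the rule you describe is not computable from a robot's local information. Lemma \ref{hom:lem:2robotsEquivDistinctLabels} guarantees such a pair \emph{exists}, but gives no robot the means to recognize that it belongs to it. The paper circumvents this by electing an entire equivalence class (the robots closest to $c$ with minimal view --- a purely positional criterion) and having \emph{each} elected robot jump to a radius that is a function of its \emph{own} identifier, namely $(t\cdot(h+1)+id-1)\cdot rad$; distinct labels then automatically land at distinct radii and the symmetry drops, with no robot ever needing to see another's label. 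A liveness argument (fair scheduling eventually elects the class containing two distinct labels) replaces your ``bounded number of iterations''.

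Two secondary weaknesses: your ``move slightly along its radius'' gives no guarantee that the perturbed configuration remains classifiable (the paper's large multiplicative jumps are what make types $\mathcal{A}$/$\mathcal{B}$/$\mathcal{C}$/$\mathcal{D}$ and their stretches unambiguous, and what make Lemmas \ref{hom:lem:ToftypeB}--\ref{hom:lem:ToftypeAorD} work under a scheduler that may activate only part of the elected class); and your destination stage, where each robot moves to a ``pre-assigned target'', is not available when the residual symmetricity is some $x>1$ dividing $s'$ --- view-equivalent robots cannot be assigned distinct targets, and the final placement genuinely requires the $\mathcal{A}$- and $\mathcal{B}(m)$-formation results imported from \cite{DFSY10} (Lemmas \ref{hom:lem:A2anyPattern} and \ref{hom:lem:B2somePatterns}).
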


The only if part was proved in Section \ref{hom:sec:formationImpossibility}. The remaining of the 
current section is devoted to the proof of the if part of the theorem.

\subsection{Intermediate Configurations}

During the formation of a pattern $P_i$ (starting from $P_{i-1}$), the network 
may go through several intermediate configurations. We define in the following four
classes of intermediate patterns $\mathcal{A},\mathcal{B}, \mathcal{C}$ and $\mathcal{D}$. 
Each one of them encapsulate some information that allows
robots to unambiguously determine which pattern the network is about to form.
This information is provided by a function, $\textsc{Stretch}$, which we define separately for each 
intermediate pattern.

\begin{definition}[Configuration of type $\mathcal{A}$]
A configuration $P$ of $n$ points is of type $\mathcal{A}$ (called $BCC$ in \cite{DFSY10})  
if the two following conditions are satisfied (refer to Figure \ref{hom:fig:patternTypeA}):
\begin{enumerate}
\item there exists a point $x \in P$ such that
the diameter of $SEC_1=SEC(P)$ is a least ten times the diameter of $SEC_2=SEC(P \setminus \{x\})$.
\item $SEC_1$ and $SEC_2$ intersect at exactly one point called the \emph{base-point} (BP).
\end{enumerate}
The point $x$ is called the \emph{pivot} whereas the point on $SEC_2$ directly opposite $BP$ is called
the \emph{frontier point} $(FP)$.

$\textsc{Stretch}(P)$ is equal to $\lfloor\frac{rad(SEC_1)}{(h+1) \cdot rad(SEC_2)}\rfloor$.
\end{definition}

\begin{figure}
\label{hom:fig:patternTypeA}
\begin{center}

\epsfig{figure=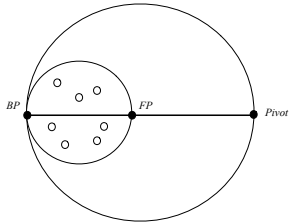,width=4cm}\\
%\textit{Configuration of type }
\caption{Figure of type $\mathcal{A}$.}
\end{center}
\end{figure}

%%% Pattern B %%%

\begin{definition}[Configuration of type $\mathcal{B}(m)$]
A configuration $P$ of $n$ distinct robots is of type  $\mathcal{B}(m)$ (adapted from $SCC[m]$ in \cite{DFSY10}) with $1<m<n$, 
if the following conditions are satisfied:
\begin{enumerate}
\item $SEC_1=SEC(P)$ has exactly $m$ points on its circumference which form a regular convex polygon with $m$ sides.
\item Let $SEC_2$ be the SEC of the robots that are not on the $SEC_1$, \emph{i.e.}
$SEC_2=SEC(P\setminus \{r \in P ~|~ r \text{ is on }SEC_1\})$.
$SEC_1$ and $SEC_2$ are concentric such that $rad(SEC_1) > 10 \cdot rad(SEC_2)$.
\end{enumerate}

$\textsc{Stretch}(P)$ is equal to $\frac{1}{2} \cdot \lfloor\frac{rad(SEC_1)}{(h+1) \cdot rad(SEC_2)}\rfloor$.
It can be easily checked that a given configuration cannot be of both types $\mathcal{A}$ and
$\mathcal{B}$ (their respective $SEC_2$ do not intersect).
\end{definition}

%%% Pattern C %%%

\begin{definition}[Configuration of type $\mathcal{C}(m)$]
A configuration $P$ of $n$ distinct robots is of type $\mathcal{C}(m)$ with $1<m<n$ if 1) it is not of type $\mathcal{B}$
and 2) it is $m$-symmetric (ref. Definition \ref{hom:def:symmetricity}).

$\textsc{Stretch}(P)$ is computed as follows.
Since $P$ is $m$-symmetric with $m>1$, 
there exists a partition of $P$ into $x=n/m$ subsets $S_1, \ldots, S_x$ such that
each of them is a convex regular polygon of $m$ sides with center $c$ (Lemma \ref{hom:lem:setConcentricCircles}).
This means that each $S_i$ defines a circle with center $c$.
Assume w.l.o.g that $\forall i \in \{1 \ldots x-1\}: rad(S_i) \leq rad(S_{i+1})$.
$\textsc{Stretch}(P) = Max\{\lfloor\frac{rad(S_{i+1})}{(h+1) \cdot rad(S_i)}\rfloor ~|~ i \in \{1 \ldots x-1\}\}$.

Clearly, a configuration of type $\mathcal{C}$ cannot be of type $\mathcal{A}$ since the former is symmetric while the latter is not.
\end{definition}

\begin{definition}[Configuration of type $\mathcal{D}(m)$]
A configuration $P$ of $n$ distinct robots is of type $\mathcal{D}(m)$ 
with $1<m<n$ if (1) Points are not all on the same line, (2) $P$ is not of type $\mathcal{B}$
and (3) $P$ is $m$-regular but not symmetric ($Sym(P)=1$).

$\textsc{Stretch}(P)$ is computed as follows. Let $c$ be the center of regularity.
$c$ can be computed in polynomial time using the algorithm of Section \ref{hom:sec:regular} (Theorem \ref{hom:thm:testRegularityOdd3}).
Then for each $r_i \in P$, we compute $|c, r_i|$, its distance from $c$.
Assume w.l.o.g. that $\forall i \in \{1 \ldots x-1\}: |c, r_i| \leq |c, r_{i+1}|$.
$\textsc{Stretch}(P) = Max\{\lfloor\frac{|c, r_{i+1}|}{(h+1) \cdot |c, r_i|}\rfloor ~|~ i \in \{1 \ldots x-1\}\}$.
Note that the stretch of configurations of type $\mathcal{C}(m)$ is a particular case
of that of type $\mathcal{D}(m)$, but since the former configurations are symmetric, we can 
compute their stretch without resorting to the computation of the center of regularity.
\end{definition}

\paragraph{\textbf{Decoding the stretch}}
Let $F$ be a one-to-one function \cite{DFSY10} that maps each pattern $P_i$ to a real number $t_i=F(P_i)$.
If there is a pattern $P_i$ that is of type $\mathcal{A}, \mathcal{B}, \mathcal{C}$ or $\mathcal{D}$,
we exclude the value $\textsc{Stretch}(P_i)$ from the domain of $F$.
To simplify the proofs, we assume that $F(P_i)>10$ for any $P_i$.
When robots are about to form the pattern $P_i$, they use intermediate configurations with stretch $t_i$.
By computing the stretch, robots can unambiguously identify which configuration they are about to form $(F^{-1}(t_i))$.

%%%% TRANSITIONS %%%%%%

\subsection{Transitions between Configurations}
In this section we describe some algorithms that describe some transformations between 
patterns.

\begin{lemma}
\label{hom:lem:D2A}
Starting from any configuration of type $\mathcal{D}$ with stretch $t$, there exists an algorithm
that builds a configuration of type $\mathcal{A}$ with the same stretch.
\end{lemma}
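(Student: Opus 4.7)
The plan is to exploit the uniqueness guaranteed by $sym(P)=1$: in a type $\mathcal{D}$ configuration, some equivalence class under $\backsim$ is a singleton, so there is a canonically identifiable robot $r^*$. The algorithm will move $r^*$, and $r^*$ alone, far away from the remaining robots so that it becomes the pivot of a type $\mathcal{A}$ configuration with the prescribed stretch $t$, while the other robots stand still and preserve the inner SEC.

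Concretely, using the snapshot, each robot first computes the center of regularity $c$ via Theorem \ref{hom:thm:testRegularityOdd3}, then identifies $r^*$ as the robot whose view is lexicographically smallest among the singleton classes of $\backsim$. Next it computes $SEC_2 = SEC(P \setminus \{r^*\})$, with center $O$ and radius $R$, and picks canonically a robot $p$ on the boundary of $SEC_2$ (for instance, the one maximizing $\sphericalangle(r^*, O, p)$, with a lexicographic tiebreak). Finally, it sets the destination $Q$ for $r^*$ to be the point on the ray from $O$ in the direction opposite to $p$, at distance $d = 2t(h+1)R$. Only the robot whose own view matches that of $r^*$ jumps to $Q$; every other robot stays. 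Atomicity of the Move phase then ensures the transition from $P$ to the new configuration $P'$ happens within a single activation of $r^*$.

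For correctness I would check the type $\mathcal{A}$ conditions directly. Since only $r^*$ moved, $SEC(P' \setminus \{r^*\}) = SEC_2$ is unchanged, so the inner circle retains center $O$ and radius $R$. The circle with diameter $[Q, p]$ has radius $(d+R)/2$, contains $SEC_2$ (hence all of $P \setminus \{r^*\}$), and has both $Q$ and $p$ on its boundary, so it must coincide with $SEC_1 := SEC(P')$ and is internally tangent to $SEC_2$ at $p$, which then plays the role of the base-point. The assumption $F(P_i) > 10$ gives $t > 10$, whence $d \geq 19R$ and $diam(SEC_1) \geq 10 \cdot diam(SEC_2)$, establishing condition (1) of type $\mathcal{A}$; condition (2) follows from the tangency. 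Finally, $\textsc{Stretch}(P') = \lfloor (d+R)/(2(h+1)R) \rfloor = \lfloor t + 1/(2(h+1)) \rfloor = t$, as required.

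The main obstacle I expect is the canonical identification of $r^*$ and of $p$ from the snapshot, in such a way that every activated robot agrees on the single move to be performed; once this combinatorial bookkeeping is fixed, the geometric content reduces to the direct computation above. A secondary concern, that several robots may be activated in the same cycle, is resolved automatically by the fact that the algorithm prescribes a non-trivial move only for the uniquely identifiable $r^*$, so no conflicting concurrent moves can occur.
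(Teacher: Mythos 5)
Your construction is essentially the paper's: both exploit $sym(P)=1$ to elect the robot with the unique (e.g.\ lexicographically minimal) view, compute $SEC(P\setminus\{r^*\})$, and send the elected robot far out along a line through that circle's center so that it becomes the pivot of a type $\mathcal{A}$ configuration with the prescribed stretch; your verification that the circle with diameter $[Q,p]$ is the new SEC (two diametrically opposite points of $P'$ on its boundary, all of $P'$ inside) and that $\textsc{Stretch}(P')=t$ is correct, and in fact more carefully calibrated than the paper's own formula $|c,x|=t(h+1)\cdot|c,r_2|$, which as written appears to yield a stretch of roughly $t/2$. The one substantive divergence is that the paper splits into two cases according to whether $SEC_2$ carries two diametrically opposite robots, precisely so as to end up with robots occupying \emph{both} the base-point and the frontier point (in the second case an additional robot is subsequently sent to the frontier point); your construction guarantees a robot only at the base-point $p$. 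Under the literal definition of type $\mathcal{A}$ in this paper (conditions (1) and (2) are purely geometric) this is not a gap, but the downstream step (Lemma \ref{hom:lem:A2anyPattern}, imported from the BCC machinery of the cited work) presumably relies on the frontier point being occupied, so you should either append the paper's extra move --- once the pivot is in place, elect a robot other than the pivot and the base-point robot and send it to the frontier point --- or justify that an occupied frontier point is not required.
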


\begin{proof}
Let $P$ be the initial configuration. Since $P$ is of type $\mathcal{D}$, it holds that $sym(P)=1$.
Hence we can elect a leader, \emph{i.e.} a robot whose view is unique, let it be $r$.
The algorithm work by making $r$ move towards the pivot position of the target $\mathcal{A}$ configuration
with appropriate stretch.
This is done as follows.
First, $r$ computes the stretch $t=\textsc{Stretch}(P)$.
Then it consider the set of points $Q=P \setminus \{r\}$ and compute their $SEC$ and its center $c$.
Here, we distinguish between two cases:
\begin{enumerate}
\item 
There exists two robots $r_2$ and $r_3$ of $SEC$ that are directly opposite of each others. 
In this case, $r_1$ jumps to the pivot location that makes $r_2$ and $r_3$ occupy the frontier-point 
and the base-point respectively. 
Formally, $r_1$ jumps to a point $x$ located at the line $(r_3, r_2)$
such that $r_2$ lies between $r_3$ and $x$ 
with $|c, x|= (t \cdot (h+1)) \cdot |c, r_2|$.
%with $|r_3, x|=\frac{(h+1) \cdot t}{2} |r_3, r_2|/2$.
Let $P^\prime$ the obtained configuration. 
Clearly, $P^\prime$ is of type $\mathcal{A}$ 
and $\textsc{Stretch}(P^\prime)=t$.

\item Otherwise, $r_1$ chooses any point $r_2$ on $SEC$ and jumps to the pivot position with 
appropriate stretch such that $r_2$ becomes the base point. Then, any robot distinct from $r_1$ and $r_2$ 
will jump to the frontier point.
\end{enumerate}
In both cases, we obtain a configuration of type $\mathcal{A}$ with the required stretch.
\end{proof}

\begin{lemma}
\label{hom:lem:P2A}
Starting from any configuration of type $P_i$ with stretch $sym(P_i)=1$, there exists an algorithm
that builds a configuration of type $\mathcal{A}$ with stretch $F(P_{i+1})$.
\end{lemma}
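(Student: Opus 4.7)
The plan is to mirror the construction of Lemma~\ref{hom:lem:D2A}, observing that the only property of type-$\mathcal{D}$ configurations invoked there is $sym = 1$, which we also have by hypothesis. So the approach is to designate a single leader and let it jump to a carefully chosen pivot position; the only modification relative to Lemma~\ref{hom:lem:D2A} is the target stretch used when computing the jump distance.

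First I would observe that $sym(P_i) = 1$ guarantees the existence of a unique robot $r$ whose view differs from every other robot's view, and that this robot can elect itself leader. Upon activation, $r$ computes the target stretch $t = F(P_{i+1})$ and the circle $SEC(Q)$, where $Q = P_i \setminus \{r\}$, with center $c$ and radius $\rho$. Following the two cases of Lemma~\ref{hom:lem:D2A}, the leader then jumps to a point $x$ at distance $t \cdot (h+1) \cdot \rho$ from $c$: if two robots $r_2, r_3 \in Q$ lie on a common diameter of $SEC(Q)$, then $r$ chooses $x$ on the line $(r_3, r_2)$ beyond $r_2$, making $r_2$ the base point and $r_3$ the frontier point in a single atomic move; otherwise, $r$ picks any $r_2$ on $SEC(Q)$, jumps to the corresponding pivot position so that $r_2$ becomes the base point, and a later activation drives some other robot to the frontier point.

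The main obstacle I anticipate is ensuring that every intermediate configuration along this trajectory still satisfies $sym = 1$, so that the leader remains consistently identifiable, and that the final configuration really belongs to type $\mathcal{A}$ with stretch exactly $t$ --- in particular that $SEC_1$ and $SEC_2$ intersect at a unique base point. The first concern is handled by the fact that only the leader moves and that its destination is placed strictly outside $SEC(Q)$ by the factor $t(h+1) \ge 10(h+1)$, which rules out any accidental rotational symmetry; the second is handled by a direct computation of $rad(SEC_1)/rad(SEC_2)$ given the chosen jump distance. A minor additional check is needed in the second case to argue that, once $r$ has landed at $x$, the robot chosen for the frontier-point jump can be deterministically singled out from the others in $Q$ via its (now unique) view in the $\mathcal{A}$-in-progress configuration.
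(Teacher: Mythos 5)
Your proposal takes essentially the same approach as the paper, whose entire proof of this lemma is the single sentence ``similar to that of Lemma~\ref{hom:lem:D2A}, by replacing $t$ with $F(P_{i+1})$''; you faithfully reproduce that construction (leader election from $sym(P_i)=1$, then the two-case jump to the pivot at distance $F(P_{i+1})\cdot(h+1)\cdot rad(SEC(Q))$ from $c$), and in fact supply more justification than the paper does. The only quibble is a swapped label in your first case: with $x$ placed on the line $(r_3,r_2)$ beyond $r_2$, it is $r_2$ that becomes the frontier point and $r_3$ the base point, not the reverse.
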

\begin{proof}
The proof is similar to that of Lemma \ref{hom:lem:D2A}, by replacing $t$ with $F(P_{i+1})$. 
\end{proof}

The following two lemmas are from \cite{DFSY10}.

\begin{lemma}
\label{hom:lem:A2anyPattern}
Starting from any configuration of type $A$, it is possible to form any single pattern.
\end{lemma}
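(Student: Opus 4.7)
The plan is to exploit the structural rigidity of a type $\mathcal{A}$ configuration to elect a unique leader, establish a common coordinate system, and then shepherd robots into place one at a time while preserving the type $\mathcal{A}$ property throughout. Since $diam(SEC_1) \geq 10 \cdot diam(SEC_2)$, the pivot $x$ is the only robot outside $SEC_2$ and in particular the only robot whose distance from every other robot exceeds the diameter of $SEC_2$; thus $x$ has a view that is distinguishable from every other view, and all robots can agree on the identity of the pivot, the base-point $BP$ (the unique intersection of $SEC_1$ and $SEC_2$), and the frontier point $FP$ (the point of $SEC_2$ diametrically opposite $BP$). The triple $(x, BP, FP)$ determines a common orthonormal frame shared by all robots, overcoming disorientation.

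Using this frame, I would fix an arbitrary ordering $q_1, \ldots, q_n$ of the target points of the pattern $P$ to be formed, reserving $q_n$ for the final destination of the pivot. The $n-1$ non-pivot robots are ordered by their view in the common frame (say clockwise from $BP$ around the center of $SEC_2$, breaking ties by distance to the center); let this ordering be $r_1, \ldots, r_{n-1}$. The algorithm then proceeds in $n-1$ rounds. In round $i$, only $r_i$ is allowed to move: $r_i$ is the unique non-pivot robot still lying strictly inside $SEC_2$ (or still matched by the ordering), and it computes and moves to the position in the target frame corresponding to $q_i$, scaled so that $q_i$ lies inside $SEC_2$. A robot that has already moved is detected because it occupies one of the designated target positions $q_1, \ldots, q_{i-1}$; a robot that has not yet moved is detected because it is still on $SEC_2$ at one of the original polygon positions. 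Since each movement takes place strictly inside $SEC_2$, the pivot remains the unique robot outside $SEC_2$ and the type $\mathcal{A}$ property (and in particular the identities of $x$, $BP$ and $FP$) is preserved. Once all $n-1$ non-pivot robots are in position, the pivot itself recognises this and moves to $q_n$, completing $P$.

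The main technical obstacle is the symmetry-breaking among the non-pivot robots at the very first step: initially the robots on $SEC_2$ may occupy a symmetric configuration in which many of them share the same view in the frame induced by $(x, BP, FP)$. The common frame fixed by the pivot, $BP$ and $FP$ breaks this symmetry, because every robot can be assigned a unique label from its polar coordinates in that frame (distance from the center of $SEC_2$ plus angle measured from $BP$ clockwise), and two robots cannot share such coordinates since robot positions are distinct. This is precisely why the pivot is needed, and it is exactly the device used in the $BCC$ construction of \cite{DFSY10}; once the ordering is fixed, rounds proceed in a well-defined way regardless of the scheduler, because the description of $r_i$ in round $i$ is based on the current configuration, not on any implicit history, so obliviousness is not an obstacle. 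Hence starting from any type $\mathcal{A}$ configuration, any single target pattern $P$ of size $n$ is formable.
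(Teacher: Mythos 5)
The paper offers no proof of this lemma: it is imported verbatim from \cite{DFSY10} (a type $\mathcal{A}$ configuration is the $BCC$ configuration there), so there is no internal argument to compare yours against. Your reconstruction follows exactly the strategy of that cited construction --- use the isolated pivot to break all symmetry, derive a common frame from the pivot, $BP$ and the center of $SEC_2$, move the $n-1$ non-pivot robots one at a time into a copy of the target pattern placed inside $SEC_2$, and move the pivot last --- and in outline it is sound.

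Two steps are under-specified and are precisely where the real work of the $BCC$ argument lies. First, your claim that moving robots ``strictly inside $SEC_2$'' preserves the type $\mathcal{A}$ property is not immediate: $SEC_2$ is defined as $SEC(P\setminus\{x\})$ of the \emph{current} configuration, so if the non-pivot robots all retreat into the interior, the new $SEC_2$ shrinks and may no longer be internally tangent to $SEC_1$ at $BP$, violating condition (2) of the definition. You must either place the scaled target so that its own smallest enclosing circle coincides with the original $SEC_2$ (keeping $BP$ fixed throughout), or argue that the pivot remains uniquely identifiable and the frame re-derivable even after the tangency is lost. Second, the oblivious detection of ``who has already moved'' needs a convention: non-pivot robots of a type $\mathcal{A}$ configuration need not lie on the boundary of $SEC_2$, let alone at ``polygon positions,'' and if a target point of $P$ happens to coincide with an original robot position, the round index read off the snapshot becomes ambiguous. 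Both issues are fixable (and are fixed in \cite{DFSY10}), but as written your proof asserts rather than establishes them.
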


\begin{lemma}
\label{hom:lem:B2somePatterns}
Starting from any configuration of type $B(m)$, it is possible to form any single pattern $P$ such that $sym(P)=k \cdot m, k>1$.
\end{lemma}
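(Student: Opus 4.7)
The plan is to exploit the large gap between $SEC_1$ (which carries the $m$-gon) and $SEC_2$ (the inner cluster, with $\mathrm{rad}(SEC_1)>10\cdot \mathrm{rad}(SEC_2)$). This gap both freezes an $m$-fold rotational reference frame around the common center $c$ and leaves ample free space for the $n-m$ inner robots to travel without changing the type of the configuration. Because $\mathrm{sym}(P)=km$ with $k\geq 2$, Lemma \ref{hom:lem:setConcentricCircles} decomposes $P$ into $n/(km)$ concentric regular $(km)$-gons about a common center, and every $(km)$-gon refines an $m$-gon, so the target $P$ is compatible with the $m$-fold symmetry already visible in the source.

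First I would fix a canonical embedding of $P$: place its center of symmetricity at $c$, scale it so that its outermost ring coincides with $SEC_1$, and rotate it so that $m$ of the $km$ outermost vertices land exactly on the $m$ current $SEC_1$ robots. The orientation is well-defined because the outer $m$-gon already pins down $m$ angular directions. This assignment gives each of the $n-m$ inner robots a unique target slot (on one of the inner $(km)$-rings of $P$), and leaves the remaining $(k-1)m$ outermost slots to be filled at the end. Then I would schedule the inner robots in $m$-fold symmetric batches (one robot per angular wedge of the outer $m$-gon at a time), each moving along a straight, slightly perturbed segment to its target while staying strictly inside a disk of radius $\mathrm{rad}(SEC_1)/(h+1)$ around $c$. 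The radius-ratio gap and the $(h+1)$ factor in the definition of $\textsc{Stretch}$ for type $\mathcal{B}(m)$ guarantee that every intermediate configuration is again of type $\mathcal{B}(m)$ with exactly the same stretch, so any activated robot decodes the same target $P$ and the same embedding. Collisions are avoided by small deterministic perturbations (the target has no multiplicity points) and the $m$-fold symmetry can be maintained throughout because $\mathrm{sym}(P)\geq m$ permits it, with no need for label-based symmetry breaking. Once every inner robot sits on its prescribed inner-ring vertex, a final synchronous radial step contracts (or expands) the $m$ outer robots to their assigned positions on the outermost $(km)$-ring of $P$, completing the formation; this last step preserves the $(km)$-symmetry because the $m$ outer movements are themselves $m$-fold symmetric by construction.

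The main obstacle is maintaining the invariant that, at every moment of the execution, each activated robot agrees on which pattern is being formed and on its own individual target, despite the adversarial scheduler and the asynchrony induced by it. This is precisely what the definition of type $\mathcal{B}(m)$ (large radius ratio, $m$ robots in a regular polygon on $SEC_1$, small inner cluster) together with the encoding $\textsc{Stretch}(P)=\tfrac{1}{2}\lfloor\mathrm{rad}(SEC_1)/((h+1)\mathrm{rad}(SEC_2))\rfloor$ were engineered to enforce: the movement envelope for the inner robots is small enough that the type and the stretch are preserved throughout, so the decoded target $P=F^{-1}(\textsc{Stretch})$ never drifts. A detailed construction along these lines appears in \cite{DFSY10}, and transfers to the homonymous setting without change because $\mathrm{sym}(P)\geq 2m>1$ means the execution can respect $m$-fold symmetry end-to-end and thus does not require labels to distinguish robots.
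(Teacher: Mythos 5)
The paper does not actually prove this lemma: it states ``The following two lemmas are from \cite{DFSY10}'' and moves on, so your closing appeal to that reference is, literally, the whole of the paper's argument, and to that extent you coincide with it. Your sketch of \emph{why} the construction works (the radius gap freezes an $m$-fold reference frame, the stretch encodes the target, $sym(P)=km$ makes the target compatible with the visible $m$-fold symmetry so no labels are needed) also correctly identifies the mechanisms.

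Read as a self-contained proof, however, the sketch contains a genuine inconsistency. You embed $P$ canonically with its outermost $(km)$-ring on $SEC_1$ and send each inner robot to its slot in that embedding, while simultaneously requiring every moving robot to stay ``strictly inside a disk of radius $rad(SEC_1)/(h+1)$ around $c$'' so that the configuration remains of type $\mathcal{B}(m)$ with unchanged stretch. These two demands are incompatible in general: the inner rings of $P$, and a fortiori the $(k-1)m$ outermost slots that inner robots must eventually occupy, can lie at radii up to $rad(SEC_1)$, far outside that disk; and as soon as one inner robot leaves the inner cluster, $rad(SEC_2)$ grows, the ratio drops below $10$, and both the type $\mathcal{B}(m)$ property and the decoded stretch are destroyed. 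Your ``final synchronous radial step'' also never says who fills those $(k-1)m$ outer slots, and it cannot be assumed synchronous: the scheduler, not the algorithm, chooses which robots are activated, so one cannot ``schedule the inner robots in $m$-fold symmetric batches'' --- the algorithm can only make non-designated robots compute their current position as destination and must tolerate any fair activation order. A correct argument has to accept that the configuration eventually stops being of type $\mathcal{B}(m)$ and becomes a partially formed $P$ plus a residual structure, and must order the moves so that every such intermediate configuration still uniquely determines the target and the set of robots yet to move; that ordering argument is the actual content of the construction in \cite{DFSY10}, and it is precisely the part your sketch replaces with a contradictory invariant.
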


\begin{lemma}
\label{hom:lem:grandAlgoChiantEtDeprimant}
Consider a robot network of $n$ robots in configuration $P_i$. Let $sym(P_i)=m>1$ and $sym(P_{i+1})=m^\prime$.
If $h> \frac{n}{\textsc{sncd}(m, m^\prime)}$, there exists an algorithm that brings the network to a configuration $Q$ such that 
either $Q$ is of type $\mathcal{B}(x), x < \textsc{sncd}(m, m^\prime)$ or of type $\mathcal{A}$ both with stretch $F(P_{i+1})$.
\end{lemma}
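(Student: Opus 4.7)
Let $t = \textsc{sncd}(m, m')$; by definition $t$ divides $m$, $t \geq 2$, and $t \nmid m'$, so the hypothesis rewrites as $h > n/t$. The plan is to exploit this surplus of labels to break the $m$-symmetry of $P_i$ and drive the network into an intermediate of type $\mathcal{A}$ or type $\mathcal{B}(x)$ with $x < t$, carrying stretch exactly $F(P_{i+1})$.

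Every robot first computes the center of symmetry $c$ of $P_i$, which by Lemma \ref{hom:lem:weberSym} equals the unique Weber point of $P_i$ and is therefore a common, globally agreed reference. Because $t \mid m$, the rotation of angle $2\pi/t$ around $c$ leaves $P_i$ invariant, so it partitions the $n$ positions into $n/t$ canonical \emph{orbits} of size $t$; since the $m$-rotation preserves each concentric $m$-polygon of Lemma \ref{hom:lem:setConcentricCircles}, so does its power $R_{2\pi/t}$, hence each orbit sits inside a single such polygon. A counting argument in the spirit of Lemma \ref{hom:lem:2robotsEquivDistinctLabels} now shows that at least one orbit is \emph{non-monochromatic}: otherwise each orbit would carry a single label, and the system would use at most $n/t < h$ distinct labels, contradicting the homonymy assumption. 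A canonical lexicographic rule then elects a unique such orbit $O^*$: among non-monochromatic orbits take those of smallest radius from $c$; among the resulting candidates (all lying in the same $m$-polygon and hence at the same radius) take the one whose cyclic label sequence is lexicographically smallest; finally, inside $O^*$ mark the positions realizing the distinguished rotation of that sequence. These positions form a set $B$ of cardinality $x = t/p$, where $p \geq 2$ is the smallest period of the cyclic label sequence of $O^*$; in particular $x \leq t/2 < t$.

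The algorithm then has every robot in $B$ jump radially outward from $c$, while every other robot stays still. The jump distance is tuned so that $B$ lands on a new outer circle $SEC_1$ whose radius, relative to $SEC_2$ (the smallest enclosing circle of the robots left behind), produces $\textsc{Stretch} = F(P_{i+1})$. If $x = 1$, the unique breaker plays the role of the pivot and the resulting configuration $Q$ is of type $\mathcal{A}$; if $x \geq 2$, the $x$ breakers are radial images of a regular $(t/p)$-sub-polygon of $O^*$, hence they form a regular $x$-gon on $SEC_1$ and $Q$ is of type $\mathcal{B}(x)$ with $x < t$, as required. The main obstacle is Step 3's canonical rule: because orbits inside the same $m$-polygon are rotationally equivalent in the unlabeled picture, the election must be invariant under the visible symmetries of $(P_i,\text{labels})$ yet always return a set of size strictly less than $t$. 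This is exactly what the strict inequality $h > n/t$ buys: it forces some orbit to mix labels and hence to admit a nontrivial lexicographic tie-break inside itself, which is what guarantees $p \geq 2$ and therefore $x < t$.
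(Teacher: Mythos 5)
There is a genuine gap, and it is at the heart of your Step 3. In this model the identifiers are \emph{invisible}: a robot's snapshot contains only the positions of its peers, and the Compute phase may use only that snapshot together with the robot's \emph{own} label (this is exactly why Lemma \ref{hom:lem:2robotsEquivDistinctLabels} is phrased in terms of two robots with \emph{equal views} but distinct labels, and why the paper's Algorithm in Figure \ref{hom:alg:formation1} reads only \texttt{My Identifier}). Your canonical election of the orbit $O^*$ --- ``take the one whose cyclic label sequence is lexicographically smallest'' --- and the subsequent marking of $B$ via the period $p$ of that label sequence both require every robot to read the labels of the others. No robot can compute $O^*$ or $B$, so the rule is not implementable. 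Worse, the obstacle you flag yourself is real and not cured by $h>n/t$: in an $m$-symmetric configuration the $t$-orbits inside one $m$-polygon are indistinguishable in the (position-only) view, so \emph{any} deterministic, locally computable election must treat them symmetrically; a set of fewer than $m$ ``marked'' positions cannot be agreed upon in one shot. This is precisely why the paper does not elect a sub-orbit at all: it elects an entire equivalence class (all $sym$ robots with the minimal view), has each member jump radially to a distance that depends on its \emph{own} identifier, and lets the symmetricity decrease \emph{emergently} when two homonym-distinct members of the class happen to be activated together. The argument is then a convergence argument (Lemma \ref{hom:lem:symmetryDecrease}) rather than a one-step construction.

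A second, related problem is scheduling. Your plan has ``every robot in $B$ jump radially outward while every other robot stays still,'' but the scheduler activates an arbitrary fair subset each cycle, and robots are oblivious. If only part of $B$ is activated, the resulting intermediate configuration is neither $P_i$ nor your target $Q$, and your algorithm gives no rule for recognizing it and continuing. The paper's proof spends most of its effort exactly here: the $\textsc{Stretch}$ invariant is preserved across all intermediate configurations so that robots can always recover the target, and Lemmas \ref{hom:lem:ToftypeB}, \ref{hom:lem:ToftypeA} and \ref{hom:lem:ToftypeAorD} analyze the first configuration $T$ of minimal symmetricity (and its predecessor/successor) to show it is, or immediately yields, a configuration of type $\mathcal{A}$ or $\mathcal{B}(x)$ with $x<\textsc{sncd}(m,m')$. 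Your counting argument that some orbit must be non-monochromatic is sound and is the same observation as Lemma \ref{hom:lem:2robotsEquivDistinctLabels}, but the construction built on top of it does not survive either the invisibility of labels or the adversarial activation model.
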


The remaining of this section is devoted to the proof of this lemma.

\begin{figure}[htb]
\centering{ \fbox{
\begin{minipage}[t]{150mm}
\scriptsize 
\renewcommand{\baselinestretch}{2.5} \resetline{}
\begin{tabbing}
aaaaa\=aa\=aaa\=aaa\=aaaaa\=aaaaa\=aaaaaaaaaaaaaa\=aaaaa\=\kill %~\\

\textbf{Function:} \> \\
$targetSym(k):$ The target symmetricity when robots try to form $P_{k+1}$\\
It is equal to the greatest divisor of $sym(P_{k+1})$ that is $< \textsc{SNCD}(sym(P_k), sym(P_{k+1}))$\\
%$\textsc{Median}(x)$: A line that passes through $x$ and divides the set $P$\\
% into two subsets of equal cardinality $(n/2-1)$. \\
~\\

\textbf{Actions:}\\
\line{A1} \> $P \leftarrow \text{ Observed Configuration}$ \\
\line{A2} \> $SEC \leftarrow \textsc{Smallest Enclosing Circle}(P)$\\
\line{A3} \> $c \leftarrow \textsc{Center}(SEC)$ \\
\line{A4} \> $sym \leftarrow sym(P)$ \\
\line{A5} \> \textbf{If} $(P=P_i)$ \textbf{then} $t \leftarrow F(P_{i+1})$ \\
\line{A6} \> \textbf{else} $t \leftarrow stretch(P)$ \textbf{endif}\\
\line{A7}\> $id \leftarrow \text{My Identifier}$ \\
%\line{A5} \> $order \leftarrow \text { The order of $P_i$ in the series }$\\
\line{A8} \> $rad \leftarrow \textsc{Radius}(SEC)$ \\
\line{A9} \> \textbf{if} $(sym > 1) \wedge ((sym > targetSym(k)) \vee (P \text{ is not of type } \mathcal{B}(sym)))$\\
\line{A10} \>\> $d \leftarrow Min(|r_i, c|; r_i \in P)$ \\
\line{A11} \>\> $S \leftarrow \{r_i \in P~|~ |r_i,c|=d\}$ \\
\line{A12} \>\> $minView \leftarrow Min(V(r_i); r_i \in S)$ \\ 
\line{A13} \>\> $Elected \leftarrow \{r_i \in S~|~ V(r_i)=minView\}$ \\
\line{A14} \>\> \textbf{if} $(r_i \in Elected)$ \\
\line{A15} \>\>\> return ($(t \cdot (h+1)+id-1) \cdot rad$, 0) \\
\line{A16} \>\> \textbf{else} \\
\line{A17} \>\>\> return \text{My Position} \\
\line{A18} \>\>\textbf{endif} \\
\line{A19} \> \textbf{endif} \\
%\line{S1} \>\> \textbf{while} (true) \\
%\line{S2} \>\>\>$List \leftarrow A\Sigma$-\textsc{Query()} \\

\end{tabbing}
\normalsize
\end{minipage}
}
\caption{Symmetry Breaking}
\label{hom:alg:formation1}
}
\end{figure}

The transformation algorithm is given in Figure \ref{hom:alg:formation1}.
It is executed by robots during their Compute phases until one of the two desired configurations is obtained.
Its description is based on the polar coordinate system of Section \ref{hom:subsec:polarCoordinate}.
The principal idea is to use identifiers of robots in order to break 
the symmetry of configurations. It does so by making robots choose their destination according to their identity.
This way, if two robots with similar views but different identities are activated simultaneously, their views at the end of the cycle 
will be different and the symmetricity decreases.

Let us observe the following four properties about the algorithm:
\begin{enumerate}
\item Denote by $c$ is the center of symmetricity of the initial configuration $P_i$ which is therefore a Weber point (Lemma \ref{hom:lem:weberSym}).
Since the algorithm makes robots move only through their radius with $c$ (line \ref{A15}, the Weber point remains invariant during all the execution.
This implies that any successive regular/symmetric configuration will have necessarily $c$ as its center
of regularity/symmetricity.

\item Again, since robots move only through their radius with $c$, regularity remains invariant during all the execution.
It is thus equal to $reg(P_i)$. 
But since $sym(P_i)=m>1$, it holds according to Lemma \ref{hom:lem:regEqSym}
 that $reg(P_i)=sym(P_i)$.
Hence, all the successive configurations will be $m$-regular, including the final one.
This means that if a configuration $P^\prime$ with $sym(P^\prime)$ is reached, it is of type $\mathcal{D}(m)$.

\item At each cycle, the algorithm chooses a set $Elected$ of robots having the same view (equivalence class) (lines \ref{A10}-\ref{A13}).
These 
Since the algorithm is executed only if the current configuration is $m$-symmetric ($m>1$), it holds according
to Lemma \ref{hom:lem:sizeEquivalenceClasses} that $|Elected|=m$.
The positions chosen by robots in $Elected$ are located outside the current SEC. Hence, moving to these positions cannot increase 
the symmetricity of the configuration.
It follows that the symmetricity of the configurations can only \emph{decrease} during the execution.

\item The actions of robots maintain the same stretch during the whole execution, and it is equal to $F(P_{i+1})$ (line \ref{A5}).
\end{enumerate}

We prove the following claim about the algorithm:
\begin{lemma}
\label{hom:lem:symmetryDecrease}
Given the conditions of Lemma \ref{hom:lem:grandAlgoChiantEtDeprimant},
if robots are executing algorithm \ref{hom:alg:formation1}, then there exists a time at which they
reach a configuration $P^\prime$ with
$sym(P^\prime) < \textsc{sncd}(m, m^\prime)$.
\end{lemma}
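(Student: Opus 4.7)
The plan is to combine the four observations already recorded before the lemma with a single-cycle case analysis and a standard fairness argument. The decisive simplification comes from observation $(2)$ together with Lemma~\ref{hom:lem:regEqSym}: since every reachable configuration is $m$-regular, its symmetricity must lie in $\{1, m\}$. Because $\textsc{sncd}(m,m')$ is a divisor of $m$ that is at least $2$ whenever $m>1$ (the case $\textsc{sncd}(m,m')=n+1$ is trivial, since $m<n+1$ already), it suffices to exhibit one reachable configuration with $sym=1$.

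For any reachable $P'$ with $sym(P')=m$, Lemma~\ref{hom:lem:setConcentricCircles} partitions $P'$ into $n/m$ concentric regular $m$-gons (``rings'') around $c$, and these rings coincide with the equivalence classes of $\backsim$. Hence the $\textsc{Elected}$ set identified at lines~\ref{A10}--\ref{A13} is always an entire ring, namely the innermost one. The hypothesis $h>n/\textsc{sncd}(m,m')\geq n/m$ combined with Lemma~\ref{hom:lem:2robotsEquivDistinctLabels} guarantees that at least one ring contains two robots with distinct identifiers; I shall call such a ring \emph{mixed} and the remaining rings \emph{monochromatic}. A short local-to-absolute conversion of the destination at line~\ref{A15} (in which the local unit $|c,r|$ cancels against the $1/|c,r|$ hidden in the locally measured $rad$) shows that an activated elected robot $r$ lands on the ray from $c$ through its original position, at absolute distance $(t(h+1)+id(r)-1)\cdot R$ from $c$, where $R$ is the current absolute SEC radius and $t=F(P_{i+1})>10$.

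A single cycle starting from $sym=m$ then splits into two subcases. If the scheduler activates the whole $\textsc{Elected}$ ring \emph{and} all its members share one identifier, every destination lies on a common circle strictly outside the current SEC, so the ring is translated rigidly outward, $sym$ remains $m$, and the number of original rings still ``inside'' strictly decreases. In every other subcase --- either $\textsc{Elected}$ contains two distinct identifiers, or the scheduler activates a strict nonempty subset of it --- two of the $m$ angular positions of the elected ring end up at different radii, which breaks the $m$-fold rotational symmetry around $c$ and, by the dichotomy above, forces $sym$ to $1$. Since the first subcase can occur at most $n/m-1$ times before the first mixed ring becomes the innermost, and since fairness then ensures that some member of that ring is activated within finitely many further cycles, the second subcase must eventually trigger, giving a configuration with $sym=1<\textsc{sncd}(m,m')$. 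I expect the main obstacle to be the explicit coordinate chase verifying that every destination genuinely lies on the advertised common outer circle independently of the local robot's distance from $c$, and that push-outs preserve the relative radial ordering of the remaining rings cleanly enough to make the count of at most $n/m-1$ ``push-out'' cycles rigorous.
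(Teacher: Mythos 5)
There is a genuine gap, and it sits at the very first step. Your ``decisive simplification'' --- that every reachable configuration $P'$ has $sym(P')\in\{1,m\}$ because it is $m$-regular --- is false. Regularity only records the cyclic string of \emph{angles} around $c$; radial distances are invisible to it. Consider one cycle in which the whole elected $m$-gon is activated and carries identifiers $1,2,1,2,\ldots$ in cyclic alternation: by your own (correct) local-to-absolute computation its members land on two different circles in an alternating pattern, the string of angles (hence $m$-regularity) is untouched, but the views now only coincide under rotation by $2\cdot(2\Pi/m)$, so the configuration has $sym=m/2$, neither $1$ nor $m$. In general $sym$ can drop to any proper divisor of $m$. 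The equality in Lemma \ref{hom:lem:regEqSym}, which you invoke to force $sym=reg$, fails on precisely such configurations (only ``$sym(P)$ divides $reg(P)$'' survives), and the paper itself presupposes intermediate symmetricities: Definition \ref{hom:def:mPrime} and Lemma \ref{hom:lem:ToftypeB} explicitly handle $1<sym(T)<\textsc{sncd}(m,m^\prime)$, which would be vacuous under your dichotomy. Once the dichotomy goes, both pillars of your argument collapse: the claim that any non-rigid cycle ``forces $sym$ to $1$'' (it only forces $sym$ to some proper divisor of $m$, which may still be $\geq\textsc{sncd}(m,m^\prime)$, e.g.\ $m=12$ dropping to $6$ when $\textsc{sncd}=2$), and the bound of $n/m-1$ push-out cycles (after a partial break the equivalence classes are no longer $m$-gons and your ring bookkeeping no longer applies).

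The repair is essentially the paper's proof, which never needs the dichotomy. Symmetricity is non-increasing along the execution, so if the lemma failed it would stabilize forever at some value $x\geq\textsc{sncd}(m,m^\prime)$; then $h>n/\textsc{sncd}(m,m^\prime)\geq n/x$ and Lemma \ref{hom:lem:2robotsEquivDistinctLabels} yield an equivalence class containing two robots with distinct labels; by fairness that class is eventually the one elected, its two differently-labelled members are eventually activated together, they scatter to distinct radii, and the symmetricity strictly decreases --- contradicting stabilization. Your destination computation and your fairness skeleton are sound and reusable; what must change is the termination argument: ``strictly decreasing until it falls below $\textsc{sncd}(m,m^\prime)$'' rather than ``one symmetry break lands at $1$.''
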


\begin{proof}
Remember that we showed in Item 3 above that symmetricity can only decrease.
Assume towards contradiction that it remains always greater or equal to
$\textsc{sncd}(m, m^\prime)$.
This means that there exists a time $t$, a symmetricity $x > \textsc{sncd}(m, m^\prime)$, such that all the 
configurations reached after $t$ have symmetricity equal to $x$.
But we assumed that $h> \frac{n}{\textsc{sncd}(m, m^\prime)}$, which implies that $h>n/x$.
Hence, according to Lemma \ref{hom:lem:2robotsEquivDistinctLabels}, there exists 
two robots $r_1, r_2$ with identical views but distinct labels.
Let $S$ be the set of robots with the same view as $r_1, r_2$.
Note that the robots of $S$ form a regular polygon around $c$, \emph{i.e.} they lie in the same circle.
There exists a time at which the robots of $S$ are elected, \emph{i.e.} when
they become the closer to the center $c$.
Since $r_1$ and $r_2$ have distinct labels, they will choose different destinations and the symmetricity will decrease.
Contradiction.
\end{proof}

\begin{definition}[$m^\prime, T$]
\label{hom:def:mPrime}
Let $m^\prime$ be the smallest symmetricity of all the configurations reached by the execution of the algorithm.
According to Lemma \ref{hom:lem:symmetryDecrease} $m^\prime < \textsc{sncd}(m, m^\prime)$.
Let $T$ be the first reached configuration for which $sym(T)=m^\prime$.
Since all the configurations reached after $T$ if any have a symmetricity equal to $m^\prime$ this 
means that at each cycle after $T$ is reached, there are $m^\prime$ robots that are elected to move.
\end{definition}

\begin{lemma} 
\label{hom:lem:ToftypeB}

If $sym(T)=m^\prime>1$, then either $T$ is of type $\mathcal{B}(m^\prime)$ 
or a configuration of type $\mathcal{B}(m^\prime)$ can be obtained from $T$ after one cycle.
\end{lemma}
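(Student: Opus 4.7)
The plan is to split on whether $T$ already satisfies the conclusion. If $T$ is of type $\mathcal{B}(m^\prime)$ we are done, so the real case is when $T$ is not of this type. In that case I would first observe that the guard at line \ref{A9} of the algorithm in Figure \ref{hom:alg:formation1} is satisfied at $T$ (because $sym(T)=m^\prime>1$ and the disjunct ``$P$ is not of type $\mathcal{B}(sym)$'' holds), so the algorithm acts. It then elects the innermost equivalence class: robots at minimum distance $d$ from $c$, and among them those with the smallest view. Using Lemma \ref{hom:lem:notLocatedInCenter} (no robot sits at $c$) and Lemma \ref{hom:lem:sizeEquivalenceClasses} (all equivalence classes have equal size), the elected set contains exactly $m^\prime$ robots forming a regular $m^\prime$-gon of radius $d$ around $c$, and each moves radially outward to distance $(t(h+1)+id-1)\cdot rad$, where $t=F(P_{i+1})>10$ is the preserved stretch and $rad=rad(SEC(T))$.

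The central and hardest step is to prove that all $m^\prime$ elected robots carry the same label. I would argue by contradiction from the minimality of $m^\prime$ in Definition \ref{hom:def:mPrime}: if two elected robots had different labels, the moved orbit would split into sub-classes lying on distinct concentric circles, all strictly outside the old SEC. Picking such a sub-class of size $c<m^\prime$, the $c$ points occupy a proper subset of the $m^\prime$ equally spaced angular positions on a new circle at which no other robot lies; Lemma \ref{hom:lem:setConcentricCircles} applied to a hypothetical value $sym(T^\prime)=m^{\prime\prime}$ would then force $m^{\prime\prime}$ to divide both $m^\prime$ and $c$, giving $sym(T^\prime)\le c<m^\prime$ and contradicting that $m^\prime$ is the smallest symmetricity ever reached. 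This is the single place where the global minimality of $m^\prime$ is genuinely used, and it will be the main obstacle because one must rule out every non-trivial split of the elected orbit, not merely the case of all distinct labels. Consequently all elected robots share a common label $l$ and land together on a regular $m^\prime$-gon of radius $R=(t(h+1)+l-1)\cdot rad$ centered at $c$.

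To finish, I would verify the two conditions defining $\mathcal{B}(m^\prime)$ for the resulting configuration $T^\prime$. Condition (1): the new SEC has radius $R$ and is centered at $c$ (all other robots lie inside the old SEC of radius $rad<R$) and carries exactly the $m^\prime$ moved robots arranged as a regular $m^\prime$-gon. Condition (2): removing a full equivalence class from an $m^\prime$-symmetric configuration leaves an $m^\prime$-symmetric configuration around $c$, so by Lemma \ref{hom:lem:notLocatedInCenter} applied to this subconfiguration its SEC is concentric with $SEC_1$, and the chain $rad(SEC_1)/rad(SEC_2)\geq R/rad\geq t(h+1)>10(h+1)>10$ supplies the required separation, completing the verification.
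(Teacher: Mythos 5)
Your proposal is correct and follows essentially the same route as the paper's proof: split on whether $T$ is already of type $\mathcal{B}(m^\prime)$, otherwise note the algorithm fires at $T$, and use the minimality of $m^\prime$ (Definition \ref{hom:def:mPrime}) to force the $m^\prime$ elected robots to land on a single outer circle as a regular $m^\prime$-gon, yielding type $\mathcal{B}(m^\prime)$ after one cycle. You merely spell out two points the paper leaves implicit --- that the only way the moved orbit could fail to stay regular is via distinct labels, and the explicit ratio chain $R/rad \geq t(h+1) > 10$ --- which is a welcome tightening rather than a different argument.
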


\begin{proof}
If $T$ is of type $\mathcal{B}(m^\prime)$ we are done.
If not, then $T$ is of type $\mathcal{C}(m^\prime)$ because it is $m$-symmetric.
Hence, the algorithm is executed also by robots when they are at configuration $T$.
Let $T^{+1}$ be the next configuration that is just after $T$.
By definition of $m^\prime$, $T^{+1}$ must be also $m^\prime$-symmetric. The external circle of 
$T^{+1}$ is formed by the $m^\prime$ robots that moved between $T$ and $T^{+1}$. 
They necessarily form a regular 
polygon of $m$ sides, otherwise the symmetricity would have decreased.
Moreover, the ratio between the two most external circles is greater than five, hence 
$T^{+1}$ is of type $\mathcal{B}(m^\prime)$.
This proves the lemma.
\end{proof}

\begin{lemma}
\label{hom:lem:ToftypeA}
If $m^\prime =1$ and all the points in $T$ are collinear, then either $T$ is of type $\mathcal{A}$
or a configuration of type $\mathcal{A}$ can be reached from $T$ be a movement of a single robot.
\end{lemma}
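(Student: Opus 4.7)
The plan is to exploit the geometric degeneracy of collinear configurations: when all points of $T$ lie on a single line $L$, the smallest enclosing circle of any subset of $T$ is determined entirely by its two extreme points on $L$, and this forces the tangency condition in the definition of type $\mathcal{A}$ to hold essentially for free.

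I would first order the robots as $r_1, \ldots, r_n$ along $L$ and observe that $SEC_1 = SEC(T)$, whose diameter is $[r_1,r_n]$, and $SEC_2 = SEC(T \setminus \{r_n\})$, whose diameter is $[r_1,r_{n-1}]$, both have centers on $L$ and both pass through $r_1$. Because both centers lie on $L$, the tangent to each circle at $r_1$ is the perpendicular to $L$ at $r_1$, so the two circles are internally tangent there and intersect at exactly one point. Thus condition~(2) in the definition of type $\mathcal{A}$ is automatic with $r_n$ as pivot and $r_1$ as base-point, and the same statement holds symmetrically when the roles of $r_1$ and $r_n$ are swapped.

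The argument then splits into two cases. If $\lfloor rad(SEC_1)/((h+1)\cdot rad(SEC_2))\rfloor$ is already equal to $F(P_{i+1})$ for one of the two candidate pivots, then since $F(P_{i+1}) > 10$ and $h \geq 1$ one obtains $diam(SEC_1) \geq 20\cdot rad(SEC_2) = 10\cdot diam(SEC_2)$, which is condition~(1), so $T$ itself is of type $\mathcal{A}$ with stretch $F(P_{i+1})$. Otherwise, because $sym(T) = 1$ the relation $\backsim$ has a singleton equivalence class, so a leader $r_\ell$ can be picked deterministically from views. Letting $r_a, r_b$ denote the two extremes of $T \setminus \{r_\ell\}$ on $L$, the leader moves to the point $p^\ast$ on $L$, on the side of $r_b$ opposite $r_a$, chosen so that $|r_a,p^\ast| = F(P_{i+1}) \cdot (h+1) \cdot |r_a,r_b|$. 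Applying the same tangency argument to the new $SEC_1'$ of diameter $[r_a,p^\ast]$ and to $SEC_2' = SEC(T \setminus \{r_\ell\})$ of diameter $[r_a,r_b]$, one obtains a type~$\mathcal{A}$ configuration with pivot $p^\ast$, base-point $r_a$, and stretch exactly $F(P_{i+1})$ by construction.

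The main obstacle I anticipate is the unambiguous selection of the two orientation-dependent choices — which extreme plays the role of base-point in Case A, and on which side of $L$ the leader has to be pushed in Case B. Both are made deterministic by using the singleton view of $r_\ell$ (which is uniquely characterised since $sym(T) = 1$) to induce a canonical orientation of $L$; once this orientation is fixed, the remaining verification reduces to the routine arithmetic sketched above.
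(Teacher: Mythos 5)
Your geometric observations are fine as far as they go: for a collinear configuration the two enclosing circles built on a common extreme point are internally tangent, so condition~(2) of type $\mathcal{A}$ is indeed automatic, and the arithmetic relating the floor ratio to the factor-ten requirement is routine. The genuine gap is elsewhere: in your Case~B you instruct the elected leader to move to the point $p^\ast$ with $|r_a,p^\ast| = F(P_{i+1})\cdot(h+1)\cdot|r_a,r_b|$, but you never explain how an \emph{oblivious} robot observing $T$ can know the value $F(P_{i+1})$. The entire mechanism of the paper is that the target pattern is encoded in the stretch of a recognizable intermediate configuration; if $T$ is not of type $\mathcal{A}$, $\mathcal{B}$, $\mathcal{C}$ or $\mathcal{D}$ (and in the problematic subcase it is none of these), the target is not directly readable from the snapshot, and your argument silently assumes it is. The same circularity infects your Case~A test (``if the floor ratio is already equal to $F(P_{i+1})$''), which presupposes knowledge of the very quantity the configuration is supposed to encode.

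The paper closes this gap by exploiting the execution history, which your proof discards entirely. It looks at the predecessor configuration $T^{-1}$, which is collinear with $sym(T^{-1})=2$, and splits on whether one or two elected robots actually moved. If only one moved, $T$ is of type $\mathcal{A}$ with the correct stretch \emph{by construction} (the move formula on line~\ref{A15} preserves the stretch invariant $t=F(P_{i+1})$), so no comparison with $F(P_{i+1})$ is ever needed. If two moved, $T$ is not any intermediate type, but its stretch can still be \emph{decoded}: the two robots that moved are the two extremes of $T$, so discarding them and taking the centroid of the remaining positions recovers the center $c$ of $T^{-1}$, after which the type-$\mathcal{D}$ stretch formula yields $t$; only then is a single robot moved as in Lemma~\ref{hom:lem:D2A}. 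You would need to supply an analogous decoding step (and justify that the ``otherwise'' configurations of your Case~B always have enough structure to support it) before your argument is complete.
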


\begin{proof}
Let $T^{-1}$ denotes the configuration that just precedes $T$ in the execution.
Clearly, $T^{-1}$ is also collinear since robots move only through their radius with $c$.
Moreover, $sym(T^{-1})=2 > m^\prime=1$.
We distinguish between two subcases:
\begin{enumerate}
\item Only one robot moved between $T^{-1}$ and $T$. In this case, $T$ is of type $\mathcal{A}$
with appropriate stretch. So the algorithm stops here and the lemma follows. 

\item Two robots moved between $T^{-1}$ and $T$.
It can be easily checked that $T$ is not an intermediate configuration (not of type $\mathcal{A}$, $\mathcal{B}$, $\mathcal{C}$ or $\mathcal{D}$).
The stretch of $T$ can be however computed by trying to find the center of symmetricity of the precedent configuration $T^{-1}$. We do this by ignoring the two extreme positions in $T$ (corresponding to the robots that moved between $T^{-1}$ and $T$). $c$ is the centroid of the remaining positions.
Then, having $c$, we can deduce the stretch using the formula of configurations of type $\mathcal{D}$.
After that, we elect some robot in $T$ and make it move in such a way to form a configuration of type $\mathcal{A}$ with appropriate stretch
as showed in Lemma \ref{hom:lem:D2A}.
\end{enumerate}
\end{proof}

%\begin{proof}

\begin{lemma}
\label{hom:lem:ToftypeAorD}
If $m^\prime =1$ and the points in $T$ are \emph{not} all collinear, then either $T$ is of type $\mathcal{A}$ or $\mathcal{D}(m)$
\end{lemma}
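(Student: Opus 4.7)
The plan is to show that $T$ is of type $\mathcal{D}(m)$ with $m = sym(P_i)$, which is already enough for the stated disjunction.

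First, I would verify conditions~(1) and~(3) of the definition of type $\mathcal{D}(m)$ directly from the hypotheses and the invariants listed just above Lemma~\ref{hom:lem:symmetryDecrease}. Condition~(1) -- the points of $T$ are not all collinear -- is exactly the hypothesis. For~(3), Item~2 of those invariants says that regularity is preserved throughout the execution, so $reg(T) = reg(P_i)$; combined with Lemma~\ref{hom:lem:regEqSym} this gives $reg(T) = sym(P_i) = m > 1$, and we have $sym(T) = m' = 1$ by hypothesis. Thus only condition~(2) remains, namely that $T$ is not of type $\mathcal{B}$.

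To handle condition~(2), I would analyze the last step of the execution. By the definition of $T$, the preceding configuration $T^{-1}$ satisfies $sym(T^{-1}) = m'' \geq 2$, and by Lemma~\ref{hom:lem:sizeEquivalenceClasses} together with Property~\ref{hom:prop:convexPolygon} the elected equivalence class at $T^{-1}$ consists of exactly $m''$ robots at equi-spaced angles $\theta_1,\ldots,\theta_{m''}$ on a common circle around $c$. Line~\ref{A15} of Algorithm~\ref{hom:alg:formation1} sends each such robot $r_j$ to polar position $((t(h+1) + id_j - 1)\cdot rad,\, \theta_j)$ with $t = F(P_{i+1}) > 10$ (Item~4). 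If all $m''$ movers shared the same label, they would land on a common radius and form a regular $m''$-gon around $c$, which together with the still $m''$-symmetric interior would give $sym(T) \geq m'' \geq 2$, contradicting $sym(T) = 1$. Hence at least one mover has a strictly submaximal label, and this robot -- call it $r^\ast$ -- sits strictly inside $SEC(T)$.

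Finally, I would rule out type $\mathcal{B}$ by a radius estimate. Suppose for contradiction $T$ were of type $\mathcal{B}(k)$. Then $SEC_1 = SEC(T)$ and $SEC_2$ are concentric, with the $k$ points on $SEC_1$ forming a regular $k$-gon; since those points are exactly the maximum-label movers (all at a common distance from $c$), the common center of $SEC_1$ and $SEC_2$ can only be $c$ itself. Consequently $rad(SEC_2) \geq |c, r^\ast| \geq t(h+1)\cdot rad$, while $rad(SEC_1) \leq (t(h+1) + h - 1)\cdot rad$, so the ratio is at most $1 + (h-1)/(t(h+1)) < 2$, far below the threshold $10$ demanded by type $\mathcal{B}$ -- a contradiction. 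Hence $T$ is of type $\mathcal{D}(m)$. The step I expect to be most delicate is precisely this last one: producing a mover that is \emph{strictly} inside $SEC(T)$, which is exactly where the forced distinctness of labels (itself a consequence of $sym(T) = 1$) is essential.
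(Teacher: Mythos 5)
Your high-level plan coincides with the paper's: conditions (1) and (3) of type $\mathcal{D}(m)$ do follow from the hypothesis and from the invariants (radial moves preserve the center $c$ and the regularity $reg(T)=reg(P_i)=m$), so everything reduces to showing $T$ is not of type $\mathcal{B}$. The problem is in how you discharge that last point. Your argument hinges on producing a mover $r^\ast$ that lies \emph{strictly} inside $SEC(T)$, and to get it you assume that all $m''$ robots of the elected equivalence class actually move between $T^{-1}$ and $T$. The scheduler only activates a (fair) subset per cycle, so the set of movers can be any nonempty subset of the elected class. If exactly one elected robot is activated, or if the activated ones all happen to carry the same label, then every mover lands at the \emph{same} distance $(t(h+1)+id-1)\cdot rad$ from $c$ and no mover with a ``strictly submaximal label'' exists; your $r^\ast$ simply is not there, yet $sym(T)=1$ can still hold (the surviving configuration need not inherit any symmetry from $T^{-1}$ once only part of the polygon has moved out). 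This is precisely why the paper's proof introduces $S_1$ (the robots that moved, of arbitrary cardinality) and $S_2$, splits off the case $|S_1|=1$ as type $\mathcal{A}$, and for $|S_1|>1$ runs a distance analysis (properties (A1)--(A3) and Claims C1--C3) that never needs an interior mover.

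A second, independent gap is the step where you locate the center of $SEC_1$. You assert that the points on $SEC_1=SEC(T)$ are ``exactly the maximum-label movers,'' hence all at a common distance from $c$, hence that the common center of $SEC_1$ and $SEC_2$ is $c$, which is what lets you write $rad(SEC_2)\geq |c,r^\ast|$. None of this is justified: $SEC(T)$ is the smallest enclosing circle of the whole point set, it is generally \emph{not} centered at $c$, and it can be supported by non-movers or by movers with non-maximal labels (e.g.\ two movers at different distances from $c$ in roughly opposite directions). The paper avoids this by proving, under the assumption that $T$ is of type $\mathcal{B}$, that the circumference of $SEC_1$ carries exactly the robots of $S_1$ and $SEC_2$ exactly those of $S_2$ (Claims C1--C3, using the gap of order $(t-1)(h+1)u$ between the two groups), and then derives the contradiction not from a radius-ratio estimate but from the fact that the regular-polygon condition in the definition of $\mathcal{B}$ would force $sym(T)>1$. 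To repair your proof you would need to (i) handle an arbitrary nonempty subset of movers, including the single-mover and equal-label cases, and (ii) replace the ``$SEC_1$ is centered at $c$'' claim with an argument that does not presuppose which robots support $SEC(T)$.
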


\begin{proof}
Let $T^{-1}$ denotes the configuration that just precedes $T$ in the execution.
Note that $sym(T^{-1}=2$ since $T$ is by definition the first configuration that reaches symmetricity equal to 1.
Let $S_1$ be the set of robots that moved between $T^{-1}$ and $T$ and let $S_2$ be the set of robots that remained stationary.
Note that $|S_2|\geq n/2$.
If $|S_1|=1$ then the obtained configuration $T$ is of type $\mathcal{A}$ and we are done.
Hence we assume that $|S_1|>1$ and we prove that $T$ is of type $\mathcal{D}$.
For this, it suffices to show that $T$ is neither of type $\mathcal{A}$ nor $\mathcal{B}$.
Let $u=rad(SEC(T^{-1}))$. The following two properties hold in $T$:
\begin{description}
\item[(A1)] the distance between any two robots is smaller than 
$2 \cdot (t+1)\cdot (h+1) \cdot u$. Moreover, 
\item[(A2)] $\forall r \in S_2: \forall r^\prime \in S_1: |r, r^\prime| \geq (t \cdot (h+1) - 1) \cdot u \geq (t-1) \cdot (h+1) \cdot u$.
\item[(A3)] $\forall r, r^\prime \in S_2: |r, r^\prime| \leq 2\cdot u$.
\end{description}

\paragraph*{Not $\mathcal{B}$}
Assume towards contradiction that 
$T$ is of type $\mathcal{B}$. 
$SEC_1$ and $SEC_2$ are defined accordingly. 
We say $r \in SEC_1$ if it is on $SEC_1$.
Note that at least one robot of $S_1$ must be on $SEC_1$,
and $\forall r \in T: (r \in SEC_1) \vee (r \in SEC_2)$.

\begin{claim} [C1]
$(\exists r \in S_1 \cap SEC_2) \Rightarrow  (\forall x \in S_2: x \in SEC_1)$.
\end{claim}

\begin{proof}
Fix $r$ to be some robot in $S_1 \cap SEC_2$.
Assume for contradiction that there exists some $x \in S_2 \cap SEC_2$.
By (A2) we conclude that $|x, r| \geq (t -1) \cdot (h+1)  \cdot u$.
Since both $x$ and $r$ are in $SEC_2$, its diameter $diam(SEC_2)$ must 
be greater than $(t -1) \cdot (h+1)  \cdot u$.
Hence, by definition of pattern $\mathcal{B}$, $diam(SEC_1) > 10 \cdot (t -1) \cdot (h+1)  \cdot u$.
Thus, we must have at least two robots in $SEC_1$ distant from each other by more than 
$5 \cdot (t -1) \cdot (h+1)  \cdot u$.
When $t > 2$, this contradicts (A1). This proves that $x \not\in SEC_2$.
Hence, $\forall x \in S_2: x \in SEC_1$. 
\end{proof}

\begin{claim} [C2]
$\forall r \in S_1: r \in SEC_1$ \end{claim}

\begin{proof}
Assume towards contradiction that some $r \in S_1 \cap SEC_2$.
This implies, according to C1, that $\forall x \in S_2: x \in SEC_1$. 
But $|S_2|\geq n/2$ and 
there must be at least one robot of $S_1$ in $SEC_1$. Hence this also implies
$|SEC_1|>n/2+1$, contradiction because $SEC_1$ cannot contain the majority of robots in $\mathcal{B}$. 
This proves the claim.
\end{proof}

\begin{claim}[C3]
$\forall r \in S_2: r \in SEC_2$ 
\end{claim}

\begin{proof}
Assume that some $r \in S_2 \cap SEC_1$. Observe that according to $C2$
all robots $x$ that are in $S_1$ are on $SEC_1$ also.
Hence, \textbf{(B1)} there exists at least one robot in $S_2$ that is in $SEC_2$, otherwise
the latter will be empty. Let this robot be $y$.

According to (A1), $dist(x, r) \geq (t-1) \cdot (h+1) \cdot u$.
Hence, $radius(SEC_1) \geq  (t-1) \cdot (h+1) \cdot u$.
When $t >10$ we have $radius(SEC_1) \geq 10 \cdot u$.
But $radius(SEC_1) > 10 \cdot radius(SEC_2)$.
Hence, \textbf{(B2)} if $x_1$ is in $SEC_1$ and $x_2$ is $SEC_2$
$|x_1, x_2| \geq 9 \cdot u$. 

But we have $r \in S_2 \cap SEC_1$ 
and $y \in S_2 \cap SEC_2$.
Since both $y$ and $r$ are in $S_2$, we have according to (A3),
$dist(y, r) \leq 2 \cdot u$. This contradicts (B2) and proves the claim.

\end{proof}

Hence, the robots of $SEC_1$ are those in $S_1$ and robots of $SEC_2$ are those of $S_2$.
The SEC of $S_2$ is the SEC of $T^{-1}$.
Hence, the center of $SEC_1$ is the Weber point.
According to the definition of $\mathcal{B}$,
the robots at $SEC_1$ form a regular polygon.
Hence, there is some symmetry maintained between $T^{-1}$ and $T$.
It contradicts the fact that $sym(T)=1$.
Therefore, $T$ cannot be of type $\mathcal{B}$.

\paragraph*{Not $\mathcal{A}$} can be proved using the same techniques.
\end{proof}

\paragraph*{\textbf{Proof of Lemma \ref{hom:lem:grandAlgoChiantEtDeprimant}}}

Follows from Lemma \ref{hom:lem:symmetryDecrease}, 
Definition \ref{hom:def:mPrime} and Lemmas
\ref{hom:lem:ToftypeA}, 
\ref{hom:lem:ToftypeAorD} and
\ref{hom:lem:ToftypeB}.

\begin{theorem}
\label{hom:thm:principalFormation}
Consider a robot network of $n$ robots in configuration $P_i$. Let $sym(P_i)=m>1$ and $sym(P_{i+1})=m^\prime$.
If $h> \frac{n}{\textsc{sncd}(m, m^\prime)}$, there exists an algorithm that forms $P_{i+1}$.
\end{theorem}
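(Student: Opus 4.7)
The plan is to combine Lemma~\ref{hom:lem:grandAlgoChiantEtDeprimant} with the two formation primitives provided by Lemmas~\ref{hom:lem:A2anyPattern} and~\ref{hom:lem:B2somePatterns}. The basic idea is to split the work into a symmetry-breaking phase that drives the network from $P_i$ into a canonical intermediate configuration, followed by a reshaping phase that turns that intermediate into $P_{i+1}$. Throughout, the stretch $F(P_{i+1})$ carried by the intermediate acts as an unambiguous code that tells each robot which pattern is currently being formed.

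Concretely, I would first invoke Lemma~\ref{hom:lem:grandAlgoChiantEtDeprimant}, whose hypotheses ($sym(P_i) = m > 1$ and $h > n/\textsc{sncd}(m, m')$) coincide exactly with those of the theorem. This yields an intermediate configuration $Q$ of stretch $F(P_{i+1})$ that is either of type~$\mathcal{A}$ or of type~$\mathcal{B}(x)$ for some $x < \textsc{sncd}(m, m')$. I then case-split. If $Q$ is of type~$\mathcal{A}$, Lemma~\ref{hom:lem:A2anyPattern} directly produces an algorithm that forms $P_{i+1}$ from $Q$, and we are done. Otherwise $Q$ is of type~$\mathcal{B}(x)$, and I apply Lemma~\ref{hom:lem:B2somePatterns}, which requires $m' = k \cdot x$ for some integer $k > 1$.

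To verify this divisibility, I rely on the invariants of the symmetry-breaking algorithm of Figure~\ref{hom:alg:formation1}: robots only move radially through the common Weber point, so that both the Weber point and $m$-regularity are preserved throughout the execution (cf.\ Lemmas~\ref{hom:lem:weberReg} and~\ref{hom:lem:regEqSym}). Hence the $\mathcal{B}(x)$ configuration reached has its external regular polygon aligned with the $m$-regular angular structure inherited from $P_i$, which forces $x \mid m$. Combined with $x < \textsc{sncd}(m, m')$ --- the smallest divisor of $m$ that fails to divide $m'$ --- this yields $x \mid m'$; strictness $x < m'$ (so $k > 1$) is obtained by tracing back what conditions on $m$ and $m'$ allow the algorithm to stop at $\mathcal{B}(x)$ rather than reduce further to type~$\mathcal{A}$. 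Once $k$ is certified, Lemma~\ref{hom:lem:B2somePatterns} produces $P_{i+1}$ and the proof is complete.

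The main obstacle is the divisibility argument $x \mid m$ for the $\mathcal{B}(x)$ branch: the definition of type~$\mathcal{B}$ only constrains the two concentric SECs and the external polygon and does not a priori enforce any global $x$-symmetry of the entire configuration. To derive $x \mid m$ one must follow the construction in Lemma~\ref{hom:lem:grandAlgoChiantEtDeprimant} and exploit that every elected class targeted by the algorithm has cardinality equal to the current symmetricity (Lemma~\ref{hom:lem:sizeEquivalenceClasses}), so that any $x$-symmetric outer polygon created by the algorithm is necessarily compatible with the preserved $m$-regular structure.
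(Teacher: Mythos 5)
Your proof follows essentially the paper's own argument, which is literally a one-line citation of Lemmas \ref{hom:lem:grandAlgoChiantEtDeprimant}, \ref{hom:lem:A2anyPattern} and \ref{hom:lem:B2somePatterns} with the same two-phase decomposition (symmetry-breaking to a type-$\mathcal{A}$ or type-$\mathcal{B}(x)$ intermediate carrying stretch $F(P_{i+1})$, then reshaping). You are in fact more explicit than the paper on the one delicate point, namely why Lemma \ref{hom:lem:B2somePatterns} is applicable in the $\mathcal{B}(x)$ branch: your chain $x \mid m$, $x < \textsc{sncd}(m,m')$, hence $x \mid m'$ is sound, and the remaining strictness claim $x < m'$ (i.e.\ $k>1$) that you defer is likewise left unaddressed by the paper's own proof.
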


\begin{proof}
Follows from Lemmas \ref{hom:lem:A2anyPattern}, \ref{hom:lem:B2somePatterns} and \ref{hom:lem:grandAlgoChiantEtDeprimant}.
\end{proof}

The proof of Theorem \ref{hom:thm:formationPossibility} follows directly from Theorem \ref{hom:thm:principalFormation}.

\section{Special case: Distinct identifiers, Multiplicity points}
\label{hom:sec:threeRobots}

In this section we consider the case where $l=n$ and patterns may contain \emph{multiplicity points}.
In this context,  
we prove that
making the identifiers of robots 
invisible does not limit their computational power. That is,
the series of geometric patterns that can be formed in this case are the same that those we can form when robots 
are endowed with \emph{visible} distinct identifiers.
The following theorem states this result:

\begin{theorem}
With $n\geq 1$ robots having distinct invisible identifiers, 
we can for any finite series of distinct patterns $\langle P_1, P_2, \ldots, P_m \rangle$
iff for all $i$, $1 \leq i \leq m$, $size(P_i) \leq n$ where $size(P_i$) is the cardinality of $P_i$.
\end{theorem}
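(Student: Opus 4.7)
This is immediate: $n$ robots can occupy at most $n$ distinct positions, so $size(P_i)>n$ for some $i$ makes $P_i$ unformable, independently of any algorithmic consideration.

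\textbf{If direction.} The plan is to recycle the machinery of Section \ref{hom:sec:formationPossibility}, exploiting the two features of the present setting: $h=n$, so every robot carries a unique (though invisible) label; and strong multiplicity detection, so every robot sees exactly how many peers sit at each occupied point. I will establish, for each $i$, that from any configuration realizing $P_i$ the system can reach a configuration realizing $P_{i+1}$; iterating yields the whole series.

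The core step is to produce, from the current configuration $P_i$, an intermediate configuration of type $\mathcal{A}$ whose stretch equals $F(P_{i+1})$. First, I would lift the definitions of Section \ref{hom:sec:symmetricity} to configurations that may contain multiplicity points, by defining the view of a robot to record, at each angular/radial coordinate, the multiplicity observed there (this is available thanks to strong multiplicity detection). Symmetricity and regularity are then computed on the set of \emph{distinct} occupied positions weighted by their multiplicities. Second, I would run the symmetry-breaking routine of Figure \ref{hom:alg:formation1} until symmetricity drops to $1$: since $h=n$, the hypothesis of Lemma \ref{hom:lem:2robotsEquivDistinctLabels} is satisfied for every $m>1$ (because $h=n>n/m$), so at every $m$-symmetric configuration there exist two equivalent robots with distinct labels, and electing the innermost equivalence class forces strict symmetry reduction at each activation. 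When the configuration becomes asymmetric, a unique leader emerges and moves to the pivot position prescribed by Lemma \ref{hom:lem:P2A}, producing a configuration of type $\mathcal{A}$ with the desired stretch. Finally, Lemma \ref{hom:lem:A2anyPattern} lets us form any single target; when $size(P_{i+1})<n$ we simply assign several robots to the same target coordinate, and strong multiplicity detection guarantees that all robots agree on the resulting multiplicity pattern.

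The main obstacle will be the bookkeeping around multiplicity points: I must verify that (i) the extended view and the symmetry-breaking step behave correctly when multiple robots share a position (in particular, that moving an elected equivalence class never accidentally creates a spurious symmetry through coincidence of targets, which is handled by the $(id-1)$ offset on line \ref{A15} of Figure \ref{hom:alg:formation1} combined with uniqueness of labels); (ii) the encoding function $F$ can be chosen so that no intermediate configuration used during the transition $P_i \to P_{i+1}$ is isomorphic to any $P_j$ in the series, so that robots unambiguously recognize whether they are in a target or intermediate configuration; (iii) the final assignment of robots to positions of $P_{i+1}$ with multiplicity is deterministic and consistent across robots, which again follows from distinct labels plus strong multiplicity detection. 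None of these requires new geometry; they are routine extensions of the machinery already in place.
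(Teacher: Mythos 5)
Your \textbf{only if} direction is fine. The \textbf{if} direction, however, has a genuine gap: you treat the presence of multiplicity points as routine bookkeeping on top of the Section~\ref{hom:sec:formationPossibility} machinery, but that machinery cannot survive the degenerate patterns that make this theorem interesting. The transition scheme of Section~\ref{hom:sec:formationPossibility} encodes the identity of the next target pattern in a \emph{stretch}, i.e.\ a ratio of distances read off a configuration with $n$ distinct positions. Once the series contains patterns such as a single \textsc{point} (all robots collocated) or \textsc{two-points}, the configurations reached have zero or one pairwise distance and, by scale invariance, carry no stretch at all; an oblivious robot observing them has no geometric channel left through which to learn which pattern comes next. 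Your plan of ``assigning several robots to the same target coordinate'' gets the system \emph{into} such a configuration but provides no way to get \emph{out} of it coherently, and Lemma~\ref{hom:lem:P2A} / Lemma~\ref{hom:lem:A2anyPattern} simply do not apply there.

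The second, deeper obstruction you do not address is that identifiers are \emph{invisible}: a robot outside a multiplicity point cannot tell \emph{which} robots are inside it. In the paper's construction the configurations $(1,3;2)$ and $(2,3;1)$ --- which occur at different places in the series and must be followed by different patterns --- are geometrically identical from the point of view of $r_3$, the robot that has to move. The resolution is not a ``routine extension'': $r_3$ is forced to perform the \emph{same} move in both cases, producing two placements $S_{123}$ and $S_{213}$ of one scalene triangle that differ only in which robot occupies which vertex, and the disambiguation is deferred to $r_2$, which reads off its own angle to decide whether the next target is $P_{l+1}$ or $P_1$. This construction is exactly the content of Lemma~\ref{hom:lem:contradicts}, which together with Lemmas 5.1 and 5.4 of \cite{DFSY10} is what the paper's proof of the theorem actually consists of. Your symmetry-breaking/stretch argument neither produces such a construction nor explains why the indistinguishability it must tolerate is harmless, so as written the ``if'' direction does not go through.
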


\begin{proof}
Follows from Lemmas 5.1. and 5.4 in \cite{DFSY10} and Lemma \ref{hom:lem:contradicts} below.
\end{proof}

\begin{lemma}
\label{hom:lem:contradicts}
given any non-trivial series $<P_1,P_2,...,P_m>$, where $size(P_i)> 2$ and any $l$, $1 \leq l < m$, three robots can form the following series of pattern $S$:

\begin{center}
 {$\langle P_1, P_2, ..., P_l, 
\textsc{point} , P_{l+1}, P_{l+2},...,P_{m}, \textsc{two-points} \rangle^{∞}$}\\
\end{center}
\end{lemma}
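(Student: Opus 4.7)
The plan is to describe an explicit algorithm that drives three robots with distinct but invisible identifiers through the series $S$, using \textsc{point} (all three robots collocated) and \textsc{two-points} (one multiplicity-two location and one singleton) as synchronization markers. By inspecting only the current configuration, each robot can unambiguously determine which of three phases it is in: (i) a direct transition between consecutive non-marker patterns $P_j \to P_{j+1}$; (ii) a collapse $P_l \to \textsc{point}$ or $P_m \to \textsc{two-points}$; or (iii) an expansion $\textsc{point} \to P_{l+1}$ or $\textsc{two-points} \to P_1$. The reason to insert the two markers is that they create a canonical geometric state in which the distinct labels can finally be used to break the symmetry that would otherwise block formation of the next pattern in an anonymous model.

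The core work lies in the marker transitions. For $P_l \to \textsc{point}$, every robot heads toward the Weber point of $P_l$ along its own radius; by the standard invariance of the Weber point under straight movements toward or away from it (the same fact invoked in Lemma \ref{hom:lem:weberReg}), the Weber point stays fixed during the entire convergence, so the three robots eventually coincide there without ever producing a forbidden intermediate pattern. For $\textsc{point} \to P_{l+1}$, the three robots share a location and a common chirality; each robot uses its identifier to deterministically pick one of the three vertices of $P_{l+1}$ (after fixing an arbitrary canonical direction in the shared polar frame) and jumps there, so because the three labels are distinct the three destinations are distinct and $P_{l+1}$ is formed in a single cycle. The transitions $P_m \to \textsc{two-points}$ and $\textsc{two-points} \to P_1$ are handled analogously: the two robots whose labels are smallest merge onto the third for the collapse, and from \textsc{two-points} the isolated robot is an unambiguous leader while the two colocated robots can still be separated internally by label.

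For direct non-marker transitions $P_j \to P_{j+1}$ I would reuse the encoding machinery from Section \ref{hom:sec:formationPossibility}: the next target is identified through a stretch tag $F(P_{j+1})$, and the three robots pass through a transient type-$\mathcal{A}$ configuration carrying this tag. Since the three labels are distinct one can always elect a unique leader, and once in the type-$\mathcal{A}$ configuration Lemma \ref{hom:lem:A2anyPattern} gives the formation of $P_{j+1}$.

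The main obstacle is precisely to guarantee that no transient configuration along any of these transitions coincidentally looks like a target pattern of $S$ or like a marker, which would cause robots to mis-identify their phase. I would address this by choosing all auxiliary configurations with stretch values outside $\{F(P_1),\dots,F(P_m)\}$ and with scales large enough to be distinguishable from every $P_i$, and by arranging each elected move so that the scheduler cannot expose an unintended symmetric intermediate. Once unambiguous phase identification is in place, a finite case analysis over the six phase types confirms that the algorithm produces the required periodic series, proving the lemma.
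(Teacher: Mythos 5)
There is a genuine gap, and it sits exactly where the real difficulty of this lemma lies. Your plan forms $P_{l+1}$ from \textsc{point} ``in a single cycle'' by having each robot use its label to pick one of the three vertices in ``the shared polar frame''. But when all three robots are collocated there is no shared frame: robots are disoriented, they agree only on chirality, and the polar coordinate system of Section \ref{hom:subsec:polarCoordinate} degenerates (a robot at the center picks an \emph{arbitrary} unoccupied point as its $(1,0)$). So the three robots cannot agree on where the three vertices of $P_{l+1}$ are, and even if they could, the scheduler activates an arbitrary fair subset per cycle, so a ``simultaneous'' three-robot jump can be split. A split expansion from \textsc{point} (or a split collapse toward it) passes through a configuration with one multiplicity-$2$ point and one singleton --- which is geometrically indistinguishable from the marker \textsc{two-points}. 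Your proposed remedy for transient collisions (choosing stretch values and scales outside $\{F(P_1),\dots,F(P_m)\}$) cannot help here, because a single point and a pair of points carry no stretch or scale invariant to tune. The same ambiguity infects your \textsc{two-points} phase: the two colocated robots cannot tell by observation whether they are in the marker (and should head for $P_1$) or in a half-finished transition around \textsc{point} (and should head for $P_{l+1}$).

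The paper's proof is built precisely to defeat this ambiguity, and your proposal has no counterpart to its key device. The paper inserts intermediate patterns so that every step is performed by exactly one moving robot (so the scheduler cannot split anything), it controls \emph{which} labelled robots are collocated at each multiplicity point ($(1,2;3)$ before \textsc{point}, $(1,3;2)$ after it, $(2,3;1)$ as \textsc{two-points}), and --- since the robot that must leave the multiplicity point cannot distinguish $(1,3;2)$ from $(2,3;1)$ --- it has that robot perform the \emph{same} move in both cases, producing two placements $S_{123}$ and $S_{213}$ of one scalene triangle that differ only in which robot sits at which angle; a second robot then reads off its own angle to learn whether the next target is $P_{l+1}$ or $P_1$. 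It also uses asymmetric triangles $L_1$, $L_3$ to designate which robot initiates each collapse. Without some mechanism of this kind (deferred, identity-based disambiguation after the markers), your phase-identification ``by inspecting only the current configuration'' cannot be made consistent, so the proof as proposed does not go through.
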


\begin{proof}

%To this end, we exhibit an algorithm that allow a set of exactly three robots to form any finite 
%series of distinct patterns, a task thought to be impossible in \cite{}.

To form the series, we need to introduce some intermediate patterns denoted by $C(1,2;3)$, $C(1,3;2)$,
$C(2,3;1)$, $L_1$, $L_3$, $S_{123}$ and $S_{213}$
and whose precise definitions are given below. The obtained series that includes these patterns is the following:

\begin{center}
$\langle P_1, P_2, ..., P_l, L_1, (1,2;3), \textsc{point}, (1,3;2), S_{123}, P_{l+1}, P_{l+2},...,P_{m}, L_3, \textsc{two-points}=(2,3; 1), S_{213} \rangle ^{∞}$ \\
\end{center}

\paragraph*{}
The intermediate patterns are defined as follows:
\begin{itemize}
\item $C(i,j; l), i, j, l \in \{1, 2, 3\}$ represent a configuration in which robots $r_i$ and $r_j$ are collocated in the same point, and 
$r_l$ occupies a distinct position.
\\
\item $L_1$ (read "leader $r_1$") and $L_3$ are two scalene triangles distinct from one another and from any pattern in $\{P_1,P_2,...,P_m\}$. Moreover,
in $L_1$ (resp. $L_3$), the angle whose vertex is $r_1$ (resp. $r_3$) is the smallest one. 
\\
\item $S_{123}$ and $S_{213}$ are two configurations of the same pattern which consists in a scalene triangle with angles $\alpha_1 < \alpha_2 < \alpha_3$.
In $S_{123}$, $angle(r_i)=\alpha_i$ for any $i \in \{1, 2, 3\}$.
In contrast, in $S_{213}$ we have $angle(r_1)=\alpha_2$ and $angle(r_2)=\alpha_1$.
Note that $S_{123}$ and $S_{213}$ are distinct from the patterns $\{L_1, L_3, P_1,P_2,...,P_m\}$.
\end{itemize}

Now we explain how the transformations between different patterns are handled:

\begin{enumerate}
\item $P_i \rightsquigarrow P_{i+1}$: Note that each pattern $P_i$ consists of exactly three distinct points.
Hence the transformation between $P_i$ and $P_{i+1}$ is achieved \cite{DFSY10} by the movement of only one robot (say $r_3$).

\item $P_l \rightsquigarrow L_1$: The transformation is achieved by making $r_1$ moves towards some point such that
the obtained pattern is isomorphic to $L_1$ and $r_1$ is the vertex of its smallest angle.

\item $L_1 \rightsquigarrow (1,2; 3)$: 
Note that $r_1$ can be distinguished from other robots in $L_1$.
Hence, 
the transformation from $L_1$ to $(1,2; 3)$ is simply achieved by making $r_2$ join the location of $r_1$.

\item $(1,2; 3) \rightsquigarrow \textsc{point}$: Here, $r_3$ has to move towards the multiplicity point.

\item $\textsc{point} \rightsquigarrow  (1,3;2)$: The latter configuration is obtained by making $r_2$ move
outside the multiplicity point.

\item $(1,3;2) \rightsquigarrow S_{123}$ and $(2,3; 1) \rightsquigarrow S_{213}$: 
Both transformations are handled by $r_3$ which cannot distinguish between the two starting configurations
$(1,3;2)$ and $(2,3; 1)$. Hence, its actions in both cases are similar.
Let $p_a$ denotes the location of the multiplicity point in the starting configuration and let $p_b$ denotes
the other location.
%Recall that the starting location of $r_3$ is denoted by $p_a$ and the other location by $p_b$.
We make $r_3$ move to some point $p_c$ such that $\sphericalangle(p_c, p_a, p_b)=\alpha_1$,
$\sphericalangle(p_c, p_b, p_a)=\alpha_2$ and $\sphericalangle(p_a, p_c, p_b)=\alpha_3$.
It can be easily checked that the obtained configuration is $S_{123}$ in the first case, and 
$S_{213}$ in the latter case.

\item $S_{123} \rightsquigarrow P_{l+1}$ and $S_{213}  \rightsquigarrow P_{1}$:
Note that the starting configurations correspond to the same pattern.
Both of the transformations are handled by $r_2$.
When $r_2$ observes a configuration consisting of a triangle with angles $\alpha_1, \alpha_2$ and $\alpha_3$,
it knows that the current configuration is either $S_{123}$ or $S_{213}$. 
It can distinguish between them by looking to its own angle (equal to $\alpha_2$ in the first case and
to $\alpha_1$ in the second one). Hence, it can execute the appropriate transformation towards $P_{l+1}$ or $P_1$.

\item $P_{m} \rightsquigarrow L_3$: Similar to 2.

\item $L_3 \rightsquigarrow (2,3; 1)$: Similar to 3.

\end{enumerate}
\end{proof}

\section{Conclusion}
\label{hom:sec:conclusion}
In this paper, we considered the problem of formation of series of geometric patterns.
We studied the combined effect of obliviousness and anonymity on the computational power of mobile robots
with respect to this problem.
To this end, we introduced a new model, robots networks with homonyms that encompasses and generalizes
the two previously considered models in the literature in which either all robots have distinct identifiers or they are anonymous.
Our results suggest that this new model may be a useful tool to get a better insight on how 
anonymity interacts with others characteristics of the model to limit its power.

\section*{Acknowledgement}
The authors thank Dr. Shantanu Das for his generous help and helpful comments.

\bibliographystyle{plain}
\bibliography{homonyms}

\appendix
\end{document}